\algnewcommand\algorithmicinput{\textbf{Input:}}
\algnewcommand\algorithmicoutput{\textbf{Output:}}
\algnewcommand\INPUT{\item[\algorithmicinput]}
\algnewcommand\OUTPUT{\item[\algorithmicoutput]}
\theoremstyle{definition}
\newtheorem{thm}{Theorem}
\newtheorem{lem}{Lemma}
\newtheorem{prop}{Proposition}
\newtheorem{conj}{Conjecture}
\newtheorem{rmk}{Remark}
\newtheorem{example}{Example}
\def\ma{\bm{a}}
\def\mb{\bm{b}}
\def\md{\bm{d}}
\def\mr{\bm{r}}
\def\ms{\bm{s}}
\def\bmu{\bm{u}}
\def\mx{\bm{x}}
\def\my{\bm{y}}
\def\mz{\bm{z}}
\def\mA{\bm{A}}
\def\mB{\bm{B}}
\def\mC{\bm{C}}
\def\mM{\bm{M}}
\def\mU{\bm{U}}
\def\mV{\bm{V}}
\def\mX{\bm{X}}
\def\mx{\bm x}
\def\mA{\bm A}
\def\mB{\bm B}
\def\mC{\bm C}
\def\mX{\bm X}
\def\mM{\bm M}
\def\mSigma{\bm \Sigma}
\def\tZ{\mathcal{Z}}
\def\tA{\mathcal{A}}
\def\tB{\mathcal{B}}
\def\tC{\mathcal{C}}
\def\tE{\mathcal{E}}
\def\tG{\mathcal{G}}
\def\tH{\mathcal{H}}
\def\tO{\mathcal{O}}
\def\tP{\mathcal{P}}
\def\tT{\mathcal{T}}
\def\tX{\mathcal{X}}
\def\tY{\mathcal{Y}}
\def\tZ{\mathcal{Z}}
\def\bbR{\mathbb{R}}
\newcommand{\SnormSize}[2]{#1\lVert#2#1\rVert_{\text{sp}}}
\newcommand{\FnormSize}[2]{#1\lVert#2#1\rVert_F}
\DeclareMathOperator*{\argmin}{arg\,min}
\newcommand*{\KeepStyleUnderBrace}[1]{
  \mathop{%
    \mathchoice
    {\underbrace{\displaystyle#1}}%
    {\underbrace{\textstyle#1}}%
    {\underbrace{\scriptstyle#1}}%
    {\underbrace{\scriptscriptstyle#1}}%
  }\limits
}
\newcommand*{\vertbar}{\rule[-1ex]{0.5pt}{2.5ex}}
\newcommand*{\horzbar}{\rule[.5ex]{2.5ex}{0.5pt}}
\title{Statistical and computational rates in high rank tensor estimation}
\date{}
\author{%
Chanwoo Lee \\
University of Wisconsin -- Madison\\
\texttt{chanwoo.lee@wisc.edu} \\
\and
Miaoyan Wang \\
University of Wisconsin -- Madison\\
\texttt{miaoyan.wang@wisc.edu} \\
}
\begin{document}
\maketitle

\begin{abstract}%
Higher-order tensor datasets arise commonly in recommendation systems, neuroimaging, and social networks. Here we develop probable methods for estimating a possibly high rank signal tensor from noisy observations. We consider a
generative latent variable tensor model that incorporates both high rank and low rank models, including but not limited to, simple hypergraphon models, single index models, low-rank CP models, and low-rank Tucker models. 
Comprehensive results are developed on both the statistical and computational limits for the signal tensor estimation. We find that high-dimensional latent variable tensors are of log-rank; the fact explains the pervasiveness of low-rank tensors in applications. Furthermore, we propose a polynomial-time spectral algorithm that achieves the computationally optimal rate. We show that the statistical-computational gap emerges only for latent variable tensors of order~3 or higher. Numerical experiments and two real data applications are presented to demonstrate the practical merits of our methods. 
  \end{abstract}

\noindent%
{\it Keywords:} Tensor estimation, latent variable tensor model, statistical-computational efficiency

\section{Introduction}\label{sec:intro}
The analysis of higher-order tensors has recently drawn much attention in statistics, machine learning, and data science.
Higher-order tensor datasets are collected in applications including recommendation systems \citep{baltrunas2011incarmusic,bi2018multilayer}, social networks \citep{bickel2009nonparametric},
neuroimaging \citep{zhou2013tensor}, genomics \citep{hore2016tensor}, and longitudinal data analysis~\citep{hoff2015multilinear}. 
One example is a multi-tissue
expression data~\citep{wang2019three}. This dataset collects genome-wide expression profiles from different tissues in a number of individuals, which results in three-way tensor of gene $\times$ individual $\times$ tissue. Another example is hypergraph  networks, in which edges are allowed to connect more than two vertices. Considering multi-way interactions based on hypergraphs helps to understand complex networks in molecule system~\citep{michoel2012alignment} and computer vision~\citep{Agarwal2006HigherOL}. Tensors are naturally used to represent such hypergraph structures.
Along with many important applications, tensor methods have provided effectiveness in data analysis that classical vector- or matrix-based methods fail to offer~\citep{han2022exact,lee2021beyond}.

One of popular structures imposed on the tensor of interest is the low-rankness.  Common low rank models include CP low rank models~\citep{kolda2009tensor,sun2017provable}, Tucker low rank models~\citep{zhang2018tensor}, and block models~\citep{wang2019multiway}.
Despite the popularity of the low rank assumption, it is rather restricted to assume that the rank of the tensor remains fixed while the tensor dimension increases to infinity. In particular, low rank assumption is sensitive to entrywise transformation and inadequate for representing special structures of tensors~\citep{lee2021beyond}. In addition, low rank tensors are nowhere dense, and random matrices/tensors are almost surely of full rank~\citep{udell2019big}.  This motivates us to develop a more flexible model that can handle \emph{possibly high rank} tensors.

\subsection{Our contributions}

We develop \emph{a latent variable tensor model} that addresses both low and high rank tensors. Our model includes, but is not limited to, most existing tensor models such as CP models~\citep{kolda2009tensor}, Tucker models~\citep{zhang2018tensor}, generalized linear models~\citep{wang2018learning, hu2021generalized}, single index models~\citep{ganti2017learning}, and simple hypergraphon models~\citep{balasubramanian2021nonparametric}. Table~\ref{tab:comp} compares our work with previous results from both statistical and computational perspectives, which we summarize below.

First, we provide a rigorous justification for the empirical success of low-rank methods despite the prevalence of high rank tensors in real data applications. We prove that $d$-dimensional tensors generated from latent variable tensor models are of log-rank $\tO(\log d)$ as $d\to\infty$. This key spectral property provides the rational of low-rank approximation from a statistical perspective. 

\begin{table}
    \centering
    \caption{Comparison of our results with previous works on generalized linear models~\citep{hu2021generalized,wang2018learning,zhang2018tensor} and simple hypergraphon models~\citep{balasubramanian2021nonparametric}. We propose two methods, the least-square estimator (LSE) and the double-projection spectral estimator (DSE). Our results allow high rank tensor estimation and provide statistical and computational optimality on mean square error (MSE) rates. $^*$For generalized linear models, polynomial algorithms exist only under the strong signal-to-noise ratio. }
    \label{tab:comp}
    \resizebox{\textwidth}{!}{
    \begin{tabular}{c|@{\hskip4pt}c@{\hskip5pt}c@{\hskip4pt}c@{\hskip5pt}c@{\hskip4pt}c@{\hskip5pt}c@{\hskip4pt}c@{\hskip5pt}c}
    & Generalized linear models&  Simple hypergraphon models &  \textbf{Ours} (LSE)&\textbf{Ours} (DSE)\\
    \hline
     MSE rate for order-$m$ tensor & {$d^{-(m-1)}$} & $d^{-{2m/(m+2)}}$ & $d^{-(m-1)}$ & $d^{-m/2}$ \\
       (e.g., when $m=3$) &$d^{-2}$ & ($d^{-6/5}$) & $(d^{-2})$ & $(d^{-3/2})$ \\
        Allows high-rankness & $\times$  & $\surd$ & $\surd$ & $\surd$ \\
        Optimality analysis & $\surd$ & $\times$ & $\surd$ &$\surd$\\
     Polynomial algorithm& $\times/\surd^*$ &$\times$ & $\times$ & $\surd$\\
    \end{tabular}
    }
\end{table}

Second, we discover the gap between statistical and computational optimality in the higher-order tensor estimation. 
We show that the \emph{statistically} minimax optimal rate of the problem is $d^{-(m-1)}$. We prove that this rate, however, is non-achievable by any polynomial-time algorithms under hypergraphic planted clique (HPC) conjecture~\citep{luo2022tensor}. We then show that a slower rate $d^{-m/2}$ is \emph{computationally} optimal and achievable by polynomial-time algorithms. Based on these two bounds, we reveal the gap regime $[d^{-(m-1)},\ d^{-m/2}]$ where the estimation is statistically possible but computationally impossible. This phenomenon is distinctive from matrix problems with $m=2$ where no gap regime exists. Figure~\ref{fig:semantic} illustrates this statistical-computational gap in the higher-order tensor estimation. 


\begin{figure}
    \centerline{\includegraphics[width =\textwidth]{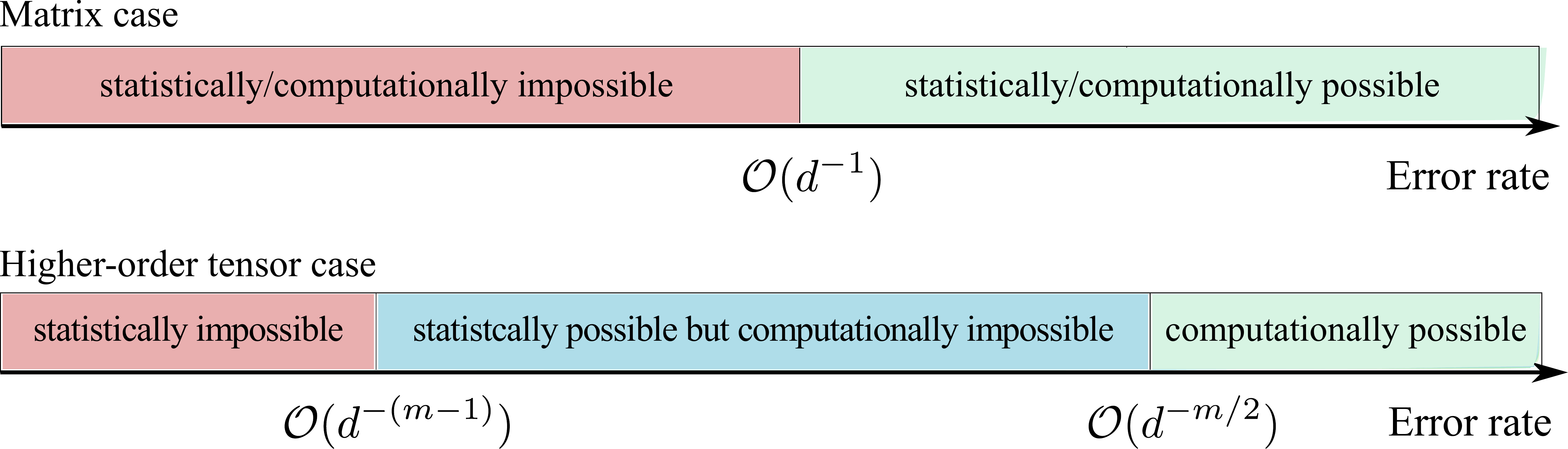}}
    \caption{Statistical and computational boundaries of the mean square error in the high-rank tensor estimation problem. Here, the signal is an order-$m$, $(d,\ldots,d)$-dimensional tensor. We find that the statistical-computational gap arises only for higher-order tensors with $m\geq 3$.}
    \label{fig:semantic}
\end{figure}

Third, we propose two estimation methods with accuracy guarantees: the least-square estimation (LSE) and double-projection spectral estimation (DSE).  The LES achieves the information-theoretical lower bound $d^{-(m-1)}$ demonstrating its statistical optimality. The computation of LSE, however, requires possibly non-polynomial complexity. We then propose the DSE using the idea of double-projection spectral method~\citep{zhang2018tensor} and the log-rank property of latent variable tensors. We show that the DSE achieves the optimal bound $d^{-m/2}$ within the subclass of polynomial-time estimators. 

Finally, we illustrate the efficacy of our methods through simulation and data examples. We apply our model to fMRI brain image and crop production datasets. Our estimator successfully recovers the true brain image from noisy observations. Clustering analysis on the crop production dataset reveals regional patterns of countries. Numerical analysis demonstrates the practical utility of the proposed approach.

\subsection{Related work}
This work is related to but also clearly distinct from a broad range of literature on tensor analysis. We review several lines of related research for comparison.

\paragraph*{Low-rank based tensor models} 
The past decades have seen a large body of work on structured tensor estimation under low rank models, including CANDECOMP/PARAFAC (CP) models~\citep{kolda2009tensor,sun2017provable}, Tucker models~\citep{zhang2018tensor}, and block models~\citep{wang2019multiway,han2022exact}. Single index models~\citep{ganti2017learning} and generalized linear models~\citep{wang2018learning, hu2021generalized, han2022optimal} have been proposed to overcome the low rank assumption on the signal tensor. These works, however, still assume the low-rankness on the underlying latent tensor up to link functions. Low-rank based models belong to parametric approaches because they model the data tensor using a finite number of parameters, i.e., a set of $r$ decomposed factors. In contrast, our method does not assume the low rank structure. We consider the high-rankness from latent variable tensor models and take a nonparametric approach by allowing an infinite number of parameters. The benefits of nonparametric methods over parametric ones have been reported in many statistical problems~\citep{pananjady2022isotonic,gao2015rate, bickel2009nonparametric}.

\paragraph*{Graphon and hypergraphon}   Research on graphon and hypergraphon is connected to our work. The graphon is a measurable function representing the limit of a sequence of exchangeable random graphs~\citep{udell2019big,xu2018rates,klopp2017oracle,gao2015rate,chan2014consistent}. Similar to graphons, the hypergraphon~\citep{zhao2015hypergraph,lovasz2012large} is a limiting function of $m$-uniform hypergraphs whose edges can join $m$ vertices with $m\geq 3$. The hypergraphon provides a powerful tool for modeling multi-way interactions, as the graphon does for pair-wise interactions. Unlike the matrices where bi-variate functions are enough to represent graphons~\citep{lovasz2006limits}, general hypegraphons are represented as $(2^m-2)$-multivariate functions~\citep{zhao2015hypergraph}. The simple hypergraphon~\citep{balasubramanian2021nonparametric} considers $m$-multivariate functions as a tradeoff between model flexibility and efficient estimation. The simple hypergraphon shares the common ground with our latent variable tensor model in the sense that both consider $m$-multivariate functions. However, our models use vector-valued latent variables while simple hypergraphon uses scalar-valued latent variables. The comparison of two approaches will be discussed in Section~\ref{sec:disc}.

\paragraph*{High rank tensor estimation} A few recent attempts have been made to analyze high rank signal tensors. For example, the work in \cite{lee2021beyond} proposes a new notion of tensor rank called sign-rank that can model high rank tensors with low complexity. While this work provides a polynomial-time algorithm, the proposed algorithm needs to solve sub-optimization problems as many as the order of the tensor dimension and the minimax optimality is unknown. 
The work in \cite{lee2021smooth} proposes another high rank model called permuted smooth tensor models. The permuted smooth tensor model is similar to simple hypergraphons~\citep{balasubramanian2021nonparametric}, in that both assume the latent scalar-valued latent variables. In contrast, our model allows general vector-valued latent variables. 
In algorithmic perspective, \cite{lee2021smooth} requires strict monotonic assumption on the latent function to achieve algorithm accuracy. 
Instead, our double-projection spectral algorithm requires no any further assumptions on the model and achieves computationally optimal rate. Detailed comparisons will be performed in Section~\ref{sec:sim}.

\paragraph*{The statistical-computational gap in tensor problems} Another related topic is on the statistical-computational gap in tensor problems. The existence of intrinsic statistical-computational gap have been found in tensor completion~\citep{barak2016noisy,wang2018learning},  tensor PCA~\citep{richard2014statistical,zhang2018tensor,han2022optimal}, and multiway clustering~\citep{luo2022tensor}. Due to its importance of understanding the fundamental difference between matrices and tensors, the computational limit analysis has gained enormous attention. 
For the matrices, the computational limits are derived from average-case reduction scheme based on the common hardness assumption of the planted clique problem in the Erd\"os-R\'enyi graph~\citep{hazan2011hard,ma2015computational,berthet2020statistical,wu2021statistical}. However, direct application of planted clique detection to tensors is complicated by the multi-way structure of the tensor~\citep{luo2020open}.
More recently, planted hypergraphic clique has been proposed to overcome this problem~\citep{zhang2018tensor,luo2022tensor}. 
Our work is also in line with understanding the statistical-computational gap, but for a more challenging high rank tensor estimation problem.

\subsection{Notation and organization}
We use $\mathbb{R}$, $\mathbb{N}$, $\mathbb{N}_{+}$ to denote the set of real numbers, integers, and positive integers, respectively. Let $[d]=\{1,\ldots,d\}$ denote the $d$-set with $d\in\mathbb{N}_{+}$. For two positive sequences $\{a_d\},\{b_d\}$,  we denote $a_d\lesssim b_d$ if $\lim_{d\to\infty} a_d/b_d\leq c$ for some constant $c>0$, and $a_d\asymp b_d$ if $c_1\leq \lim_{d\to \infty} a_d/b_d\leq c_2$ for some constants $c_1,c_2>0$. We use $\tO(\cdot)$ to denote big-O notation, and $\tilde\tO(\cdot)$ the variant hiding logarithm factors. We use bold lowercase letters (e.g., $\ma, \mb, \bmu$) for vectors, and bold uppercase letters (e.g., $\mA, \mB, \mC$) for matrices. For a matrix $\mA\in\mathbb{R}^{d_1\times d_2}$, $\textup{SVD}_r(\mA)$ denotes matrix comprised of the top $r$ left singular vectors of $\mA$, and $\SnormSize{}{\mA}$ denotes the spectral norm of $\mA$. 

Let $\Theta\in\bbR^{d_1\times \cdots \times d_m}$ be an order-$m$ $(d_1,\ldots,d_m)$-dimensional tensor. We use $\Theta(i_1,\ldots,i_m)$ to denote the tensor entry indexed by $(i_1,\ldots,i_m)\in[d_1]\times\cdots\times[d_m]$. We define the Frobenius norm and the $\infty$-norm of a tensor $\Theta$ as 
\[
\FnormSize{}{\Theta} = \sqrt{\sum_{(i_1,\ldots,i_m)\in[d_1]\times \cdots\times [d_m]}\Theta^2(i_1,\ldots,i_m)},\ \ \|\Theta\|_\infty = \max_{(i_1,\ldots,i_m)\in[d_1]\times \cdots\times [d_m]}|\Theta(i_1,\ldots,i_m)|.
\]
The multilinear multiplication of a tensor $\tC\in\mathbb{R}^{r_1\times \cdots\times r_m}$ by matrices $\mU^{(k)}\in\mathbb{R}^{d_k\times r_k}$, $k\in[m]$ is defined as 
\begin{align}
    (\tC\times_1\mU^{(1)}\times\cdots&\times_m\mU^{(m)})(i_1,\ldots,i_m) \\
   & \quad \quad  \quad \quad  = \sum_{j_1 = 1}^{r_1}\cdots\sum_{j_m = 1}^{r_m}\tC(j_1,\ldots,j_d)\mU^{(1)}(i_1,j_1)\cdots\mU^{(m)}(i_m,j_m),
\end{align}
which results in an order-$m$ $(d_1,\ldots,d_m)$-dimensional tensor. We use $\text{Unfold}_k(\cdot)$ to denote the unfolding operation that reshapes the tensor along mode $k$ into a matrix, for $k\in[m]$. We say a tensor $\Theta\in\mathbb{R}^{d_1\times \cdots\times d_m}$ has Tucker-rank $(r_1,\ldots,r_m)$ if $r_k = \textup{rank}\left(\textup{Unfold}_k(\Theta)\right)$.  
Throughout this paper, we reserve the term ``tensor rank'' for the Tucker-rank defined above, unless stated otherwise. An event $E$ is said to occur \emph{with high probability} if $\mathbb{P}(E)$ tends to 1 as the tensor dimension $\underline d\to\infty$. Finally, for an $m$-dimensional vector $\md = (d_1,\ldots,d_m)$, we use the following shorthand notation,
\begin{align}
    \bar d := \max_{k\in[m]}d_k, \quad \underline d := \min_{k\in[m]}d_k, \quad d_* := \prod_{k\in[m]}d_k.
\end{align}
For an $s$-dimensional vector $\mx=(x_1,\ldots,x_m)$ and an integer partition $\alpha=\alpha_1+\cdots+\alpha_s$ with $\alpha_i\in\mathbb{N}$, we use the shorthand notation 
\begin{equation}\label{eq:short}
\partial \mx^\alpha= \partial x_1^{\alpha_1}\cdots \partial x_s^{\alpha_s}.
\end{equation}

The rest of the paper is organized as follows. In, Section~\ref{sec:md}, we propose a latent variable tensor model. The model allows high rank tensors and incorporates existing tensor models in past literature as special cases. 
We find that nice high-dimensional latent variable tensors are of log-rank; the fact explains the pervasiveness of low-rank tensors in applications. 
In Section~\ref{sec:lse}, we establish the minimax rate of the problem and the corresponding least-square estimator.  Section~\ref{sec:comp} presents that no polynomial-time algorithms can achieve the statistical optimality, thereby revealing the gap between statistical and computationally optimal performance. Then, we provide a polynomial-time double-projection spectral algorithm that achieves computationally optimal rate. Synthetic and real data analyses are presented in Section~\ref{sec:sim}.  The proofs for the
main theorems are provided  in Section~\ref{sec:proof}.  We conclude the paper with a  discussion in Section~\ref{sec:disc}. All technical lemmas and additional results are deferred to Appendix.

\section{A latent variable model for higher-order tensors}\label{sec:md}
\subsection{Model formulation}\label{sec:lvm}
Suppose we observe an order-$m$ $(d_1,\ldots,d_m)$-dimensional data tensor $\tY$ generated from the model 
\begin{align}\label{eq:gmodel}
    \tY = \Theta + \tE,
\end{align}
where $\Theta\in\mathbb{R}^{d_1\times \cdots \times d_m}$ is an unknown signal tensor of interest, and $\tE$ is a noise tensor consisting of independently and identically distributed (i.i.d.) standard Gaussian random variables. 

\paragraph*{Latent variable tensor model} The signal tensor $\Theta$ is generated by the following model.
\begin{itemize}[itemsep = 5pt]
    \item Nonparametric function: there exists a latent (unknown) multivariate function $f\colon\mathbb{R}^{s_1}\times\cdots\times\mathbb{R}^{s_m}\rightarrow \mathbb{R}$ such that 
    \begin{align}\label{eq:LVM}
    \Theta(i_1,\ldots,i_m) = f(\ma^{(1)}_{i_1},\ldots,\ma^{(m)}_{i_m}), \text{ for all } (i_1,\ldots,i_m)\in[d_1]\times\cdots\times [d_m],
\end{align}
where $\ma_{i_k}^{(k)}\in\mathbb{R}^{s_k}$ denotes the latent $s_k$-dimensional vector for each $k\in[m]$ and $i_k\in[d_k]$.
\item Regularity assumptions:\vspace{0.2cm}
\begin{itemize}
    \item Complete latent space: the latent vectors are supported on closed balls, i.e., $\ma_{i_k}^{(k)}\in B(\mathbb{R}^{s_k},\|\cdot\|_\infty)$ for all $k\in[m]$ and $i_k\in[d_k]$.  Here $B(\mathbb{R}^{s_k},\|\cdot\|_\infty)$ denotes an unit ball in $\mathbb{R}^{s_k}$ with respect to the infinity norm.
   
    \item Analytic latent function: we assume that the latent function $f$ is $M$-analytic such that
    \begin{align}\label{eq:analytic}
    \sup_{\mx_k\in B(\mathbb{R}^{s_k},\|\cdot\|_\infty),k\in[m]}\left|{\partial^{|\boldsymbol{\alpha}|} f(\mx_1,\ldots,\mx_m)\over \partial (\mx_1,\ldots,\mx_m)^{\bm \alpha}}\right|\leq M^{|\boldsymbol{\alpha}|}\boldsymbol{\alpha}!,
    \end{align}
\end{itemize}
for some constant $M\in\mathbb{R}$ and all multi-indices $\boldsymbol{\alpha}=(\alpha_1,\ldots,\alpha_m)\in\mathbb{N}^{m}$. Here, we have adopted the shorthand notation $|\boldsymbol{\alpha}|:=\sum_{k\in[m]}\alpha_k$, $\boldsymbol{\alpha}!:=\prod_{k\in[m]}\alpha_k!$, and~\eqref{eq:short}.  
\vspace{0.2cm}
\end{itemize}

We use $\tP(\md,\ms,M)$ to denote the latent variable tensor model under above assumptions, where $\md= (d_1,\ldots,d_m)$ represents the tensor dimension in~\eqref{eq:gmodel}, $\ms = (s_1,\ldots,s_m)$ represents the latent dimension in~\eqref{eq:LVM}, and $M\in\mathbb{R}$ is the regularity constant  in~\eqref{eq:analytic}. We write $\Theta\in\tP(\md,\ms,M)$ for the signal tensor in~\eqref{eq:gmodel}.

%

\vspace{.2cm}
\noindent
We next show that our latent variable tensor model is a broad family that incorporates many existing tensor models as special cases. 
\begin{example}[CP low rank models]\label{ex:cp}
The CP low rank tensors are the one of the 
popular tensor models~\citep{hitchcock1927expression,kolda2009tensor}. We now show that the CP low rank tensors belong to our latent variable tensor model. Let $\Theta$ be a low rank tensor with CP $s$-rank such that
\begin{align}
    \Theta = \sum_{r = 1}^s \lambda_r\bmu_r^{(1)}\otimes\cdots\otimes \bmu_r^{(m)},
\end{align}
where $\lambda_r>0$ and $\bmu_r^{(k)}\in B(\mathbb{R}^{d_k},\|\cdot\|_\infty)$ for all $r\in[s]$ and $k\in[m].$ Here $\otimes$ denotes the outer product of vectors. Define a vector $\bm \lambda= (\lambda_1,\ldots,\lambda_s) \in\mathbb{R}^s$ and factor matrices $\mA^{(k)} \in\mathbb{R}^{d_s\times s}$ as 
\[
\mA^{(k)} = \begin{pmatrix}
\vertbar & \vertbar & &\vertbar\\
 \bmu_1^{(k)}&\bmu_2^{(k)}&\cdots&\bmu_s^{(k)}\\
\vertbar &\vertbar & &\vertbar\\
 \end{pmatrix}
 =
 \begin{pmatrix}
\horzbar & (\ma_1^{(k)})^T &\horzbar\\
\horzbar & (\ma_2^{(k)})^T &\horzbar\\
&\vdots& \\
\horzbar & (\ma_{d_k}^{(k)} )^T &\horzbar\\
     \end{pmatrix},\quad \text{for all }k\in[m].
\]
Then, the CP $s$-rank tensor $\Theta$ belongs to our latent variable tensor model with $(s,\ldots,s)$-latent dimension satisfying
\begin{align}
    \Theta(i_1,\ldots,i_m) = f(\ma^{(1)}_{i_1},\ldots,\ma^{(m)}_{i_m}) \quad \text{for all } (i_1,\ldots,i_m)\in[d_1]\times\cdots\times [d_m],
\end{align}
where $\ma^{(k)}_{i_k}\in\mathbb{R}^{s}$ is $i_k$-th row of the factor matrix $\mA^{(k)}$, and $f\colon\mathbb{R}^{s}\times\cdots\times\mathbb{R}^s\rightarrow\mathbb{R}$ is an analytic function defined as $f(\mx_1,\ldots,\mx_m) = \left\langle \bm \lambda,\mx_1\circ\cdots\circ\mx_m\right\rangle$. Here $\circ$ is the element-wise product also known as the Hadamard product.
\end{example}

\begin{example}[Tucker low rank models]\label{ex:tucker}
The Tucker low-rankness is popularly imposed in tensor analysis~\citep{tucker1966some,kolda2009tensor}. We show that the Tucker low rank tensors also belong to our latent variable tensor model. Let $\Theta$ be a low rank tensor with Tucker $(s_1,\ldots,s_m)$-rank such that
\begin{align}
    \Theta = \tC\times_1\mA^{(1)}\times_2\cdots\times_m\mA^{(m)},
\end{align}
where $\tC\in\mathbb{R}^{s_1\times \cdots\times s_m}$ is a core tensor, $\mA^{(k)}\in\mathbb{R}^{d_k\times s_k}$ are factor matrices for $k\in[m]$.
The tensor $\Theta$ can be expressed as the latent variable tensor model with $(s_1,\ldots,s_m)$-latent dimension,
\begin{align}
    \Theta(i_1,\ldots,i_m) &= \tC\times_1 \ma_{i_1}^{(1)}\times_2\cdots\times \ma_{i_m}^{(m)}\\&=f(\ma_{i_1}^{(1)},\ldots,\ma_{i_m}^{(m)})\text{ for all } (i_1,\ldots,i_m)\in[d_1]\times_2\cdots\times[d_m],
\end{align}
where $\ma_{i_k}^{(k)}\in\mathbb{R}^{s_k}$ is the $i_k$-th row of the factor matrix $\mA^{(k)}$, and $f\colon\mathbb{R}^{s_1}\times \cdots\times \mathbb{R}^{s_m}\rightarrow\mathbb{R}$ is an analytic function defined as $f(\mx_1,\ldots,\mx_m) = \tC\times_1\mx_1^T\times \cdots\times_m\mx_m^T$.
\end{example}

\begin{example}[Generalized linear models]\label{ex:glm}
Let $\tY$ be a binary tensor from the logistic model~\citep{wang2018learning} with mean $\Theta(i_1,\ldots,i_m) = f(\tZ(i_1,\ldots,i_m))$, where $\tZ$ is a latent low rank tensor (with respect to CP or Tucker rank), and $f$ is the logistic link function defined as $f(x) = 1/(1+e^{-x})$. Notice that the signal tensor $\Theta$ itself is often \emph{high rank}.  Since the composition of analytic functions is analytic, we conclude the logistic model is included in our latent variable tensor model based on Examples~\ref{ex:cp} and \ref{ex:tucker}. Same conclusion holds for
general exponential-family models with an (known) analytic link function~\citep{hong2020generalized}.
\end{example}

\begin{example}[Single index models with an analytic function]
The single index model is a flexible semiparametric model
proposed in economics~\citep{robinson1988root} and high-dimensional statistics~\citep{balabdaoui2019least,ganti2017learning}. The single index models assume the existence of an \emph{unknown} link function $f\colon\mathbb{R}\rightarrow\mathbb{R}$ such that the signal tensor $\Theta(i_1,\ldots,i_m) = f(\tZ(i_1,\ldots,i_m))$, where $\tZ$ is a latent low rank tensor. Suppose the link function is $M$-analytic. We see that the single index model belongs to the latent variable tensor model by the same reasons in Example~\ref{ex:glm}.
\end{example}

\begin{example}[Simple hypergraphon models]\label{ex:graphon}
The graphon is a measurable function representing a limit of a sequence of exchangeable random
graphs (matrices)~\citep{klopp2017oracle,gao2015rate,chan2014consistent}. Similarly, the hypergraphon~\citep{zhao2015hypergraph,lovasz2012large} is a limiting function of $m$-uniform hypergraphs whose edges can join $m$-vertices with $m\geq3$. The simple hypergraphon model~\citep{balasubramanian2021nonparametric} is a special case of hypergraphon that considers $m$-multivariate latent functions. Specifically, the simple hypergraphon model assumes the signal tensor $\Theta$ has the form 
\begin{align}
  \Theta(i_1,\ldots,i_m) = f(a_{i_1},\ldots,a_{i_m}) \ \text{ for all } i_1<\ldots < i_m,
\end{align}
where $a_i\in [0,1]$ is a scalar-valued latent variable, and $f\colon[0,1]^m\rightarrow [0,1]$ is a function called the simple hypergraphon. Assuming $f$ is analytic, we see that the simple hypergraphon model belongs to our latent variable tensor model with $(1,\ldots,1)$-latent dimension.
\end{example}

\subsection{The first main result: latent variable tensors are of log-rank}\label{sec:first} 
In this section, we study the approximation theory of high-rank latent variable tensors. A notable result is that we can approximate a latent variable tensor, up to a small \emph{entrywise} error, by a log-rank tensor. 


\begin{thm}[Dimension-free uniform approximation]\label{thm:rank}
For every tensor $\Theta\in\tP{(\md,\ms,M)}$ and every integer $r\in\mathbb{R}_{+}$, there exists an $(r,\ldots,r)$-rank tensor $\tX$ such that 
\begin{align}\label{eq:fboundorg}
    \|\Theta - \tX\|_\infty\leq e^{-c(\ms, M) r^{1/\bar s}},
\end{align}
where $c(\ms, M)>0$ is a constant not depending on the tensor dimension. 
\end{thm}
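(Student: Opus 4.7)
The plan is to exhibit a polynomial approximation of the latent function $f$ with uniformly and geometrically small error, and then read off the Tucker rank of the resulting tensor. The whole problem then reduces to quantitative analytic approximation on $[-1,1]^s$, where $s := s_1 + \cdots + s_m$.

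\textbf{Step 1 (polynomial approximation of $f$).} I would construct a polynomial $p(\mx_1,\ldots,\mx_m)$ of coordinatewise degree at most $N$ in each scalar entry of $\mx_k$ such that
\begin{equation*}
\sup_{\mx_k \in B(\mathbb{R}^{s_k}, \|\cdot\|_\infty),\, k \in [m]} \bigl|\, f(\mx_1,\ldots,\mx_m) - p(\mx_1,\ldots,\mx_m)\,\bigr| \;\leq\; C(\ms, M)\, \rho^{-N},
\end{equation*}
for some $\rho = \rho(M) > 1$. The analyticity bound \eqref{eq:analytic} guarantees that $f$ extends to a holomorphic function on a complex neighborhood of $[-1,1]^s$ of width $1/M$ about each real point, so $f$ is analytic on a tensor-product Bernstein polyellipse strictly containing $[-1,1]^s$. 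Standard univariate Chebyshev/Bernstein approximation applied one coordinate at a time (with the remaining variables frozen and the resulting coefficient functions absorbed into a uniform sup-norm bound) then produces the required $p$. Alternatively, in the benign regime $M<1$ the truncated multivariate Taylor series of $f$ about the origin already suffices, since its Taylor coefficients $c_{\bm\alpha}=\partial^{\bm\alpha}f(\bm 0)/\bm\alpha!$ satisfy $|c_{\bm\alpha}|\leq M^{|\bm\alpha|}$ and the tail $\sum_{\|\bm\beta\|_\infty > N} M^{|\bm\beta|}$ is geometrically small in $N$.

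\textbf{Step 2 (reading off Tucker rank).} Expanding $p$ in monomial form $p(\mx_1,\ldots,\mx_m)=\sum_{\bm\beta_1,\ldots,\bm\beta_m} c_{\bm\beta_1,\ldots,\bm\beta_m}\prod_{k=1}^m \mx_k^{\bm\beta_k}$, where $\bm\beta_k \in \mathbb{N}^{s_k}$ with $\|\bm\beta_k\|_\infty \leq N$, I would set $\tX(i_1,\ldots,i_m) := p(\ma^{(1)}_{i_1},\ldots,\ma^{(m)}_{i_m})$. This immediately factors as $\tX = \tC \times_1 \mU^{(1)} \times_2 \cdots \times_m \mU^{(m)}$, with core tensor entries $c_{\bm\beta_1,\ldots,\bm\beta_m}$ and factor matrices $\mU^{(k)}(i_k, \bm\beta_k) := (\ma^{(k)}_{i_k})^{\bm\beta_k}$. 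The mode-$k$ rank of $\tX$ is hence at most $(N+1)^{s_k} \leq (N+1)^{\bar s}$. Choosing $N := \lceil r^{1/\bar s} \rceil - 1$ forces every mode rank to be at most $r$ and yields
\begin{equation*}
\|\Theta - \tX\|_\infty \;\leq\; C(\ms,M)\, \rho^{-N} \;\leq\; \exp\!\bigl(-c(\ms, M)\, r^{1/\bar s}\bigr),
\end{equation*}
after absorbing constants into the exponent.

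\textbf{Main obstacle.} The substantive step is the uniform polynomial approximation in Step~1 for general $M$: the raw hypothesis $|\partial^{\bm\alpha} f| \leq M^{|\bm\alpha|} \bm\alpha!$ only guarantees a Taylor disc of radius $1/M$ about each centre, which may be smaller than the unit ball, so a naive global Taylor expansion about $\bm 0$ need not converge on all of $[-1,1]^s$ when $M \geq 1$. Producing a single global polynomial approximant with exponential-in-$N$ error therefore requires either an appeal to multivariate Bernstein/Chebyshev approximation theory on compact subsets of domains of holomorphy, or a careful iterated one-variable-at-a-time construction with tight control of how constants compound across the $s$ coordinates; this is where $c(\ms,M)$ degrades as $M$ grows. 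Once Step~1 is in hand, the passage to a Tucker-low-rank tensor in Step~2 is essentially bookkeeping.
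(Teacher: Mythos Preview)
Your proposal is correct and reaches the same conclusion, but the route is genuinely different from the paper's.

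The paper does \emph{not} appeal to Bernstein/Chebyshev approximation or to holomorphic extension. Instead it uses a purely elementary piecewise Taylor construction (Lemma~\ref{lem:lvmrank}): partition $[-1,1)^{s_1}\times\cdots\times[-1,1)^{s_m}$ into $(2N)^{\sum_k s_k}$ equal sub-cubes with $N=\lceil M^2 e^{2m\bar s}\rceil$ held \emph{fixed}, and on each sub-cube replace $f$ by its Taylor polynomial of degree $\ell-1$ about the cube centre. The analytic bound~\eqref{eq:analytic} controls the remainder by $(M/N)^\ell\binom{\ell+\sum_k s_k}{\ell}$, and since $N>M$ this decays geometrically in $\ell$. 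The Tucker-rank bound is obtained by observing that the mode-$k$ unfolding of the piecewise polynomial tensor lies in the span of at most $(2N)^{\sum_k s_k}\binom{\ell+s_k-2}{s_k-1}\lesssim C(\ms,M)\,\ell^{s_k}$ vectors (the indicator of the sub-cube factors as a rank-one matrix, and the local polynomial contributes at most $\binom{\ell+s_k-2}{s_k-1}$ monomials in the mode-$k$ latent variable). Finally they set $\ell\asymp r^{1/\bar s}$.

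Your global-polynomial approach trades the partition bookkeeping for an appeal to classical approximation theory: the hypothesis gives a uniform $1/M$-polydisc of holomorphy about every real point, hence a Bernstein polyellipse containing $[-1,1]^s$, and multivariate Chebyshev truncation at coordinatewise degree $N$ gives error $C\rho^{-N}$. Your rank computation is then immediate from the monomial (or Chebyshev) factorization, with mode-$k$ rank $\leq (N+1)^{s_k}$. This is arguably cleaner, and your identification of the obstacle (global Taylor about the origin fails for $M\geq 1$) is exactly the issue the paper's partition trick is designed to sidestep. The paper's approach is more self-contained and avoids invoking Bernstein's theorem, at the cost of a longer rank calculation involving the indicator structure; your approach is shorter once the approximation-theoretic input is granted, and makes the Tucker factorization more transparent.
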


Theorem~\ref{thm:rank} shows that the entrywise error decays exponentially fast with respect to the approximation rank. The entrywise error is useful in modern data applications. For example, one often wants to approximate an extremely large tensor while perturbing each entry as little as possible; this is exactly captured by the infinity norm in~\eqref{eq:fboundorg}.

We obtain the following proposition based on the Theorem~\ref{thm:rank}.
\begin{prop}[Latent variable tensors are of log-rank]\label{cor:rank} Fix an arbitrary $\varepsilon>0$, and consider $\Theta\in\tP{(\md,\ms,M)}$. As $\bar d\to \infty$, we have
\[
\min\{r\in\mathbb{N}_{+}\colon \text{Tucker-rank}(\tX)\leq (r,\ldots,r), \ \FnormSize{}{\Theta-\tX}\leq \varepsilon\}\lesssim \log^{\bar s}(\bar d),
\]
where the asymptotic notion $\lesssim$ absorbs the constant not depending on the tensor dimension. 
 
\end{prop}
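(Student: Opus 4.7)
The plan is to reduce the Frobenius-norm approximation problem in Proposition 1 to the entrywise approximation guarantee already furnished by Theorem 1. Concretely, I will bound the Frobenius norm by the $\infty$-norm times the square root of the total number of entries, then solve for the smallest $r$ that makes the resulting bound at most $\varepsilon$.

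First, fix $\varepsilon>0$ and an arbitrary $r\in\mathbb{N}_+$. By Theorem 1, there exists a tensor $\tX$ with Tucker-rank at most $(r,\ldots,r)$ such that
\[
\|\Theta-\tX\|_\infty \leq e^{-c(\ms,M)\, r^{1/\bar s}}.
\]
Since $\Theta-\tX$ has $d_*=\prod_{k\in[m]}d_k$ entries, the elementary inequality $\FnormSize{}{\cdot}\leq \sqrt{d_*}\,\|\cdot\|_\infty$ gives
\[
\FnormSize{}{\Theta-\tX} \leq \sqrt{d_*}\, e^{-c(\ms,M)\, r^{1/\bar s}}.
\]

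Next, I will require the right-hand side to be at most $\varepsilon$. Taking logarithms, this is equivalent to $c(\ms,M)\,r^{1/\bar s}\geq \tfrac{1}{2}\log d_* + \log(1/\varepsilon)$, and using $d_*\leq \bar d^{\,m}$ it suffices to take
\[
r \;\geq\; \left(\frac{\tfrac{m}{2}\log \bar d + \log(1/\varepsilon)}{c(\ms,M)}\right)^{\bar s}.
\]
Therefore the minimum admissible $r$ is bounded above by $C(\ms,M,m,\varepsilon)\,\log^{\bar s}(\bar d)$ for all sufficiently large $\bar d$, which is exactly the claim $\lesssim \log^{\bar s}(\bar d)$ once the constants not depending on the tensor dimension are absorbed into $\lesssim$.

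Honestly there is no real obstacle here: the proof is a one-line corollary of Theorem 1 together with the volume factor $\sqrt{d_*}$, and the only arithmetic to check is that $\log d_*$ is logarithmic in $\bar d$ (hence so is $r^{1/\bar s}$, and $r$ is polylogarithmic with exponent $\bar s$). The substantive work is entirely in establishing the dimension-free exponential rate of Theorem 1; Proposition 1 simply translates that entrywise rate into a Frobenius-norm statement on fixed finite tensors.
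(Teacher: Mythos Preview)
Your proof is correct and follows essentially the same route as the paper: bound $\FnormSize{}{\Theta-\tX}$ by $\sqrt{d_*}\,\|\Theta-\tX\|_\infty$, invoke Theorem~\ref{thm:rank}, use $d_*\leq \bar d^{\,m}$, and solve for the smallest admissible $r$. The paper's version works with squared norms but is otherwise identical, and your algebra for inverting the inequality is in fact cleaner than the paper's.
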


Proposition~\ref{cor:rank} shows that log-rank tensors are enough to approximate latent variable tensors up to small errors. This result explains the empirical success of low-rank based approaches despite the prevalence of high rank tensors. Although tensors generated in real world are often high rank, many of them can be well approximated by log-rank tensors. 

\begin{rmk}[Comparison with previous works on matrices]
Theorem~\ref{thm:rank} is developed by a delicate combination of nonparametric tools, tensor algebra, and functional analysis. The analytic function in \eqref{eq:analytic} is one extension of univariate analytic functions to multivariate cases. Past works~\citep{udell2019big,xu2018rates} introduce an alternative definition of multivariate analytic functions: a multivariate function $f$ is called analytic, if there exists a constant $M>0$ satisfying
\begin{align}\label{eq:analytic2}
    \sup_{\mx_k\in B(\mathbb{R}^{s_1},\|\cdot\|_\infty), k\in[m]}\left|{\partial^{|\boldsymbol{\alpha|}} f(\mx_1,\mx_2, \ldots,\mx_m)\over \partial^{\boldsymbol{\alpha}} \mx_1}\right|\leq M^{|\boldsymbol{\alpha}|}\boldsymbol{\alpha}!,
\end{align}
for all multi-indices $\boldsymbol{\alpha}\in\mathbb{N}^m$. 
As a key difference, the function~\eqref{eq:analytic2} is partially analytic in the first mode only, whereas our function~\eqref{eq:analytic} is jointly analytic on all $m$ modes. The past works~\citep{udell2019big,xu2018rates} use~\eqref{eq:analytic2} to show that latent variable matrices (i.e. the special case $m=2$ in our model) are well approximated by log-rank matrices. 

However, direct applications of past techniques~\citep{udell2019big,xu2018rates} to tensors turn out to be impossible. In fact, we show that the analytic function in \eqref{eq:analytic2} guarantees only the log-rank on the first mode, while the ranks on other modes can be full. For matrices, the log-rank on the first mode is enough, because the column rank and the row rank are the same. For higher-order tensors, however, the log-rank on a certain mode does not guarantee the log-rank of other modes. We propose the joint analytic functions~\eqref{eq:analytic} to ensure the log-rank on all modes as stated in Proposition~\ref{cor:rank}. This highlights the challenges of higher-order tensors compared to matrices.
\end{rmk}

\section{Statistically optimal rate via least-square estimation}\label{sec:lse}

In this section, we first propose the least-square estimator (LSE) and develop its convergence rate. We then present the matching lower bound of the estimation problem. We are interested in the asymptotic regime as $\md \to \infty$ while treating model configuration $(\ms,M)$ fixed. 
 
We propose a rank-constrained least-square estimator for $\Theta$,
\begin{align}\label{eq:lse}
    \hat\Theta^{\textup{LSE}} = \argmin_{\text{Tucker-rank}(\Theta)\leq (r_1,\ldots,r_k)}\FnormSize{}{\tY-\Theta}.
\end{align}
The least-square estimator $\hat\Theta^{\textup{LSE}}$ depends on the rank $(r_1,\ldots,r_k)$ in the constraint. We choose the log-rank as Proposition~\ref{thm:rank} suggested. Our Theorem~\ref{thm:lseupper} establishes the convergence rate of the least-square estimator.

\begin{thm}[Statistical upper bound]\label{thm:lseupper}
 Consider a data tensor $\tY$ generated from~\eqref{eq:gmodel} with the signal tensor $\Theta\in\tP(\md,\ms,M)$. 
Let $\hat\Theta^{\textup{LSE}}$ denote the least-square estimator in \eqref{eq:lse} with $r=r_1=\cdots=r_m=c(\ms, M)^{-\bar s}\log^{\bar s}\left(d_* / \bar d\right)$ for all $k\in[m]$, where $c(\ms, M)$ is the constant in Theorem~\ref{thm:rank}. Then, with high probability, we have
\begin{align}\label{eq:lseupper}
    \FnormSize{}{\hat\Theta^{\textup{LSE}}-\Theta}^2&\lesssim \KeepStyleUnderBrace{r^m+r \bar d}_{\text{low-rank estimation error}} +\KeepStyleUnderBrace{d_*\exp(-c(\ms,M)r^{1/s}) }_{\text{high-rank approximation error}} = \tilde O(\bar d).
\end{align}
\end{thm}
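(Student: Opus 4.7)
The plan is to run a standard oracle basic-inequality argument, where the oracle is the low-rank surrogate guaranteed by Theorem~\ref{thm:rank}. The risk decomposition will split naturally into an approximation error handled by Theorem~\ref{thm:rank} and a stochastic term handled by a covering / Gaussian-width estimate on low Tucker-rank tensors; the choice $r = c(\ms,M)^{-\bar s}\log^{\bar s}(d_*/\bar d)$ then balances the two.

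Concretely, first I would invoke Theorem~\ref{thm:rank} to obtain a tensor $\tX$ with Tucker rank at most $(r,\ldots,r)$ such that $\|\Theta-\tX\|_\infty \leq \exp(-c(\ms,M)r^{1/\bar s})$, and hence $\FnormSize{}{\Theta-\tX}^2 \leq d_*\exp(-2c(\ms,M)r^{1/\bar s})$. Since $\tX$ is feasible for~\eqref{eq:lse}, the defining inequality $\FnormSize{}{\tY-\hat\Theta^{\textup{LSE}}}^2 \leq \FnormSize{}{\tY-\tX}^2$ expanded under $\tY=\Theta+\tE$ produces, after cancellation of $\FnormSize{}{\tE}^2$, the basic inequality
\[
\FnormSize{}{\hat\Theta^{\textup{LSE}}-\Theta}^2 \;\leq\; \FnormSize{}{\Theta-\tX}^2 + 2\,\langle \tE,\, \hat\Theta^{\textup{LSE}}-\tX\rangle.
\]

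Second, I would control the stochastic cross term. Write $\Delta := \hat\Theta^{\textup{LSE}}-\tX$; by Tucker-rank subadditivity of sums, $\Delta$ has Tucker rank at most $(2r,\ldots,2r)$. Letting $\mathcal{T}_{2r}$ denote the set of such tensors,
\[
\langle \tE,\Delta\rangle \;\leq\; \FnormSize{}{\Delta}\cdot \sup_{\tZ\in\mathcal{T}_{2r},\,\FnormSize{}{\tZ}\leq 1} \langle \tE,\tZ\rangle.
\]
Parameterizing $\tZ\in\mathcal{T}_{2r}$ by an orthogonal Tucker decomposition with a unit-Frobenius core of size $(2r)^m$ and $m$ factor matrices of size $d_k\times 2r$, and taking product $\varepsilon$-nets in Frobenius norm on the core and in spectral norm on each factor, yields an $\varepsilon$-net of the unit ball in $\mathcal{T}_{2r}$ with log-cardinality of order $r^m + r\sum_k d_k \lesssim r^m + r\bar d$. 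Gaussian concentration for $\langle \tE,\tZ\rangle\sim\mathcal{N}(0,1)$ combined with a union bound over the net then gives, with high probability,
\[
\sup_{\tZ\in\mathcal{T}_{2r},\,\FnormSize{}{\tZ}\leq 1}\langle \tE,\tZ\rangle \;\lesssim\; \sqrt{r^m + r\bar d}.
\]

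Finally, combining via $2ab \leq \tfrac12 a^2 + 2b^2$ together with $\FnormSize{}{\Delta}^2 \leq 2\FnormSize{}{\hat\Theta^{\textup{LSE}}-\Theta}^2 + 2\FnormSize{}{\Theta-\tX}^2$ absorbs the $\FnormSize{}{\Delta}$ factor into the left-hand side and leaves
\[
\FnormSize{}{\hat\Theta^{\textup{LSE}}-\Theta}^2 \;\lesssim\; d_*\exp(-2c(\ms,M)r^{1/\bar s}) \;+\; r^m + r\bar d.
\]
Substituting $r=c(\ms,M)^{-\bar s}\log^{\bar s}(d_*/\bar d)$ sends the approximation term to $\bar d^2/d_*\leq\bar d$ (using $d_*\geq\bar d$) and makes $r^m + r\bar d$ equal to $\bar d$ up to polylogarithmic factors, yielding the claimed $\tilde O(\bar d)$ bound. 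The main obstacle is the covering step: in contrast to matrices, the unit Frobenius ball of $\mathcal{T}_{2r}$ carries three distinct sources of complexity (one core of size $(2r)^m$ plus $m$ Stiefel-type factor components of size $d_k\times 2r$), and obtaining an \emph{additive} $r^m + r\bar d$ rather than a multiplicative bound in the exponent of the net size requires peeling off one mode at a time and carefully absorbing spectral-norm perturbations of the factors into the core. This is precisely the step that produces the characteristic $r^m$ term absent in the $m=2$ case.
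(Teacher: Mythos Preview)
Your proposal is correct and follows essentially the same strategy as the paper: invoke Theorem~\ref{thm:rank} to obtain a low-rank oracle $\tX$, use optimality of the LSE to derive a basic inequality, bound the stochastic term by the Gaussian width of rank-$2r$ tensors (the paper cites this as Lemma~\ref{lem:0}, which is precisely the covering/concentration result you outline), and substitute the stated $r$. The only cosmetic difference is that the paper centers its basic inequality at $\tX_r$ (bounding $\FnormSize{}{\hat\Theta^{\textup{LSE}}-\tX_r}$ first, then triangle inequality) whereas you center at $\Theta$ and land on $\FnormSize{}{\hat\Theta^{\textup{LSE}}-\Theta}^2$ directly; both routes are equivalent.
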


We discuss the implication of the result~\eqref{eq:lseupper}. The LSE rate consists of two sources of error: the estimation error and the approximation error. The former is due to the noise in the observation model, and the latter is due to high-rank structure of the signal. These two terms reflect the bias-variance tradeoff in nonparametric problems. Setting a high fitted rank reduces the approximation error, but increases the estimation error due to more parameters to estimate. Our log-rank approximation theory achieves the trade-off between the two. 

For classical low-rank models, our Theorem~\ref{thm:lseupper} improves earlier work by relaxing the choice of fitted rank. Earlier work requires the fitted rank $(r_1,\ldots,r_k)$ to be precisely the same as the true signal rank~\citep{zhang2018tensor}; whereas our result allows slightly over-fitted rank in the estimation. We find that this over-fitting incurs only a negligible log factor, compared to the rate $\tO(\bar d)$ for low-rank tensors~\citep{zhang2018tensor}.

We now show that the rate \eqref{eq:lseupper} cannot be improved, up to a log factor, by any estimation methods. The following theorem establishes the lower bound of the estimation problem. 

\begin{thm}[Statistical lower bound]\label{thm:lselower}
Consider the same set-up as in Theorem~\ref{thm:lseupper}. Let $\hat\Theta$ be any estimator based on the observed data tensor $\tY$. Then, there exists an absolute constant $p_0\in(0,1)$ such that
\begin{align}\label{eq:lselower}
    \inf_{\hat\Theta}\sup_{\Theta\in\tP(\md,\ms,M)}\mathbb{P}\left(\FnormSize{}{\hat\Theta-\Theta}^2\gtrsim \bar d \right)\geq p_0.
\end{align}
\end{thm}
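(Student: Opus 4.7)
The plan is to establish the lower bound by Fano's lemma applied to a packing of a rank-one subfamily of $\tP(\md,\ms,M)$. Without loss of generality assume $d_1 = \bar d$. For each $\bm u \in \{0,1\}^{\bar d}$ and a scalar $\delta>0$ to be chosen later, define
\begin{align}
\Theta_{\bm u}(i_1,\ldots,i_m) = \delta\, u_{i_1}, \qquad (i_1,\ldots,i_m)\in[d_1]\times\cdots\times[d_m],
\end{align}
and realize it as a latent variable tensor with $s_1=\cdots=s_m=1$, scalar latents $a^{(1)}_{i_1} = \delta u_{i_1}$ and $a^{(k)}_{i_k}=1$ for $k\ge 2$, and joint function $f(x_1,\ldots,x_m) = x_1 x_2\cdots x_m$. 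Every mixed partial derivative of this multilinear $f$ is bounded by $1$ on the unit ball, so the $M$-analyticity condition in~\eqref{eq:analytic} holds trivially for $M\ge 1$ (a harmless rescaling of $\delta$ covers smaller $M$). Hence $\Theta_{\bm u}\in\tP(\md,\ms,M)$ as soon as $\delta\le 1$.

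Next I would invoke the Varshamov--Gilbert lemma to extract a subset $\mathcal U\subset\{0,1\}^{\bar d}$ of cardinality $N\ge\exp(c\bar d)$ with pairwise Hamming distance at least $c'\bar d$, for absolute constants $c,c'>0$. A direct calculation yields
\begin{align}
\FnormSize{}{\Theta_{\bm u}-\Theta_{\bm u'}}^2 = \delta^2\,\frac{d_*}{\bar d}\,\mathrm{Ham}(\bm u,\bm u').
\end{align}
Choosing $\delta^2 = c''\,\bar d/d_*$ for a sufficiently small absolute constant $c''>0$ enforces $\FnormSize{}{\Theta_{\bm u}-\Theta_{\bm u'}}^2 \asymp \bar d$ uniformly over pairs $\bm u\ne\bm u'\in\mathcal U$, and $\delta\le 1$ is automatic whenever $d_*\ge\bar d$ (which always holds). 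For the Gaussian observation model~\eqref{eq:gmodel}, the KL divergence between any two candidate laws equals $\FnormSize{}{\Theta_{\bm u}-\Theta_{\bm u'}}^2/2$, hence is at most $C_1\bar d$ across all pairs.

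Finally I would apply Fano's inequality: with pairwise KL at most $C_1\bar d$ and $\log N\ge c\bar d$, the ratio $(C_1\bar d+\log 2)/\log N$ can be made bounded away from $1$ by taking $c''$ small, producing a universal constant $p_0\in(0,1)$ such that
\[
\inf_{\hat\Theta}\sup_{\bm u\in\mathcal U}\mathbb{P}\!\left(\FnormSize{}{\hat\Theta-\Theta_{\bm u}}^2\gtrsim \bar d\right)\ge p_0,
\]
which implies~\eqref{eq:lselower} since $\{\Theta_{\bm u}\}_{\bm u\in\mathcal U}\subset\tP(\md,\ms,M)$. The hard part is not Fano itself but the first step: exhibiting a packing of size $e^{\Omega(\bar d)}$ inside the restrictive analytic class with a fixed constant $M$. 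The multilinear choice $f(x_1,\ldots,x_m)=x_1\cdots x_m$ resolves this cleanly because its derivatives are bounded by $1$ for free; the only delicate bookkeeping is balancing the scaling $\delta\asymp\sqrt{\bar d/d_*}$ simultaneously against the unit-ball constraint on the latent variables and against the KL-to-entropy inequality demanded by Fano.
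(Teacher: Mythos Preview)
Your argument is correct and follows essentially the same route as the paper: both proofs reduce to a low-rank (Tucker) subfamily of $\tP(\md,\ms,M)$ and apply Fano's method, with the paper invoking a black-box packing lemma for the Tucker class (Lemma~8 of \citet{wang2018learning}) and Tsybakov's Theorem~2.5, whereas you build the packing by hand from rank-one tensors via Varshamov--Gilbert. One cosmetic point worth stating explicitly: for general $\ms$, your scalar-latent construction embeds into $\tP(\md,\ms,M)$ by padding the latent vectors with zeros and letting $f$ depend only on the first coordinate of each argument, which preserves the $M$-analyticity bound.
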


Theorem~\ref{thm:lselower} guarantees that, there exists no estimator with a convergence rate faster than $\tO(\bar d)$. The statistical lower bound is obtained via information theoretical analysis. This lower bound applies to all estimators obtained from the observed tensor $\tY$, including but not limited to the least-square estimator (LSE) and the double-projection spectral estimator (DSE) introduced in later sections. The match between the lower bound in \eqref{eq:lselower} and the upper bound in \eqref{eq:lseupper} implies the statistical optimality of LSE in \eqref{eq:lse}.




\section{Computationally optimal rate via double-projection spectral estimation}\label{sec:comp}
Unlike the matrices, solving the rank constrained least-square problem in \eqref{eq:lse} is NP-hard and numerically ill-posed in general~\citep{hillar2013most}.
One question is whether the statistically optimal rate can be achieved with an estimator computable in polynomial time. In this section, we answer this question. We first construct the computational lower bound based on the hypergraphic planted clique problem. Then, we propose a polynomial-time double-projection spectral estimation that achieves the computationally optimal rate. The result guarantees no polynomial-time algorithms achieve the statistically optimal rate, thereby revealing a non-avoidable statistical-computational gap for tensors of order $m\geq 3$. 

\subsection{Hypergraphic planted clique detection}
The hypergraphic planted clique detection plays an important role in constructing the computational lower bound of our problem. In this section, we briefly introduce the hypergraphic planted clique (HPC) model and HPC conjecture.

Consider an $m$-uniform hypergraph  $G = (V,E)$, where $V$ is a set of vertices, and $E$ is a set of hyperedges of the hypergraph. We define an adjacency tensor $\tA = \tA(G)$  corresponding to the hypergraph $G$ as
\begin{align}
    \tA(i_1,\ldots,i_m) = \begin{cases}1,&\text{ if } (i_1,\ldots,i_m)\in E,\\ 0, & \text{ otherwise. }\end{cases}
\end{align}
The Erd\"os-R\'enyi random hypergraph with $d$ vertices, denoted as $\tG_m(d,1/2)$, is a random $m$-uniform hypergraph of which the probability of each hyperedge connection is 1/2.
The hypergraphic planted clique (HPC) with clique size $\tau>0$, denoted as $\tG_m(d,1/2,\tau)$, is generated from an Erd\"os-R\'enyi random hypergraph in the following way. First we generate an Erd\"os-R\'enyi random hypergraph from $G_m(d,1/2)$. Then, we independently pick $\tau$ vertices with equal probability from $d$ vertices. Finally, we obtain an HPC by including only the hyperedges whose vertices all belong to the picked $\tau$ vertices.

The HPC detection refers to the following hypothesis testing problem,
\begin{align}\label{eq:HPC}
    H_0\colon G\sim \tG_m(d,1/2)\quad\text{v.s.}\quad H_1\colon G\sim \tG_m(d,1/2,\tau).
\end{align}
Given an adjacency tensor $\tA(G)$ and a test $\phi$ for \eqref{eq:HPC}, the performance of the test is evaluated by the sum of Type-I and II errors, i.e.,
\begin{align}\label{eq:err}
\text{Err}(\phi)=\mathbb{P}_{H_0}(\phi(\tA) = 1)+ \mathbb{P}_{H_1}(\phi(\tA) = 0).
\end{align}
The HPC detection problem is a generalization of the well studied planted clique (PC) detection for the grapes (i.e., $m = 2$ in our setting).  For graphs with $m=2$, the famous planted clique conjecture has been extensively used as a computational hardness assumption in statistical problems~\citep{berthet2020statistical,wang2016statistical,cai2020statistical}.
 Similar to the hardness conjecture for the PC detection, early work~\citep{luo2022tensor} presents the hardness conjecture for the HPC detection. In the next section, we construct the computational lower bound of the latent variable tensor estimation based on the following hardness conjecture.
\begin{conj}[HPC detection conjecture \citep{luo2022tensor}]\label{conj:1} Consider the HPC problem in \eqref{eq:HPC} and suppose $m\geq 2$ is a fixed integer. If
\begin{align}
    \limsup_{d\rightarrow\infty} \log\tau/\log\sqrt{d}\leq 1-\varepsilon,\text{ for any } \varepsilon >0.
\end{align}
Then, for any polynomial-time test sequence $\{\phi\}_d\colon\tA\mapsto \{0,1\}$, we have 
\begin{align}
   \liminf_{d\rightarrow\infty} \text{Err}(\phi)\geq \frac{1}{2}.
\end{align}
\end{conj}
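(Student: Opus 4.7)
The statement is a hardness conjecture, so any proof must either give a conditional reduction from a more established hardness hypothesis or establish an unconditional lower bound restricted to some concrete computational model. The plan is to outline both lines of attack and flag where each stalls.

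The first route is a reduction from the classical planted clique (PC) conjecture for graphs ($m=2$), which is the $m=2$ specialization of the statement and is widely believed. I would construct a polynomial-time randomized map $R$ sending a graph $G$ on $d$ vertices to an $m$-uniform hypergraph $H=R(G)$ on $d'$ vertices, polynomially related to $d$, such that $R$ transports the null graph $\tG_2(d,1/2)$ to a distribution within vanishing total variation distance of $\tG_m(d',1/2)$, and the planted graph $\tG_2(d,1/2,\tau_G)$ to something close to $\tG_m(d',1/2,\tau_H)$, with $\tau_H$ scaling polynomially in $\tau_G$. A natural template is to assign to each $m$-tuple $S\subseteq V(H)$ an indicator built from a fresh independent $\mathrm{Bernoulli}(1/2)$ and from edge statistics of $G$ restricted to $S$, calibrated so that the marginal is exactly $1/2$ under the null and is biased upward when $S$ sits inside a planted clique of $G$. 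Given such an $R$, any polynomial-time tester for HPC with $\tau_H = o(d'^{(1-\varepsilon')/2})$ pulls back to a polynomial-time tester for PC with $\tau_G = o(d^{(1-\varepsilon)/2})$, contradicting the PC conjecture and yielding the statement.

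The second route is unconditional but restricted: prove a lower bound against a concrete algorithmic class that is believed to capture the best polynomial-time tests. The cleanest vehicle is the low-degree polynomial method: compute the chi-squared divergence between the planted mixture and the null restricted to polynomials of degree at most $D=\mathrm{polylog}(d)$, expand it as a sum over $m$-uniform hypergraph embeddings, and show that the divergence stays bounded precisely in the regime $\tau = o(d^{1/2-\varepsilon})$. An analogous calculation in the sum-of-squares framework via pseudocalibration would give the stronger statement that degree-$D$ SoS cannot refute the existence of a planted clique in the same regime. Both calculations reduce to counting overlap patterns of $m$-uniform subhypergraphs inside the planted clique and have been carried out for small fixed $m$ in prior work on tensor problems \citep{luo2022tensor}.

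The main obstacle for the reduction route is preserving hyperedge independence: two $m$-tuples sharing $m-1$ vertices differ in only one vertex, so any reduction that reads $G$ deterministically produces highly correlated hyperedge indicators under the null, destroying the product structure of $\tG_m(d',1/2)$. Injecting enough independent randomness to restore product structure typically dilutes the planted signal by a polynomial factor, so sharpening the reduction to preserve the $\sqrt{d}$ exponent in the clique size is the technical crux. For the restricted-class route, the obstacle is the familiar one: a degree-$D$ or SoS lower bound is evidence, not proof, of polynomial-time hardness, and closing this gap would require a fundamental breakthrough in average-case complexity. Consequently the statement is, and is likely to remain, an assumption used as a hardness primitive rather than a theorem, which is why we invoke it in Conjecture~\ref{conj:1} instead of proving it.
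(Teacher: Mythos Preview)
Your assessment is correct: the paper does not prove Conjecture~\ref{conj:1} at all---it is stated purely as a hardness assumption imported from \citet{luo2022tensor} and then invoked as a hypothesis in Theorem~\ref{thm:polylower}. Your discussion of why a proof is out of reach (correlation obstructions in any PC-to-HPC reduction, and the evidence-versus-proof gap for low-degree or SoS lower bounds) is sound and goes well beyond what the paper itself offers, but since the paper never claims a proof there is nothing to compare against.
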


\subsection{Computational lower bound under HPC detection conjecture}
Now we are ready to present the computational lower bound for the high-rank tensor estimation problem based on Conjecture~\ref{conj:1}. 

\begin{thm}[Computational lower bound]\label{thm:polylower} Assume Conjecture~\ref{conj:1} holds. Consider the same set-up as in Theorem~\ref{thm:lseupper}. Then for every polynomial-time computable estimator $\hat\Theta$, we have, as $\underline d\to\infty$,
\begin{align}
    \frac{1}{d_*^{1/2-\epsilon}}\sup_{\Theta\in\tP(\md,\ms,M)}\mathbb{E}\FnormSize{}{\hat\Theta-\Theta}^2\rightarrow \infty,\quad \text{for any  }\epsilon >0.
\end{align}

\end{thm}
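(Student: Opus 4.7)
The plan is to prove this computational lower bound by an average-case reduction from the hypergraphic planted clique (HPC) detection problem stated in Conjecture~\ref{conj:1}. The key idea is that a polynomial-time estimator with squared Frobenius error much smaller than $d_*^{1/2-\epsilon}$ would yield a polynomial-time test distinguishing $\tG_m(d,1/2)$ from $\tG_m(d,1/2,\tau)$ for some $\tau \ll \sqrt{d}$, violating the hardness conjecture.

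First, I would build the reduction. Given an adjacency tensor $\tA$ of an $m$-uniform hypergraph, I Gaussianize each entry by setting $\tY = 2(\tA - 1/2) + \tE'$, where $\tE'$ consists of independent Gaussians whose variance is calibrated so that, up to total-variation distance $o(1)$ over the entire tensor, the law of $\tY$ coincides with the Gaussian observation model~\eqref{eq:gmodel} having signal $\Theta = 0$ under the null $\tG_m(d,1/2)$ and $\Theta = \mathbf{1}_S^{\otimes m}$ under the alternative $\tG_m(d,1/2,\tau)$, where $S\subset[d]$ is the planted clique of size $\tau$. Such Gaussianization schemes follow a now-standard pattern in the average-case reduction literature~\citep{wu2021statistical,luo2022tensor}. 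I then verify that the planted signal $\mathbf{1}_S^{\otimes m}$ is a valid element of $\tP(\md,\ms,M)$ by taking scalar latent variables $a_i^{(k)} = \mathbb{1}\{i\in S\}\in[0,1]$ with the jointly analytic latent function $f(x_1,\ldots,x_m)=\prod_k x_k$, whose derivatives satisfy~\eqref{eq:analytic} for a dimension-free constant $M$.

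Next, I convert the hypothesized estimator into a test. Applying $\hat\Theta$ to $\tY$ and thresholding $\FnormSize{}{\hat\Theta}^2$ at $\tau^m/2$, I control both Type-I and Type-II errors using the reverse triangle inequality $\bigl|\FnormSize{}{\hat\Theta} - \FnormSize{}{\Theta}\bigr|\leq \FnormSize{}{\hat\Theta - \Theta}$ together with Markov's inequality applied to the assumed bound $\mathbb{E}\FnormSize{}{\hat\Theta-\Theta}^2 = o(d_*^{1/2-\epsilon})$. Since $\FnormSize{}{\Theta}^2$ equals $0$ under $H_0$ and $\tau^m$ under $H_1$, the resulting test succeeds whenever $\tau^m \gg d_*^{1/2-\epsilon}$. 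Choosing $\tau = d^{1/2-\delta}$ with $\delta = \epsilon/(2m)$ simultaneously satisfies $\tau \ll \sqrt{d}$ (placing it in the conjecturally hard regime of Conjecture~\ref{conj:1}) and $\tau^m = d^{m/2-m\delta} \gg d_*^{1/2-\epsilon}$. This yields a polynomial-time HPC test with vanishing error, contradicting the conjecture.

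The principal technical obstacle is the Gaussianization step: Bernoulli$(1/2)$ and standard Gaussian noise are distinct distributions, so the total-variation coupling over all $d_*$ entries must be executed carefully, typically by slightly inflating the auxiliary Gaussian variance and controlling a product of per-entry Hellinger distances so that the cumulative approximation error remains $o(1)$. A secondary subtlety is that Conjecture~\ref{conj:1} is naturally stated for $m$-uniform hypergraphs on $d$ vertices, corresponding to equal tensor dimensions $d_1=\cdots=d_m=d$; the general case $\md=(d_1,\ldots,d_m)$ in the theorem is handled by embedding the hard instance into the largest $\underline d \times\cdots\times \underline d$ sub-block of the tensor and noting that the resulting lower bound, expressed via $d_*$, survives the embedding with at most an absolute constant loss.
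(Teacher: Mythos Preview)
Your overall architecture---assume a too-accurate polynomial-time estimator, turn it into a threshold test on $\FnormSize{}{\hat\Theta}$, and derive a contradiction with HPC hardness---is exactly the paper's strategy, and your Markov/triangle-inequality bookkeeping and the verification that $\mathbf{1}_S^{\otimes m}\in\tP(\md,\ms,M)$ via $f(\mx_1,\ldots,\mx_m)=\prod_k x_k$ are correct.

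The difference, and the place your proposal has a real gap, is the Gaussianization. The paper does \emph{not} Gaussianize directly. It invokes Proposition~\ref{prop:CHC} (Theorem~15 of \cite{luo2022tensor}), which already asserts computational hardness of detecting the constant-block signal $\Theta_{\text{CHC}}(\md,\mr,\lambda)$ \emph{in the Gaussian model}~\eqref{eq:gmodel}, in the regime $r=d^{1/2}$, $\lambda=c_2 d^{-\epsilon/2}$. The paper then only needs the trivial inclusion $\Theta_{\text{CHC}}\subset\tP(\md,\ms,M)$ and the threshold test $\phi(\tY)=\mathds{1}\{\FnormSize{}{\hat\Theta}\geq u d^{m/4-\epsilon/2}\}$. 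All of the HPC-to-Gaussian work is hidden inside the cited proposition.

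Your proposed construction $\tY=2(\tA-1/2)+\tE'$ with ``slightly inflated'' Gaussian variance cannot deliver the $o(1)$ total-variation coupling you claim. A Rademacher-plus-Gaussian entry is at per-entry Hellinger distance $\Theta(\sigma'^{-2})$ from the nearest single Gaussian, so controlling the product over $d^m$ entries forces $\sigma'\gg d^{m/4}$; after renormalizing to unit noise the signal $\mathbf{1}_S^{\otimes m}$ shrinks to magnitude $o(d^{-m/4})$, and the separation $\FnormSize{}{\Theta}^2=\tau^m\gg d_*^{1/2-\epsilon}$ that your test relies on is destroyed. The reduction underlying Proposition~\ref{prop:CHC} avoids this blow-up via averaging/rejection-kernel machinery that is genuinely more delicate than variance inflation. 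The clean fix is to do what the paper does: cite Proposition~\ref{prop:CHC} as a black box and work with its output parameters $(r,\lambda)=(d^{1/2},d^{-\epsilon/2})$ rather than $(\tau,1)$. Both choices give $\FnormSize{}{\Theta}^2\asymp d^{m/2-\epsilon}$, but only the former is what the known reduction actually produces.
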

Theorem~\ref{thm:polylower} presents the fundamental limit of computationally feasible estimators. The result implies that, there exists no polynomial-time estimator with a convergence rate faster than $\tO(d_*^{1/2})$. 
The next section shows that this computational lower bound is attainable, up to logarithm factors, by our double-projection spectral estimator.

\subsection{Computational upper bound via double-projection spectral algorithm}
We now propose an efficient double-projection spectral algorithm that achieves the computational lower bound in Theorem~\ref{thm:polylower}. As a consequence, our double-projection spectral estimator (DSE) is computationally optimal.

The main idea of our DSE is to combine the log-rank approximation and matrix spectral decomposition upon projection. Given an approximation rank $(r_1,\ldots,r_k)$, we apply singular value decomposition (SVD) \emph{twice} on the unfolded tensor. The first SVD is on the unfolded observed tensor $\tY$, i.e.,
\begin{align}
    \tilde\mU_k = \textup{SVD}_{r_k}\left(\textup{Unfold}_k(\tY)\right), \quad k\in[m],
\end{align}
where $\textup{SVD}_{r_k}(\cdot)$ returns the top-$r_k$ left singular vectors of the matrix. The second SVD is on the unfolded tensor after projection onto pre-estimated subspaces of the other $(m-1)$ modes, i.e.,
\begin{align}\label{eq:second}
    \hat\mU_k =\textup{SVD}_{r_k}\left(\textup{Unfold}_k\left(\tY\times_1\tilde \mU_1^T\times_2 \cdots \times_{k-1}\tilde \mU_{k-1}^T\times_{k+1}\tilde \mU_{k+1}^T\times \cdots \times_m \tilde \mU_m^T\right)\right),\quad k\in[m].
\end{align}
Finally, we estimate the signal tensor $\Theta$ by
\begin{align}\label{eq:polyest}
    \hat\Theta^{\text{DSE}} = \tY\times_1(\hat \mU_1\hat \mU_1^T)\times \cdots\times_m (\hat \mU_m\hat \mU_m^T).
\end{align}

The full procedure of DSE is summarized in Algorithm~\ref{alg:polyalg}. The estimation involves matrix operations only and thus is polynomial-time computable. Our algorithm differs from tensor higher-order SVD (HOSVD)~\citep{de2000multilinear} and higher-order orthogonal iteration (HOOI)~\citep{zhang2018tensor}. The HOSVD projects the observed tensor $\tY$ only \emph{once} by using $\{\tilde \mU_k\}_{k=1}^m$ in \eqref{eq:polyest}. The HOOI~\citep{zhang2018tensor} performs the projection repeatedly in $\tO(\log p)$ iterations.  We find that our double-projection algorithm improves HOSVD and reduces the number of iteration from HOOI, by alleviating the noise effects upon unfolding~\citep{han2022exact} with two projections.

\begin{algorithm}[h]
  \caption{Double-projection spectral algorithm}\label{alg:polyalg}
 \begin{algorithmic}[1] 
\INPUT A noisy data tensor $\tY\in\mathbb{R}^{d_1\times \cdots\times d_m}$ and the approximation rank $(r_1,\ldots,r_k)$. 
\State Compute  $\tilde\mU_k = \textup{SVD}_{r_k}\left(\textup{Unfold}_k(\tY)\right)$ for $k = 1,\ldots, m.$
\ForAll{$k = 1,\ldots, m$}
\State Compute the singular space estimate
\begin{align}
    \hat\mU_k =\textup{SVD}_{r_k}\left(\textup{Unfold}_k\left(\tY\times_1\tilde \mU_1^T\times \cdots \times_{k-1}\tilde \mU_{k-1}^T\times_{k+1}\tilde \mU_{k+1}^T\times \cdots \times_m \tilde \mU_m^T\right)\right).
\end{align}
\EndFor
\State Estimate the signal tensor by $ \hat\Theta^{\text{DSE}} = \tY\times_1(\hat \mU_1\hat \mU_1^T)\times \cdots\times_m (\hat \mU_m\hat \mU_m^T).$
\OUTPUT Estimated signal tensor $\hat\Theta^{\text{DSE}}$.
\end{algorithmic}
\end{algorithm}

We now establish the statistical accuracy of our DSE. 

\begin{thm}[Computational upper bound via DSE algorithm]\label{thm:polyupper} Consider the same set-up as in Theorem~\ref{thm:lseupper}. 
Let $\hat\Theta$ be the estimator obtained from Algorithm~\ref{alg:polyalg} with the input tensor $\tY$ and the approximation rank $r_k = c(\ms, M)^{-\bar s}\log^{\bar s}\left (d_*/ \bar d\right)$ for all $k\in[m]$.
Then, with high probability, we have
\begin{align}\label{eq:polyupper}
    \FnormSize{}{\hat\Theta^{\text{DSE}}-\Theta}^2 &\lesssim d_*^{1/2}\log ^{\bar s}\left(d_*\over \bar d\right) + \bar d\log^{2\bar s} \left(d_*\over \bar d\right), \\&= \tilde \tO(d_*^{1/2}\vee\bar d).
\end{align}
\end{thm}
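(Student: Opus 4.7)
The plan is to approximate $\Theta$ by a low-Tucker-rank surrogate via Theorem~\ref{thm:rank} and then analyze Algorithm~\ref{alg:polyalg} as a two-stage spectral perturbation of the ideal recovery. With $r=c(\ms,M)^{-\bar s}\log^{\bar s}(d_*/\bar d)$, Theorem~\ref{thm:rank} produces a Tucker-$(r,\ldots,r)$ tensor $\tX$ with $\|\Theta-\tX\|_\infty\leq \bar d/d_*$, hence $\FnormSize{}{\Theta-\tX}^2\leq \bar d^2/d_*\leq \bar d$, which is absorbed in the target rate. Setting $\tR=\Theta-\tX$, the observation model becomes $\tY=\tX+\tE+\tR$: a low-rank signal plus i.i.d.\ Gaussian noise plus a negligible deterministic residual.

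\textbf{Error decomposition.} Let $P_k=\hat\mU_k\hat\mU_k^T$ denote the estimated mode-$k$ projector. The triangle inequality gives
\begin{align}
\FnormSize{}{\hat\Theta^{\text{DSE}}-\Theta}^2\lesssim \FnormSize{}{\tX\times_1 P_1\cdots\times_m P_m-\tX}^2+\FnormSize{}{\tE\times_1 P_1\cdots\times_m P_m}^2+\FnormSize{}{\tR}^2.
\end{align}
An $\varepsilon$-net argument over the Grassmannian of rank-$r$ projectors yields $\sup_{P_1,\ldots,P_m}\FnormSize{}{\tE\times_k P_k}^2\lesssim r^m+r\sum_k d_k=\tilde\tO(\bar d)$ with high probability, since $r=\tO(\log^{\bar s}(d_*/\bar d))$. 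The residual contributes only $\tO(\bar d^2/d_*)$, negligible. Hence the task reduces to bounding the signal bias, which in turn is dominated by mode-wise sin-theta distances, $\sum_{k\in[m]}\|\sin\Theta(\hat\mU_k,\mU_k^*)\|_F^2\,\sigma_1^2(\tX_{(k)})$, where $\mU_k^*$ is the true top-$r$ left singular space of the unfolding $\tX_{(k)}$.

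\textbf{Sin-theta refinement via the double projection.} The first SVD applied to the raw unfolding admits only the coarse Wedin bound $\|\sin\Theta(\tilde\mU_k,\mU_k^*)\|_{op}\lesssim \SnormSize{}{\tE_{(k)}}/\sigma_r(\tX_{(k)})\asymp \sqrt{d_*/\bar d}/\sigma_r(\tX_{(k)})$. The crucial refinement is the second SVD: projecting $\tY$ onto the pre-estimated $(m-1)$-mode subspaces $\tilde\mU_j^T$ of combined dimension $r^{m-1}$ shrinks the effective noise---by rotational invariance of the Gaussian and control of the first-stage subspace error---to a $d_k\times r^{m-1}$ matrix of spectral norm only $\sqrt{d_k+r^{m-1}}\lesssim \sqrt{d_k\log^{\bar s}(d_*/\bar d)}$. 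A second Wedin application then yields the refined bound $\|\sin\Theta(\hat\mU_k,\mU_k^*)\|_F\lesssim \sqrt{r d_k \log^{\bar s}(d_*/\bar d)}/\sigma_r(\tX_{(k)})$ up to a tractable bootstrap term from the first stage; this is precisely where the double-projection improves over a single HOSVD pass.

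\textbf{Main obstacle and assembly.} The chief difficulty is the weak-signal regime in which $\sigma_r(\tX_{(k)})$ can be arbitrarily small, making Wedin uninformative. I would resolve this via a dichotomy in the spirit of the HOOI analysis of \citet{zhang2018tensor}: when $\sigma_r(\tX_{(k)})\geq d_*^{1/4}$, the refined Wedin bound together with $\sigma_1(\tX_{(k)})\leq \FnormSize{}{\tX}\lesssim \sqrt{d_*}$ yields a per-mode contribution of $\tilde\tO(\bar d)$; when $\sigma_r(\tX_{(k)})<d_*^{1/4}$, the Wedin bound is replaced by a direct estimate of $\|(I-P_k)\tX_{(k)}\|_F^2$ via the best rank-$r$ approximation of the projected observation, which combined with the analyticity-based bound $\FnormSize{}{\tX}^2\lesssim d_*$ produces the universal $\tilde\tO(d_*^{1/2})$ contribution per mode. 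Summing across the $m$ modes and adding the $\tilde\tO(\bar d)$ noise term and the negligible residual yields the claimed bound $\FnormSize{}{\hat\Theta^{\text{DSE}}-\Theta}^2\lesssim d_*^{1/2}\log^{\bar s}(d_*/\bar d)+\bar d\log^{2\bar s}(d_*/\bar d)$, with the two logarithmic exponents reflecting single- and squared-propagation of the sin-theta errors, respectively.
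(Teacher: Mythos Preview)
Your overall architecture---low-rank surrogate $\tX$ via Theorem~\ref{thm:rank}, triangle split into signal bias, projected noise, and residual, and a refined perturbation analysis using the second projection---matches the paper's proof. The noise and residual pieces are handled correctly. The genuine gap lies in the signal-bias term and specifically in your dichotomy on $\sigma_r(\tX_{(k)})$.

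First, your strong-signal bookkeeping does not close. Bounding the bias by $\|\sin\Theta(\hat\mU_k,\mU_k^*)\|_F^2\,\sigma_1^2(\tX_{(k)})$ and inserting the refined Wedin estimate $\|\sin\Theta\|_F^2\lesssim r\bar d/\sigma_r^2$ together with $\sigma_1^2\lesssim d_*$ yields $r\bar d\,d_*/\sigma_r^2$, which at the threshold $\sigma_r\asymp d_*^{1/4}$ is $r\bar d\,d_*^{1/2}$---far larger than $\tilde\tO(\bar d)$. The ratio $\sigma_1/\sigma_r$ is uncontrolled in this model, so any argument that multiplies a sin-theta bound involving $\sigma_r^{-1}$ by a leading singular value cannot work. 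Second, your weak-signal branch is under-specified: when $\sigma_r(\tX_{(k)})<d_*^{1/4}$ but several top singular values are of order $\sqrt{d_*}$, the first-stage $\tilde\mU_\ell$ need not align with the full rank-$r$ signal space even to constant accuracy, so the ``direct estimate via the best rank-$r$ approximation of the projected observation'' has no handle on $\|(I-P_k)\tX_{(k)}\|_F$.

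The paper resolves both issues simultaneously by thresholding at the level of individual singular values rather than at $\sigma_r$. It introduces an effective rank $r_k'\leq r$ as the number of singular values of $\tX_{(k)}$ exceeding $C_0(d_*^{1/4}\vee\bar d^{1/2})$, and lets $U_k'$ be the corresponding top-$r_k'$ left singular vectors. A sharpened perturbation lemma (their Lemma~\ref{lem:uu}, which compares subspaces of \emph{different} dimensions) guarantees $\|\tilde U_{\ell,\perp}^T U_\ell'\|\leq 1/\sqrt{2}$ for every $\ell$---only a constant spectral sin-theta is needed here. This constant bound is exactly what lets one transfer the projection inequality $\|\hat U_{k,\perp}^T\bar X_k\|_F\leq 2\sqrt{r}\,\|\bar Z_k\|\lesssim\sqrt{r\bar d}$ from the projected signal $\bar X_k$ back to the ``strong'' part $X_k'=\tX_{(k)}(\mathbb{P}_{U_{k+1}'}\otimes\cdots)$, via the inequality $\|\hat U_{k,\perp}^T X_k'\|_F\prod_{\ell\neq k}\sqrt{1-\|\tilde U_{\ell,\perp}^T U_\ell'\|^2}\leq \|\hat U_{k,\perp}^T\bar X_k\|_F+\|X_k-X_k'\|_F$. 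The remaining ``weak'' part satisfies $\|X_k-X_k'\|_F\leq \sum_{\ell\neq k}\sqrt{r_\ell-r_\ell'}\,\sigma_{r_\ell'+1}(\tX_{(\ell)})\lesssim \sqrt{r}\,d_*^{1/4}$, which is precisely the $\tilde\tO(d_*^{1/2})$ term. No $\sigma_1/\sigma_r$ ratio ever appears. Your dichotomy is the $r_k'\in\{0,r\}$ degeneration of this argument; the general case, where $0<r_k'<r$, is the one you are missing.
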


We see that the upper bound in \eqref{eq:polyupper} matches with the computational lower bound in Theorem~\ref{thm:polylower} up to logarithm factors. Therefore, our estimator $\hat \Theta^{\text{DSE}}$ is computationally optimal.

\begin{rmk}[Gap between statistical and computational optimality]
Theorems~\ref{thm:polylower}-\ref{thm:polyupper} show that the computationally optimal rate for estimating the signal tensor $\Theta\in\tP(\md,\ms,M)$  is  of order $\tilde \tO(d_*^{1/2})$. Theorems~\ref{thm:lseupper}-\ref{thm:lselower} show that the statistically optimal rate is of order $\tilde \tO(\bar d)$, among all estimates including those computationally intractable. For easier comparison between the two rates, we consider the equal dimension as $d_1 = \cdots = d_m = d$.  Figure~\ref{fig:semantic} shows the statistical-computational gap with respect to mean square error  
\[
\text{MSE}(\hat \Theta, \Theta)={1\over \prod_{k\in[m]} d_k}\FnormSize{}{\hat \Theta-\Theta}^2.
\]
We see that the statistically optimal MSE is $\tilde \tO(d^{-(m-1)})$, whereas the computationally optimal MSE is $\tilde \tO(d^{-m/2})$. The gap in between is called the ``statistically possibly but computationally impossible'' region. Our two estimators, LSE and DSE, are located at the two boundaries respectively. The statistical-computational gap becomes more noticeable as the tensor order $m$ increases, while no gap exists for the matrices with $m=2$. This phenomenon reveals the fundamental challenges with higher-order tensors. 
\end{rmk}

\begin{rmk}[Necessity of double projection procedure]
For matrices (i.e., $m=2$), the second projection~\eqref{eq:second} becomes redundant because $\tilde \mU_k$ and $\hat \mU_k$ are the same for $k = 1,2.$ The projection of the observed matrix $\tY\in\mathbb{R}^{d_1\times d_2}$ onto the best rank-$r$ column space is equal to the best rank-$r$ row space. The step~\eqref{eq:polyest} reduces to the truncated-SVD estimator, and its accuracy is guaranteed by the Eckart-Young Theorem~\citep{eckart1936approximation}. 

For higher-order tensors, however, the projection~\eqref{eq:second} onto the subspace at one mode changes the subspaces at other modes. In fact, computing the best rank-$r$ subspaces of $\tY$ jointly at all modes is NP-hard~\citep{hillar2013most}. We have to control the noise aggregation of subspace estimation in a more complicated way, and the additional projection of $\tY$ onto the pre-estimated subspaces is needed~\citep{han2022exact}.
We introduce the double projection procedure to overcome these challenges. This additional projection of $\tY$ on the pre-estimated subspaces substantially reduces the noise effect in $\hat\mU_k$. We verify the benefit of the double projection over traditional HOSVD in Section~\ref{sec:sim}. 
\end{rmk}

\begin{rmk}[Hyperparameter tuning]
Our algorithm~\ref{alg:polyalg} treats the approximation rank as inputs. The theoretical choice of the rank is given in  Theorem~\ref{thm:polyupper}. In practice, since the model configuration is unknown, we need to search the approximation rank as tuning parameters. Based on our simulations, we find it sufficient to set $r_k = r=c\log \bar d$ for all $k\in[m]$ and choose the constant $c$ via cross-validation. We investigate the practical impacts of tuning parameters in both synthetic and real data applications in Section~\ref{sec:sim}.
\end{rmk}

\section{Numerical analysis}\label{sec:sim}
We study the performance of our methods through simulations and real data applications.

\subsection{Simulations}
We first verify our theoretical results using synthetic data. We simulate order-3 tensors based on the latent variable tensor model \eqref{eq:LVM}. We sample independent latent variables $\ma_i\sim\textup{Unif}[0,1]^s$ for $i\in[d]$, and generate the signal tensor $\Theta\in\mathbb{R}^{d\times d \times d}$ based on 
\begin{align}\label{eq:sim}
    \Theta(i_1,i_2,i_3) = f(\ma_{i_1},\ma_{i_2},\ma_{i_3}) \text{ for all } (i_1,i_2,i_3)\in[d]^3,
\end{align}
where $f\colon\mathbb{R}^s\times\mathbb{R}^s\times\mathbb{R}^s\rightarrow\mathbb{R}$ is an analytic function. We consider three simulation models listed in Table~\ref{tb:model}. The functions involve compositions of operations such as polynomial, logarithm, exponential, and cosine. Notably, the signal tensors generated by these analytic functions are all full rank.
\begin{table}[H]
    \centering
    \caption{Analytic functions in \eqref{eq:sim} to generate signal tensors.}
    \begin{tabular}{c|c}
    \hline
        Model ID &  $f(\mx,\my,\mz)$ \\\hline
        1 &  $\exp\left(-(\|\mx-\my\|_2^2+\|\my-\mz\|_2^2+\|\mz-\mx\|_2^2)/{3}\right)$\\
        2& $\cos\left((\|\mx-\my\|_2^2+\|\my-\mz\|_2^2+\|\mz-\mx\|_2^2)/{3}\right)$\\
        3 & $\log\left(1+ (\|\mx-\my\|_2^2+\|\my-\mz\|_2^2+\|\mz-\mx\|_2^2)/{3}\right)$\\\hline\hline
    \end{tabular}
    \label{tb:model}
\end{table}

The first experiment examines the numerical log-rank of the generated signal tensors from \eqref{eq:sim} as suggested in our Proposition~\ref{cor:rank}.
Our theory shows that a tensor generated from the latent variable tensor model \eqref{eq:LVM} is well approximated by a log-rank tensor.
We define the numerical $\epsilon$-rank of a tensor $\Theta$ as
\[\textup{rank}_\epsilon(\Theta) = \min\{r\in\mathbb{N}_+\colon \textup{rank}(\tX)\leq(r,r,r),\FnormSize{}{\Theta-\tX}\leq \epsilon\FnormSize{}{\Theta}\}.\]
Computing the exact $\epsilon$-rank for $0<\epsilon<1$ is NP hard. A common approach to approximate $\textup{rank}_\epsilon(\Theta)$ is to project $\Theta$ onto an $r$-rank tensor by the HOOI algorithm~\citep{zhang2018tensor} and check the approximation error. We set the $\epsilon$-rank using set the smallest rank which makes the relative approximation error below $\epsilon$. We vary the dimension of latent variables $s\in\{5,10,15\}$ and tensor dimension $d\in\{20,40,\ldots,200\}$ with $\epsilon = 0.01.$  

Figure~\ref{fig:lvmrank} plots the numerical rank versus tensor dimensions under each of the three simulation models. We find that the numerical rank appears to grow on the order of logarithm of tensor dimension~$d$. This trend verifies our log-rank approximation theory of high rank tensors. In addition,  we observe an upward shift of the curve as latent dimension $s$ increases. This phenomenon is consistent with Proposition~\ref{cor:rank}, which says the $\varepsilon$-rank is upper-bounded by $\tO(\log^s(d))$. Intuitively, a higher latent dimension implies a higher model complexity, and thus a higher approximation rank is needed. 

\begin{figure}[h]
    \centering
    \includegraphics[width = \textwidth]{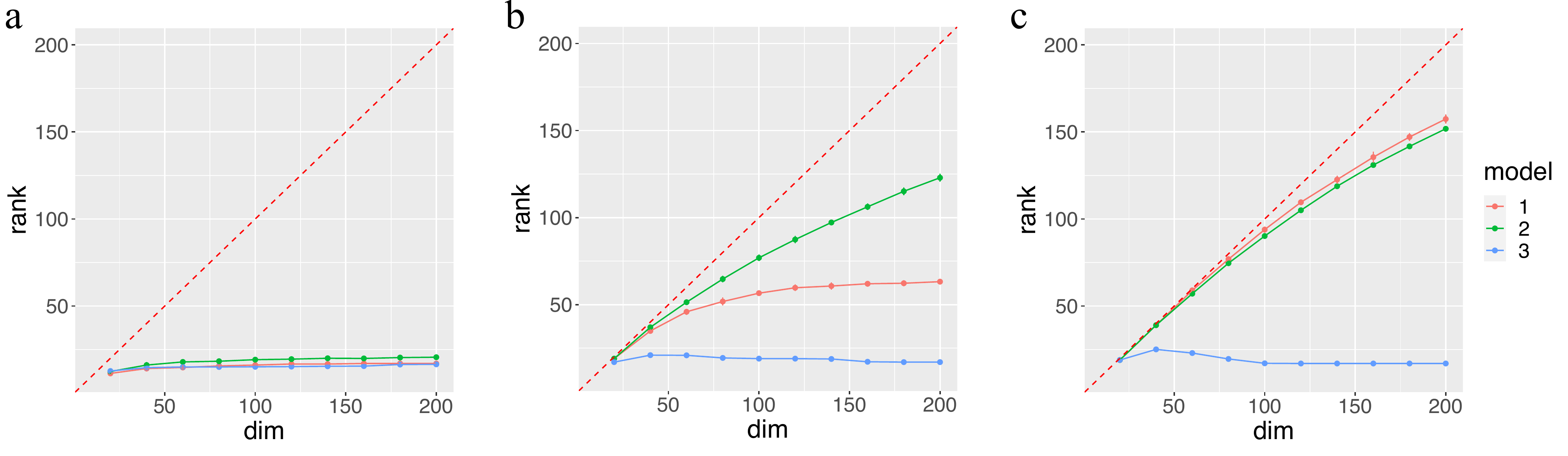}
    \caption{Numerical rank of $\Theta$ versus tensor dimension $d$ under three models in Table~\ref{tb:model}. Panels a-c correspond to the latent dimension $s\in\{5,10,15\}$, respectively.}
    \label{fig:lvmrank}
\end{figure}

         
         
         

The second experiment investigates the performance of our DSE algorithm for signal estimation. We generate noisy tensors $\tY = \Theta + \tE$ from \eqref{eq:sim}, where $\Theta$ is the signal tensor and $\tE$ is the noise tensor with i.i.d. entries from $N(0,1)$. We estimate the signal tensor using the DSE algorithm with input tensor $\tY$ and the approximation rank $r$. The performance is evaluated using mean square error. We set the approximation rank as $r = c\log^s d$ based on Theorem~\ref{thm:polyupper}, where we choose $c$ that gives the best result.

Figure~\ref{fig:est} shows the mean squared error versus tensor dimension under latent dimensions $s\in\{1,2,3\}.$ We see that the DSE algorithm successfully estimates the signal tensors in all scenarios. The MSE shows a polynomial-decaying trend with respect to the tensor dimension $d$. This result again is consistent with our Theorem~\ref{thm:polyupper}. In addition, we find that the decaying trends are similar for various $s$, implying the robustness of our algorithm again latent dimensions. 

\begin{figure}[h]
    \centering
    \includegraphics[width = \textwidth]{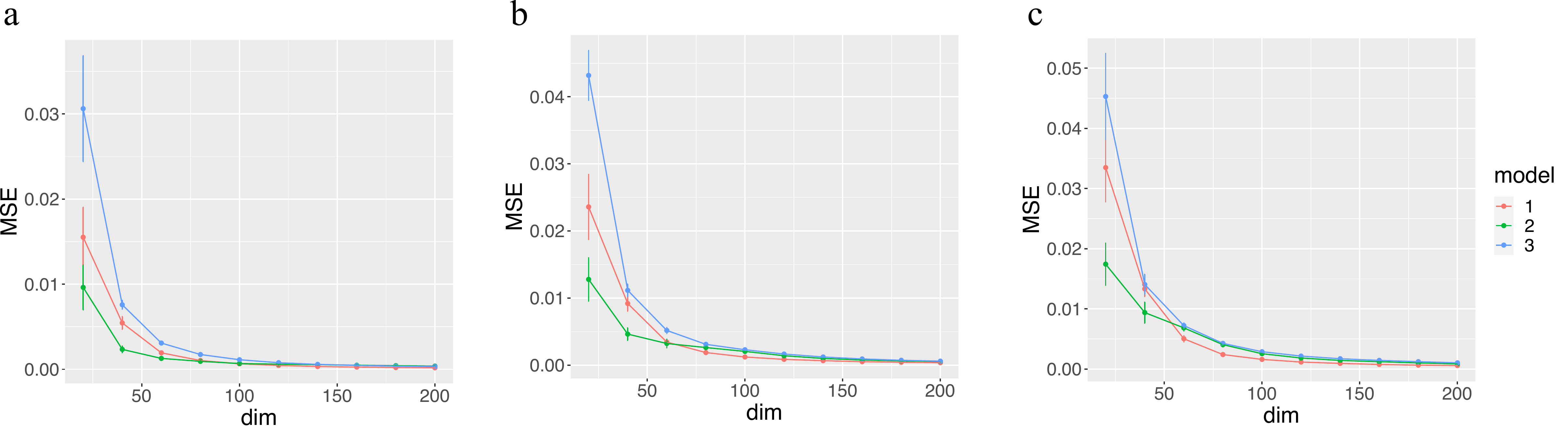}
    \caption{MSE versus tensor dimension $d$ under three models in Table~\ref{tb:model}. For each setting, we plot the average MSE and its standard error across 20 replicates.  Panels a-c correspond to the latent dimension $s\in\{1,2,3\}$, respectively.}
    \label{fig:est}
\end{figure}

Lastly, we compare our DSE algorithm with two other popular tensor estimation algorithms: HOSVD~\citep{de2000multilinear} and Borda Count~\citep{lee2021smooth}. The HOSVD uses the single projection whereas our DSE algorithm takes the double projection procedure. The Borda Count algorithm assumes the monotonic assumption on the latent function and uses sorting and smoothing for estimation. Figure~\ref{fig:svsd} plots the MSE of HOSVD, Borda Count, and DSE algorithms. We fix $s=2$ and vary $d\in\{20,40,...,200\}$. We see that our DSE algorithm achieves the best performance in most scenarios, whereas the Borda Count is often the worst. The underperformance of Borda Count algorithm is possibly due to the lack of monotonicity in the latent functions. 
We also find that the double projection consistently improves the performance of the single projection, from the fact that DSE always outperforms HOSVD. This result verifies our intuition that the double projection alleviates the noise effects for higher-order tensors.

\begin{figure}[h]
    \centering
    \includegraphics[width = \textwidth]{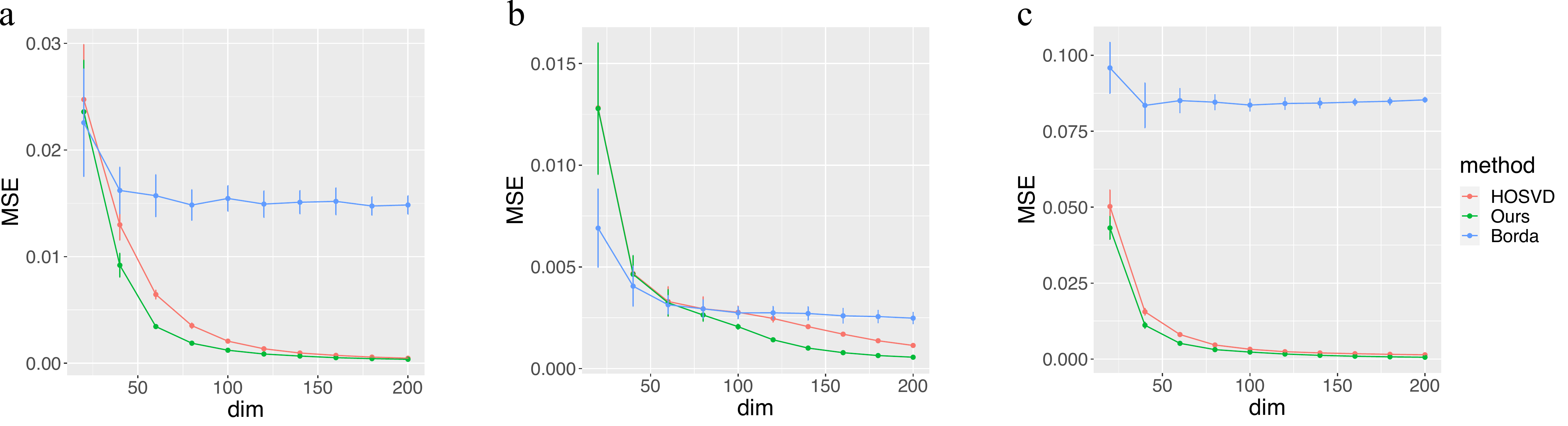}
    \caption{MSE versus tensor dimension $d$ across different models from Table~\ref{tb:model} and estimation algorithms (HOSVD, Borda Count, and Ours). Panels a-c corresponds to simulation models 1-3 in Table~\ref{tb:model}.}
    \label{fig:svsd}
\end{figure}

\subsection{Application to fMRI 3D brain image}\label{sec:brain}
We apply the DSE algorithm to visual motion fMRI database~\citep{buchel1997modulation}. The fMRI brain image consists of voxels, equivalent to 3D pixcels, across length, width, and height. The observed 3D brain image tensor $\Theta$ is of size $157\times 189\times 68$ and full rank. Figure~\ref{fig:brain1}(a) plots a horizontal slice of the tensor brain image, and Figure~\ref{fig:brain2}(a) shows the original 3D image across vertical and horizontal lines for better visualization.

In this study, we add i.i.d.\ Gaussian noise $N(0,\sigma_\gamma^2)$ to each entry of the original tensor, where
\begin{align}
    \sigma_\gamma = \gamma\left(\frac{\FnormSize{}{\Theta}^2}{157\times 189\times 68}\right)^{1/2},\quad \text{with noise level }\gamma\in\{0,0.2,0.4,\ldots,0.8,1\}.
\end{align}
Notice that the noise level $\gamma = 0$ means the original signal tensor without the noise. 
We apply the DSE algorithm to the contaminated image tensor by varying the approximation rank $r\in\{3,6,\ldots,60\}$.  Figure~\ref{fig:brain1}(b) plots the MSE versus the approximation rank under different noise levels $\gamma$. 
In the absence of noise, setting a higher rank results in a smaller approximation error, as expected from our Theorem~\ref{thm:lseupper}. In the presence of noise, however, setting a higher rank results in a suboptimal bias-variance tradeoff. We find that the best performance is often achieved at the rank 20 -- 40, with a higher rank corresponding to the lower noise. 

\begin{figure}[H]
     \centering
    \includegraphics[width = \textwidth]{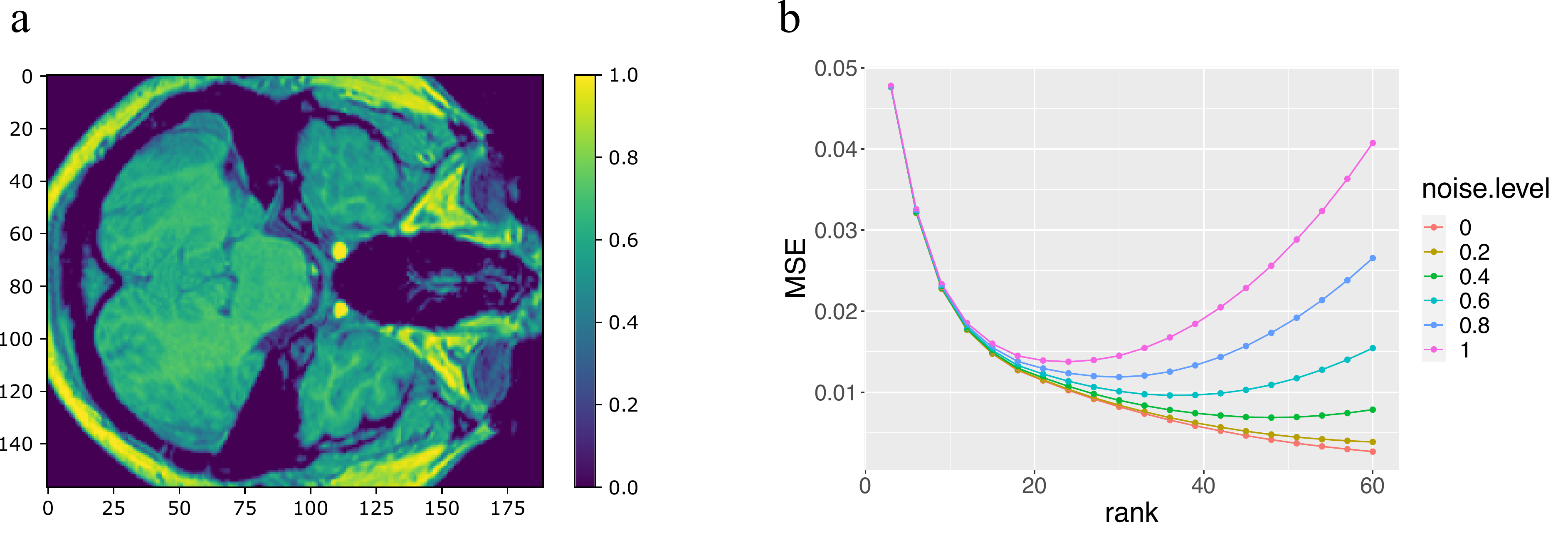}
        \caption{(a): A horizontal slice of tensor brain image $\{\Theta(i,j,5):(i,j)\in[157]\times[189]\}$. (b): MSE of the DSE algorithm under various rank approximations and noise levels. }
        \label{fig:brain1}
\end{figure}
Figure~\ref{fig:brain2} displays the input brain image data tensor with noise levels $\gamma\in\{0,0.5,1\}$, and the corresponding denoised tensor from our algorithm with the approximation rank $r = 27$. We see that the DSE algorithm successfully recovers the original brain image in all scenarios. \begin{figure}[h]
     \centering
    \includegraphics[width = \textwidth]{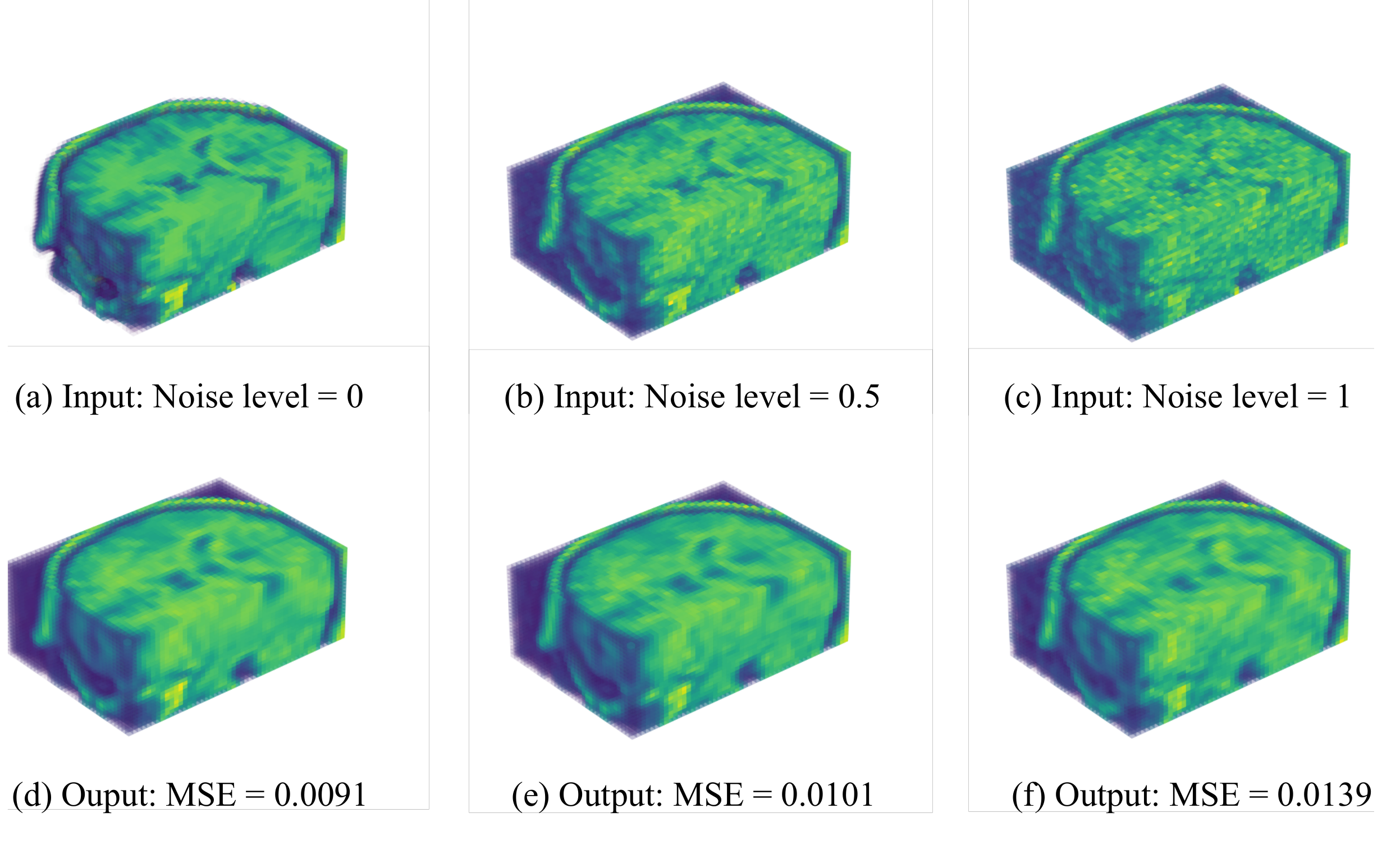}
        \caption{Panels (a)-(c) plot the input tensors with three noise levels $\gamma\in\{0,0.5,1\}$. Panels (d)-(f) plot the denoised tensors from input tensors (a)-(c), respectively. }
        \label{fig:brain2}
\end{figure}

\begin{figure}[h]
    \centering
    \includegraphics[width = \textwidth]{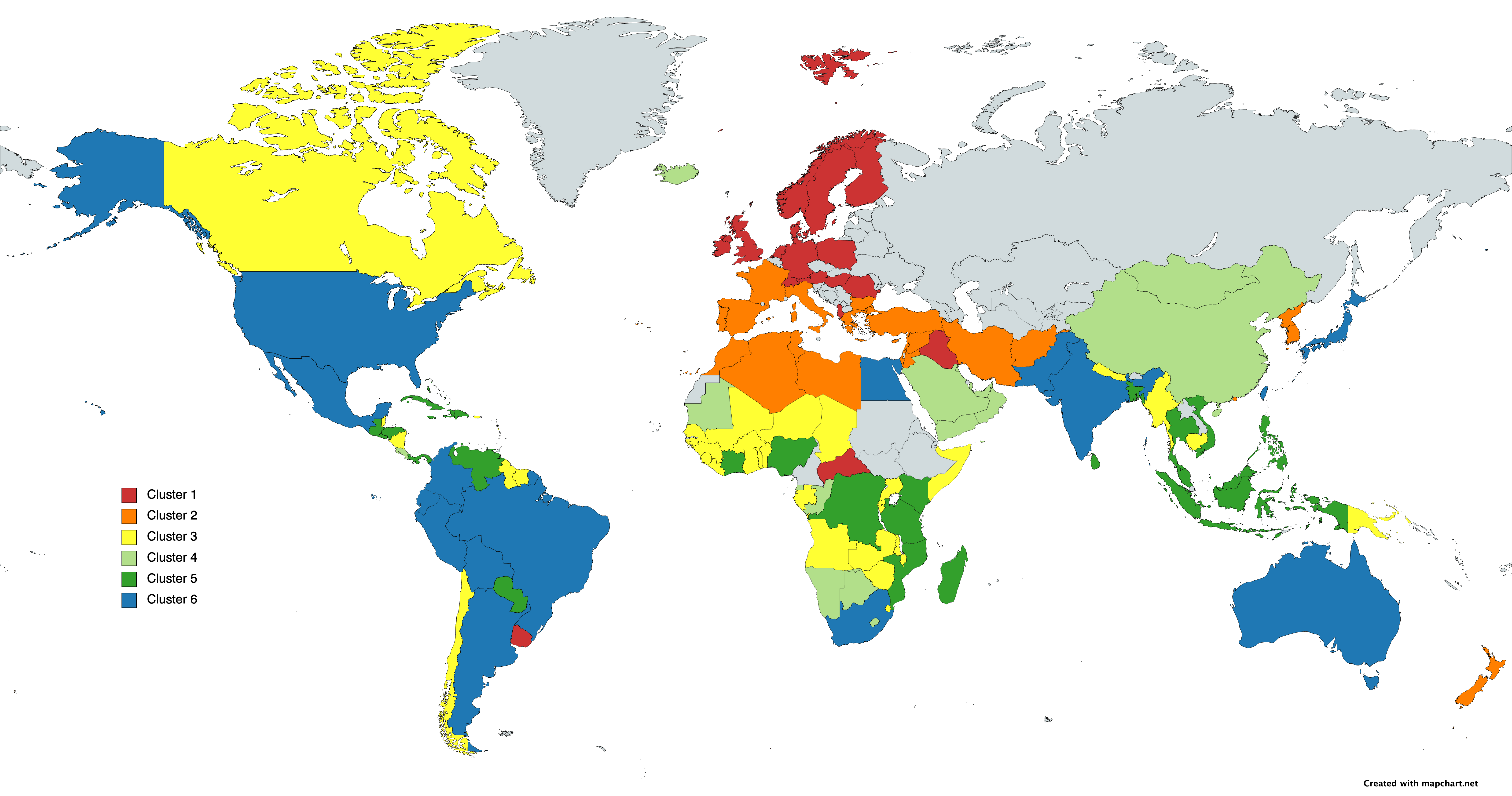}
    \caption{Six clusters of 166 countries based on our estimated signal tensor obtained from DSE algorithm in the crop production data analysis.}
    \label{fig:cropworld}
\end{figure}

\subsection{Application to crop production data}\label{sec:crop}
We apply the DSE algorithm to the world-wide crops production data  as the second illustration (available on \href{http://www.fao.org/faostat/en/
#data/QC}{http://www.fao.org/faostat/en/\#data/QC}).  The dataset contains annual production of 161 crops  across 166 countries between 1961 and 2020. The observed data can be organized as an order-3 tensor with entries representing the log counts of productions from 161 crops, 166 countries, and 60 years. We select the approximation rank $r= 26$ via cross-validation and obtain the denoised tensor from the DSE algorithm. We perform a clustering analysis via $K$-means on the country and crop modes based on the estimated signal tensor (see detailed procedure in Appendix). Figure~\ref{fig:cropworld} shows that our clustering results successfully capture the regional separation. We find that the Cluster 1 represents the western Europe, whereas the Cluster 2 consists of countries around Mediterranean sea. The Cluster 3 mainly contains countries near African deserts, and the Cluster 4 includes east and west Asia. The Cluster 5 represents south east Asia and east Africa regions, while the Cluster 6 mainly consists of countries around Pacific ocean. We emphasize that our clustering is consistent with actual country locations without using geographic information such as longitude or latitude as inputs. 
Similar clustering analysis is performed on crop mode, and we find that the clustered groups capture the type of crops. The detailed procedures and results are provided in Appendix.
The identified similarities among countries and crops within same clusters illustrate the applicability of our methods.

\section{Proofs}\label{sec:proof}

\subsection{Proofs of Theorem~\ref{thm:rank} and Proposition~\ref{cor:rank}}

\begin{proof}[Proof of Theorem~\ref{thm:rank}]
Let $\tB_{N,\ell-1}$ be the basis tensor in Lemma~\ref{lem:lvmrank}.
We show that the desired properties are satisfied when taking 
\begin{equation}\label{eq:choice}
 N=\lceil M^2\exp(2m\bar s)\rceil,\quad \text{and}\quad \ell = r^{1/\bar s} c^{-1/\bar s}N^{-m},
\end{equation}
where $\lceil \cdot \rceil$ denotes the ceiling function. 
To verify this, setting $\ell = r^{1/\bar s} c^{-1/\bar s}N^{-m}$ yields the rank bound,
\[
\text{Tucker-rank}(\tB_{N,\ell-1}) \leq cN^{m\bar s}(\ell^{s_1},\ldots,\ell^{s_k})\leq (r,\ldots,r).
\]
Next, we consider the approximation error. From Lemma~\ref{lem:lvmrank}, the approximation error is bounded by
\begin{align}\label{eq:errorbd}
     \|\Theta-\tB_{N,\ell-1}\|_\infty & \lesssim \left( M\over N\right)^\ell \ell^{m \bar s}\nonumber\\
     &\stackrel{(a)}{\leq} \exp\left(- \ell (\log N- \log M-m \bar s )\right)\\
     & \stackrel{(b)}{\leq} \exp\left(- \ell (\log M+m\bar s)\right) \\
     &\stackrel{(c)}{=} \exp(-r^{1/\bar s} c'(\log M+m\bar s))\\
     &= \exp(-r^{1/\bar s}c'')
\end{align}
where the line (a) uses $\log \ell \leq \ell$, the line (b) uses the fact that $\log N \geq 2(\log M+m\bar s)$, the line (c) plugs in $\ell = r^{1/\bar s}c'$ in~\eqref{eq:choice} with a constant $c'=c'(\ms, M)$, and the last line absorbs the term $(\log M+m\bar s)$ into another constant $c''$. 

Finally, setting $\tX = \tB_{N,\ell-1}$ in the approximation completes the proof.
\end{proof}

\begin{proof}[Proof of Proposition~\ref{cor:rank}]
From Theorem~\ref{thm:rank}, we have 
\begin{align}
    \FnormSize{}{\Theta-\tX}^2\leq d_*\|\Theta-\tX\|_\infty^2\leq \bar d^m e^{-c r^{1/\bar s}}.
\end{align}
Solving the inequality $\bar d^m e^{-cr^{1/\bar s}}\leq \varepsilon^2$ yields $r\geq c^{-\bar s}(m \log^{\bar s}\bar d-2\log^{\bar s}\varepsilon)$.
By absorbing all constants into $c=c(\varepsilon,M,\ms)$, we conclude
\[
\min\{r\in\mathbb{N}_{+}\colon \text{Tucker-rank}(\tX)\leq (r,\ldots,r), \FnormSize{}{\Theta-\tX}\leq \varepsilon\} \leq c\log^{\bar s}\bar d.
\]
\end{proof}

\begin{lem}[Dimension-free approximation for latent variable tensor model]\label{lem:lvmrank}
Consider a tensor $\Theta\in\tP(\md,\ms,M)$. There exist a set of basis tensors $\{\tB_{N,\ell}\in \tP(\md,\ms,M) \colon (N,\ell)\in\mathbb{N}_{+}\times \mathbb{N}_{+}\}$ and a constant $c=c(m,\ms,M)$, such that, for every integer pair $(N,\ell)$, we have
\begin{enumerate}
\item[(i)] [Approximation error] $\|\Theta-\tB_{N,\ell-1}\|_\infty \leq cM^{\ell}N^{-\ell}\ell^{m\bar s}$;
\item[(ii)]  [Low-rankness] $\text{Tucker-rank}(\tB_{N,\ell-1}) \leq c N^{m\bar s}(\ell^{s_1},\ldots,\ell^{s_k})$.
\end{enumerate}
\end{lem}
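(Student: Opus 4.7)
My plan is to construct $\tB_{N,\ell-1}$ as a piecewise-polynomial approximation of the latent function $f$, obtained by (a) partitioning each unit ball $B(\mathbb{R}^{s_k},\|\cdot\|_\infty)$ into a regular grid of sub-cubes of side length $1/N$, and (b) replacing $f$ on each cube configuration by its total-degree-$(\ell-1)$ Taylor polynomial centered at the cube centers. Explicitly, for each $i_k$ let $\boldsymbol{q}(i_k)$ denote the center of the sub-cube in mode $k$ containing $\ma_{i_k}^{(k)}$, so that $\|\ma_{i_k}^{(k)}-\boldsymbol{q}(i_k)\|_\infty\leq 1/(2N)$, and define
\[
\tB_{N,\ell-1}(i_1,\ldots,i_m)=\sum_{|\boldsymbol{\alpha}|\leq \ell-1}\frac{\partial^{|\boldsymbol{\alpha}|} f(\boldsymbol{q}(i_1),\ldots,\boldsymbol{q}(i_m))}{\boldsymbol{\alpha}!}\prod_{k=1}^{m}\left(\ma_{i_k}^{(k)}-\boldsymbol{q}(i_k)\right)^{\alpha_k}.
\]

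For the uniform approximation bound (i), I would use the $M$-analyticity condition~\eqref{eq:analytic} to control the Taylor remainder term by term: since $|\partial^{|\boldsymbol{\alpha}|}f|\leq M^{|\boldsymbol{\alpha}|}\boldsymbol{\alpha}!$ and each coordinate deviation is at most $1/(2N)$, every excluded monomial is bounded in modulus by $(M/(2N))^{|\boldsymbol{\alpha}|}$. Setting $s=s_1+\cdots+s_m\leq m\bar s$, the number of multi-indices with $|\boldsymbol{\alpha}|=k$ equals $\binom{k+s-1}{s-1}\lesssim k^{s-1}$, so the remainder obeys
\[
\|\Theta-\tB_{N,\ell-1}\|_\infty \leq \sum_{k\geq \ell}\binom{k+s-1}{s-1}\left(\frac{M}{2N}\right)^{k}\lesssim M^{\ell}N^{-\ell}\ell^{m\bar s},
\]
provided $M/(2N)$ is bounded away from $1$. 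In the degenerate regime where this fails, the stated bound is trivial because $|f|\leq 1$ (a consequence of~\eqref{eq:analytic} with $\boldsymbol{\alpha}=0$), so an enlarged $c=c(m,\ms,M)$ absorbs that case automatically.

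For the low-rank bound (ii), I would exhibit an explicit Tucker decomposition. Define mode-$k$ factor vectors indexed by pairs $(Q,\alpha_k)$, where $Q$ ranges over the sub-cubes in mode $k$ (with center $\boldsymbol{q}_Q$) and $\alpha_k\in\mathbb{N}^{s_k}$ satisfies $|\alpha_k|\leq \ell-1$:
\[
\phi^{(k)}_{Q,\alpha_k}(i_k) = \mathbf{1}\!\left[\ma_{i_k}^{(k)}\in Q\right]\cdot \left(\ma_{i_k}^{(k)}-\boldsymbol{q}_Q\right)^{\alpha_k}.
\]
Rewriting $\tB_{N,\ell-1}$ as a double sum over $\boldsymbol{\alpha}$ and cube configurations $(Q_1,\ldots,Q_m)$ expresses it as a linear combination of rank-one products of these factor vectors, with the Taylor derivative coefficients at the configuration centers absorbed into the Tucker core. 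Hence the mode-$k$ rank is at most (number of sub-cubes)$\times$(number of admissible $\alpha_k$) $\leq (2N)^{s_k}\ell^{s_k}\leq cN^{m\bar s}\ell^{s_k}$, which is (ii).

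I expect the main obstacle to be the tail-series bookkeeping in (i): converting $\sum_{k\geq \ell}k^{s-1}\rho^{k}$ into the clean bound $\rho^{\ell}\ell^{s-1}$ requires handling the binomial coefficient uniformly in $k\geq \ell$ (via $\binom{k+s-1}{s-1}\leq (k+s-1)^{s-1}/(s-1)!$) and controlling the convergence factor $(1-\rho)^{-s}$, both of which force the implicit constant in the bound to depend polynomially on $M$ and exponentially on $m\bar s$. By contrast, the rank count in (ii) is a direct combinatorial accounting once the Tucker factorization above is written down.
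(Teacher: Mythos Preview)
Your proposal is correct and follows essentially the same piecewise Taylor-polynomial construction as the paper: partition each latent domain into sub-cubes of side $1/N$ and replace $f$ by its degree-$(\ell-1)$ Taylor expansion about the cube centers. The only minor differences are that the paper bounds the approximation error via the exact degree-$\ell$ Taylor remainder (which avoids your tail-series case split on whether $M/(2N)<1$), while your per-mode Tucker factorization in (ii) is in fact cleaner than the paper's unfolding argument and yields the tighter mode-$k$ rank bound $(2N)^{s_k}\binom{\ell-1+s_k}{s_k}$ in place of the paper's $(2N)^{\sum_j s_j}\binom{\ell+s_k-2}{s_k-1}$.
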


\begin{proof}[Proof of Lemma~\ref{lem:lvmrank}]
We first construct the tensor $\tB_{N,\ell-1}$ with small approximation error and then prove the low-ranknss of $\tB_{N,\ell-1}$.
For ease of presentation, we will use the following shorthand notation in the proof. For an $m$-dimensional vector $\mx=(x_1,\ldots,x_m)$ and an integer partition $\alpha=\alpha_1+\cdots+\alpha_m$ with $\alpha_i\in\mathbb{N}$, we write
\begin{align}\label{eq:notation}
\mx^{\alpha}:=x_1^{\alpha_1}\cdots x_m^{\alpha_m}, \quad \text{and}\quad \partial \mx^\alpha:=\partial x_1^{\alpha_1}\cdots \partial x_m^{\alpha_m}, 
\end{align}

\paragraph*{Approximation error} Without loss of generality, assume the latent variables $\ma^{(k)}_{i_k} \in [-1,1)^{s_k}$ for all $k\in[m]$. We partition the interval $[-1,1)$ into $(2N)$ equal-sized intervals; i.e., 
\[
[-1,1)=\bigcup_{i=-N+1}^N\left[{i-1\over N},{i\over N}\right).
\]
Similarly, for a given $k\in[m]$, we partition $[-1,1)^{s_k}$ into $s_k$-fold Cartesian products of equal-sized intervals; i.e., 
\[
[-1,1)^{s_k} =\bigcup_{i_{s_k}=-N+1}^N \cdots\bigcup_{i_1=-N+1}^N \left(\bigtimes_{j=1}^{s_k} \left[{i_j-1 \over N},{i_j\over N}\right)\right), 
\]
where $\bigtimes$ denotes the Cartesian product of intervals. Since our domain space for latent variables is $\bigtimes_{k=1}^{m}[-1,1)^{s_k}$, we have in total $N_{\text{tot}} := (2N)^{\sum_{k=1}^ms_k}$ hypercubes in the partition. We denote these hypercubes by $I_1,\ldots,I_{N_{\text{tot}}}$; i.e., 
\begin{equation}\label{eq:partition}
\bigtimes_{k=1}^{m}[-1,1)^{s_k}=\bigcup_{i=1}^{N_{\text{tot}}} I_i, 
\end{equation}
and we denote centers of these hypercubes by
\[
\KeepStyleUnderBrace{(\mz_1^{(1)},\ldots,\mz_1^{(m)})}_{\in\mathbb{R}^{s_1\times \cdots \times s_m}},\ldots,\KeepStyleUnderBrace{(\mz_{N_{\text{tot}}}^{(1)},\ldots,\mz_{N_{\text{tot}}}^{(m)})}_{\in\mathbb{R}^{s_1\times\cdots\times s_m}}.
\]
Based on the partition~\eqref{eq:partition}, we construct a piece-wise polynomial of degree $\ell$ as
\begin{align}\label{eq:piecewise}
    P_{N,\ell}(\mx_1,\ldots,\mx_m) = \sum_{i=1}^{N_{\text{tot}}} P_{I_i,\ell}(\mx_1,\ldots,\mx_m)\mathds{1}\{(\mx_1,\ldots,\mx_m)\in I_i\},
\end{align}
where 
\begin{align}
&P_{I_i,\ell}(\mx_1,\ldots,\mx_m)\\
=&\ \sum_{|\bm \alpha|\leq\ell}\frac{1}{\bm \alpha!}\left((\mx_1,\ldots,\mx_m)-(\mz_i^{(1)},\ldots, \mz_i^{(m)})\right)^{\bm \alpha}{\partial^{|\bm \alpha|} f(\mx_1,\ldots,\mx_m) \over \partial (\mx_1,\ldots,\mx_m)^{\bm \alpha}} \bigg|_{(\mz^{(1)}_i,\ldots,\mz^{(m)}_i)},
\end{align}
and we have used the shorthand notation~\eqref{eq:notation} in the summands.

By the definition of analytic functions \eqref{eq:analytic}, for all $(i_1,\ldots,i_m)$, we have the following approximation error
\begin{align}\label{eq:appxbound}
    \sup_{\ma^{(k)}_{i_k}\in B(k),k\in[m]}&|f(\ma^{(1)}_{i_1},\ldots,\ma^{(m)}_{i_m})- P_{N,\ell-1}(\ma^{(1)}_{i_1},\ldots,\ma_{i_m}^{(m)})|\nonumber\\
    &\leq \sup_{1\leq i\leq N_{\text{tot}}}\sup_{\ma^{(k)}_{i_k}\in B(k),k\in[m]}|f(\ma^{(1)}_{i_1},\ldots,\ma^{(m)}_{i_m})-P_{I_i,\ell-1}(\ma^{(1)}_{i_1},\ldots,\ma_{i_m}^{(m)})|
    \nonumber\\
    &\leq \sum_{\bm \alpha\colon|\bm \alpha| = \ell}\frac{1}{\bm \alpha!}\sup_{\mx_k\in B(k),k\in[m]}\left|{\partial^{|\bm \alpha|}f(\mx_1,\ldots,\mx_m)\over \partial(\mx_1,\ldots,\mx_m)^{\bm \alpha}}\right|\left(1\over N\right)^\ell\nonumber\\
    &\leq \sum_{\bm \alpha\colon|\bm \alpha| = \ell}\left( M\over N\right)^\ell\nonumber\\&= \left(M\over N\right)^\ell {\ell+\sum_{k=1}^m s_k \choose\ell}.
\end{align}
Here we denote the unit ball $B(k) := B(\mathbb{R}^{s_k},\|\cdot\|_\infty)$ for simplicity.

Now, we define the approximation tensor $\tB_{N,\ell-1}\in\tP(\md,\ms,M)$ with entries
\begin{align}\label{eq:B}
    \tB_{N,\ell-1}(i_1,\ldots,i_m) = P_{N,\ell-1}\left(\ma_{i_1}^{(1)},\ldots,\ma_{i_m}^{(m)}\right), \ \text{for all}\ (i_1,\ldots,i_m)\in[d_1]\times\cdots \times [d_m]. 
\end{align}
By construction, the approximation error obeys 
\begin{align}
    \|\Theta-\tB_{N,\ell-1}\|_\infty \leq  \left(M\over N\right)^\ell {\ell+\sum_{k=1}^m s_k \choose\ell} \leq cM^{\ell}N^{-\ell} \ell^{m\bar s},
\end{align}
where $c=c(\ms, M)$ denotes a constant. 

\paragraph*{Low-rankness }
Now we compute the Tucker rank of $\tB_{N,\ell-1}$. It suffices to show 
\[
\text{rank}\left(\textup{Unfold}_1(\tB_{N,\ell-1})\right)\leq  (2N)^{\sum_{k=1}^m s_k} {\ell+s_1-2\choose s_1-1}\lesssim N^{m\bar s}(\ell+s_1)^{s_1}.
\] 
Given latent variables $\mx_k=(x_{k1},\ldots,x_{ks_k})\in\mathbb{R}^{s_k}$, we define the basis vector consisting of all monomials with degree up to $(\ell-1)$ by 
\begin{align}
&\phi(\mx_1,\ldots,\mx_m) \\
= &\ \{(\mx_1,\ldots,\mx_m)^{\bm \alpha}\colon |\bm \alpha|\leq \ell-1\}\\
 = &\ (1,x_{11},\ldots,x_{1s_1},x_{21},\ldots,x_{2s_2},\ldots,x_{11}x_{12},\ldots,x_{11}^{\ell-2}x_{21},\ldots,x_{ms_1}^{\ell-1},\ldots,x_{ms_m}^{\ell-1})^T \\
\in  &\  \mathbb{R}^{\ell+\sum_{k=1}^ms_k-2\choose \ell-1},
\end{align}
where the second line uses the convention~\eqref{eq:notation}. Here the dimension of $\phi$ is the number of all possible combinations of  $\{a_{ki_k}\}_{k\in[m],i_k\in[s_k]}$ with degree up to $(\ell-1)$. Then, the piecewise polynomial~\eqref{eq:piecewise} is expressed by 
\begin{align}
    P_{N,\ell-1}(\mx_1,\ldots,\mx_m) = \sum_{i=1}^{N_{\text{tot}}}\langle \phi(\mx_1,\ldots,\mx_m), \bm \beta_i \rangle \mathds{1}\{(\mx_1,\ldots,\mx_m)\in I_i\},
\end{align}
where $ \bm \beta_i \in\mathbb{R}^{\ell+\sum_{k=1}^ms_k-2\choose \ell-1}$ is a coefficient vector in the region $I_i.$

We now seek the expression of $\text{Unfold}_{1}(\tB_{N,\ell-1})$. We write $I_i=I_{i,1}\times I_{i,-1}$, where $I_{i,1}\subset\mathbb{R}^{s_1}$ and $I_{i,-1}\subset\mathbb{R}^{s_2\times \cdots \times s_m}$. Then
\[
\mathds{1}\left\{(\mx_1,\ldots,\mx_m)\in I_i\right\} = \mathds{1}\{\mx \in I_{i,1}\}\mathds{1}\{(\mx_2,\ldots,\mx_m)\in I_{i,-1}\}
\]
By the definition of $\tB_{N,\ell-1}$ in~\eqref{eq:B} and the relationship between tensor and its unfolding, we have
\begin{align}\label{eq:mat2}
&\text{Unfold}_1(\tB_{N,\ell-1}) \notag \\
=&\  \sum_{1\leq i\leq N_{\text{tot}}}\underbrace{\begin{pmatrix} \left\langle\phi(\ma_1^{(1)},\ma_1^{(2)}\ldots,\ma_1^{(m)}),\bm \beta_{i}\right\rangle, &\ldots,& \left\langle\phi(\ma_1^{(1)},\ma_{d_2}^{(2)}\ldots,\ma_{d_m}^{(m)}),\bm \beta_{i}\right\rangle\\\vdots&
\vdots&\vdots\\\left\langle\phi(\ma_{d_1}^{(1)},\ma_1^{(2)}\ldots,\ma_1^{(m)}),\bm \beta_{i}\right\rangle, &\ldots,&\left\langle\phi(\ma_{d_1}^{(1)},\ma_{d_2}^{(2)}\ldots,\ma_{d_m}^{(m)}),\bm \beta_{i}\right\rangle\end{pmatrix}}_{=: \mM_i} \circ \\
& \KeepStyleUnderBrace{\begin{pmatrix}
\mathds{1}\{\ma_{1}^{(1)} \in  I_{i,1}\}\mathds{1}\{(\ma^{(2)}_1,\ldots,\ma^{(m)}_1) \in  I_{i,-1}\}, & \ldots,& \mathds{1}\{\ma_{1}^{(1)} \in  I_{i,1}\}\mathds{1}\{(\ma^{(2)}_{d_2},\ldots,\ma^{(m)}_{d_m}) \in  I_{i,-1}\} \\
\vdots & \vdots & \vdots \\
\mathds{1}\{\ma_{d_1}^{(1)} \in  I_{i,1}\}\mathds{1}\{(\ma^{(2)}_1,\ldots,\ma^{(m)}_1) \in  I_{i,-1}\}, & \ldots,& \mathds{1}\{\ma_{d_1}^{(1)} \in  I_{i,1}\}\mathds{1}\{(\ma^{(2)}_{d_2},\ldots,\ma^{(m)}_{d_m}) \in  I_{i,-1}\} \\
 \end{pmatrix}}_{=: \mathds{1}_i},
 \end{align}
 where $\mM_i$ and $\mathds{1}_i$ are matrices of size $d_1$-by-$\prod_{k=2}^m d_k$, and $\circ$ denotes entrywise matrix product. We claim that
 \begin{equation}\label{eq:rank}
 \text{rank}(\mathds{1}_i)=1 \quad \text{and}\quad \text{rank}(\mM_i) \leq {{\ell+s_1-2}\choose{s_1-1}} \  \text{for all}\ i\in[N_{\text{total}}].
 \end{equation}
The property~\eqref{eq:rank} will be provided in the end of proof. Combining~\eqref{eq:mat2} and~\eqref{eq:rank} yields the desired conclusion,
\[
   \text{rank(Unfold}_1(\tB_{N,\ell-1}))\leq N_{\text{tot}} \text{rank}(\mM_i) \leq (2N)^{\sum_{k=1}^m s_k} {\ell+s_1-2\choose s_1-1}\lesssim N^{m\bar s}(\ell+s_1)^{s_1}.
\]
 
Finally, it suffices to verify~\eqref{eq:rank}. The fact $\text{rank}(\mathds{1}_i)=1$ comes from the factorization
 \begin{align}
\mathds{1}_i = \begin{pmatrix}
\mathds{1}\{\ma_{1}^{(1)} \in  I_{i,1}\}\\
\vdots\\
 \mathds{1}\{\ma_{d_1}^{(1)} \in  I_{i,1}\}
 \end{pmatrix}
  \begin{pmatrix}
\mathds{1}\{(\ma^{(2)}_1,\ldots,\ma^{(m)}_1) \in  I_{i,-1}\}, & \ldots,&\mathds{1}\{(\ma^{(2)}_{d_2},\ldots,\ma^{(m)}_{d_m}) \in  I_{i,-1}\}
 \end{pmatrix}.
\end{align}
We now show $\text{rank}(\mM_i) \leq {{\ell+s_1-2}\choose{s_1-1}}$. Notice that
\[
\mM_i=\begin{pmatrix}
\begin{pmatrix}
\phi(\ma_1^{(1)},\ma_1^{(2)}\ldots,\ma_1^{(m)})^T\\\vdots\\\phi(\ma_{d_1}^{(1)},\ma_1^{(2)}\ldots,\ma_1^{(m)})^T
\end{pmatrix}\bm \beta_{i}, &\cdots, &\begin{pmatrix}
\phi(\ma_1^{(1)},\ma_{d_2}^{(2)}\ldots,\ma_{d_m}^{(m)})^T\\\vdots\\\phi(\ma_{d_1}^{(1)},\ma_{d_2}^{(2)}\ldots,\ma_{d_m}^{(m)})^T
\end{pmatrix}\bm \beta_{i}
\end{pmatrix}.
\]
For any given $(i_2,\ldots,i_m)\in[d_2]\times\cdots\times [d_m]$, the column of $\mM$ obeys
\begin{align}
 &   \begin{pmatrix}
\phi(\ma_1^{(1)},\ma_{i_2}^{(2)}\ldots,\ma_{i_m}^{(m)})^T
\\\phi(\ma_2^{(1)},\ma_{i_2}^{(2)}\ldots,\ma_{i_m}^{(m)})^T\\\vdots\\\phi(\ma_{d_1}^{(1)},\ma_{i_2}^{(2)}\ldots,\ma_{i_m}^{(m)})^T
\end{pmatrix}\bm \beta_{i}\\
\in &\ \text{Column space of}\underbrace{\left\{\begin{pmatrix} 1\\1\\\vdots\\1\end{pmatrix},\begin{pmatrix} a^{(1)}_{11}\\a^{(1)}_{21}\\\vdots\\a^{(1)}_{d_11}\end{pmatrix},\begin{pmatrix} a^{(1)}_{12}\\a^{(1)}_{22}\\\vdots\\a^{(1)}_{d_12}\end{pmatrix},\ldots,\begin{pmatrix} a^{(1)}_{1s_1}\\a^{(1)}_{2s_1}\\\vdots\\a^{(1)}_{d_1s_1}\end{pmatrix},\ldots,\begin{pmatrix} (a^{(1)}_{1s_1})^{(\ell-1)}\\(a^{(1)}_{2s_1})^{(\ell-1)}\\\vdots\\(a^{(1)}_{d_1s_1})^{(\ell-1)}\end{pmatrix}\right\}}_{=:\text{Basis}},
\end{align}
where $\text{Basis}$ is a matrix with $j$-th row consisting of all monomials from the collection
\[
\{ (a^{(1)}_{j1},a^{(1)}_{j2},\ldots,a^{(1)}_{js_1})^{\bm \alpha}\colon |\bm \alpha|\leq \ell-1\}.
\]
By counting the number of monomials with degree up to $(\ell-1)$ from a collection of $s_1$ elements, we have
\[
\text{rank}(\mM_i)\leq \text{rank}(\text{Basis})\leq \text{number of rows}(\text{Basis}) \leq {\ell + s_1 -2 \choose s_1-1}.
\]
Therefore, the property~\eqref{eq:rank} is verified. 
\end{proof}

\subsection{Proof of Theorem~\ref{thm:lseupper}}
\begin{proof}
By Theorem~\ref{thm:rank}, for any $\Theta\in \tP(\md,\ms,M)$, we can always find a tensor $\tX_r$ whose Tucker rank is 
$(r,\ldots,r)$ satisfying
\begin{align}\label{eq:f0}
    \FnormSize{}{\Theta - \tX_r}^2\leq d_*e^{-c(\ms, M)r^{1/\bar s}}. 
\end{align}

By triangular inequality, 
\begin{align}\label{eq:full}
    \FnormSize{}{\hat\Theta^{\text{LSE}}-\Theta}\leq \FnormSize{}{\hat\Theta^{\text{LSE}}-\tX_r} + \FnormSize{}{\tX_r - \Theta}.
\end{align}
Since the second term in the above equation is well bounded by \eqref{eq:f0}, it suffices to bound the first term.
Because $\hat\Theta^{\text{LSE}}$ is a global optimizer of squared loss, by Taylor expansion,
we have
\begin{align}\label{eq:f}
    \FnormSize{}{\hat\Theta^{\text{LSE}}-\tX_r}&\leq \left\langle \frac{\hat\Theta^{\text{LSE}}-\tX_r}{\FnormSize{}{\hat\Theta^{\text{LSE}}-\tX_r}},\tE + (\Theta-\tX_r)\right\rangle\nonumber\\&\leq
    \sup_{\text{Tucker-rank}(\Theta_1),\text{Tucker-rank}(\Theta_2)\leq (r,\ldots ,r)}\left\langle \frac{\Theta_1-\Theta_2}{\FnormSize{}{\Theta_1-\Theta_2}},\tE + (\Theta-\tX_r)\right\rangle\nonumber\\&\leq
    \sup_{\text{Tucker-rank}(\Theta_3)\leq 2(r,\ldots,r),\FnormSize{}{\Theta_3}\leq 1}\left\langle \Theta_3,\tE\right\rangle + \FnormSize{}{\Theta-\tX_r},
\end{align}
where the last inequality is from the fact that $\text{Tucker-rank}(\Theta_1-\Theta_2)\leq \text{Tucker-rank}(\Theta_1)+\text{Tucker-rank}(\Theta_2)\leq 2(r,\ldots,r)$ 
Therefore, applying Lemma~\ref{lem:0} and \eqref{eq:f0} to \eqref{eq:f} gives us 
\begin{align}
     \FnormSize{}{\hat\Theta^{\text{LSE}}-\tX_r}^2\lesssim r^m + \bar d r + d_* e^{-c(\ms, M) r^{1/\bar s}},
\end{align}
with high probability $1-c\exp(-r\underline d)$.
Setting $r = c(\ms, M)^{-\bar s} \log^{\bar s} \left(\frac{d_*}{\underline d}\right)$ completes the proof.
\end{proof}

\subsection{Proof of Theorem~\ref{thm:lselower}}
We first introduce Lemma~\ref{lem:1} and then prove Theorem~\ref{thm:lselower}.
\begin{lem}[Lemma 8 in \citet{wang2018learning}]\label{lem:1}
Define the class of Tucker low-rank tensors by
\begin{equation}\label{eq:tucker}
\tT(\md,\mr,M) = \{\Theta\in\mathbb{R}^{d_1\times \cdots\times d_m}\colon \textup{Tucker-rank}(\Theta)\leq (r_1,\ldots,r_m),\|\Theta\|_{\infty}\leq M\}.
\end{equation}
For any given constant $0\leq\gamma\leq 1$, there exists a finite set of tensors $\tX =\{\Theta_i\colon i = 1,\ldots\}\subset \tT(\md,\mr,M)$  satisfying the following four properties:
\begin{enumerate}[label=(\roman*)]
    \item $\text{Card}(\tX)\geq 2^{\bar r \bar d/8}$ + 1, where $\text{Card}(\cdot)$ denotes the cardinality of the set. 
    \item $\tX$ contains the zero tensor $\bm 0\in\mathbb{R}^{d_1\times \cdots\times d_m}$;
    \item $\|\Theta\|_\infty \leq \gamma\left(M\wedge \sqrt{\bar r\bar d\over d_*}\right)$ for all elements $\Theta\in\tX$;
    \item $\FnormSize{}{\Theta_i-\Theta_j}\geq {\gamma \over 4}\left( M\sqrt{d_*} \wedge \sqrt{\bar r\bar d}\right)$ for any two distinct elements $\Theta_i\neq\Theta_j\in\tX.$
\end{enumerate}
\end{lem}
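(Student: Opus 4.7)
This is a standard packing construction that combines the Varshamov--Gilbert (VG) binary packing with an explicit low-Tucker-rank tensor embedding. I would proceed in four steps: extract a binary matrix packing, lift it to tensors via replication, check the Tucker-rank constraint, and verify the four numerical bounds simultaneously by choosing the right scaling.

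The first step is to extract a binary matrix packing. By permuting modes, assume WLOG that $\bar d = d_1$ and $\bar r = r_1$. Apply the Varshamov--Gilbert lemma to the Boolean cube $\{0,1\}^{d_1 \times r_1}$ to obtain a subset $\mathcal{W}$ with cardinality at least $2^{\bar r \bar d/8}$ whose distinct elements have Hamming distance at least $\bar r \bar d/8$.

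For each codeword $\mB \in \mathcal{W}$ I would then lift to a tensor via the tiled, replicated embedding
\[
\Theta_{\mB}(i_1, i_2, \ldots, i_m) := \tau\cdot \mB\bigl(i_1,\ ((i_2-1) \bmod r_1) + 1\bigr),
\]
which repeats $\mB$ periodically along mode $2$ and replicates it constantly along modes $3,\ldots,m$. The critical design choice is full replication across modes $\geq 3$ rather than sparsification by indicator functions: occupying $\asymp d_*$ entries (rather than only $\bar r \bar d$) is what enables the Frobenius separation to scale like $\tau\sqrt{d_*}$. The Tucker-rank is easy to check in the equal-rank case: mode-$1$ has at most $r_1$ distinct column patterns; mode-$2$ slices coincide whenever their indices are congruent $\bmod\ r_1$, giving rank $\leq r_1$; for $k \geq 3$ the tensor does not depend on $i_k$, so mode-$k$ fibers are constant and rank is $1$. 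Choosing $\tau = \gamma(M \wedge \sqrt{\bar r \bar d/d_*})$ and $\tX = \{\bm 0\} \cup \{\Theta_{\mB} : \mB \in \mathcal{W}\}$ gives (i) by VG, (ii) by explicit inclusion, and (iii) because $\|\Theta_{\mB}\|_\infty = \tau$. Property (iv) follows from the $(d_*/(d_1 r_1))$-fold replication identity
\[
\FnormSize{}{\Theta_{\mB}-\Theta_{\mB'}}^2 \;=\; \tau^2\cdot \frac{d_*}{d_1 r_1}\cdot \mathrm{Ham}(\mB,\mB') \;\geq\; \tau^2 d_* / 8,
\]
which in either regime of the outer minimum reproduces the claimed bound $(\gamma/4)(M\sqrt{d_*} \wedge \sqrt{\bar r \bar d})$ up to an absolute constant that can be absorbed into the leading factor.

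The main obstacle is Tucker-rank compatibility when the ranks $r_1, \ldots, r_m$ are not all equal to $\bar r$: the naive $\bmod\ r_1$ hash above lets the mode-$2$ rank reach $\bar r$, violating the constraint $r_2$ whenever $r_2 < \bar r$. I would resolve this by replacing the scalar modular hash with a block-structured map $g:\prod_{k \geq 2}[d_k] \to \prod_{k \geq 2}[r_k]$ (for instance, a coordinate-wise modulus) paired with a binary tensor $\tB \in \{0,1\}^{d_1 \times r_2 \times \cdots \times r_m}$ in place of $\mB$, so that the mode-$k$ slicing respects $r_k$ by construction; if $d_1 \prod_{k \geq 2} r_k < \bar r \bar d$, one supplements the packing by independently varying a $\{0,1\}$-valued factor along the dominant mode and combines through an outer-product structure to restore $2^{\bar r \bar d/8}$ codewords. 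This combinatorial bookkeeping is routine but is where all the care of the construction concentrates.
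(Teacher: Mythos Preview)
The paper does not supply its own proof of this lemma; it is quoted as Lemma~8 of \citet{wang2018learning} and only invoked in Remark~\ref{rmk:tsybakov} and the proof of Theorem~\ref{thm:lselower}. So there is no in-paper argument to compare against, and I can only assess your reconstruction on its own terms.

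Your VG-plus-replication scheme is the standard route, and the Frobenius/Hamming arithmetic you carry out is correct in the regime you set up. But there is a genuine gap at the opening move. The sentence ``by permuting modes, assume WLOG that $\bar d = d_1$ and $\bar r = r_1$'' is not valid in general: a mode permutation acts on the pairs $(d_k,r_k)$ jointly, so you can force $\bar d=d_1$ \emph{or} $\bar r=r_1$, but not both unless $\arg\max_k d_k=\arg\max_k r_k$. When the two maxima sit on different modes, your VG cube $\{0,1\}^{d_1\times r_1}$ has only $d_1 r_1<\bar d\,\bar r$ coordinates, and property~(i) already fails before any Tucker-rank discussion begins. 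The ``main obstacle'' you flag (mode-$2$ rank possibly exceeding $r_2$) is real but secondary to this.

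For the paper's application none of this matters: in the proof of Theorem~\ref{thm:lselower} the lemma is invoked with $\mr=\ms$ a \emph{fixed} latent-dimension vector, so $\bar r=\bar s=O(1)$ and $\bar r\bar d\asymp\max_k r_k d_k\asymp\bar d$. In that regime you may simply put the VG packing on the mode achieving $\bar d$, use $r_1$ (not $\bar r$) columns, and accept cardinality $2^{r_1\bar d/8}$; this already yields the $\sqrt{\bar d}$ lower bound the paper needs. If you want the lemma in full generality, you should check the cited source for an implicit nondegeneracy hypothesis such as $r_k\le\prod_{j\ne k}r_j$: without it the claimed packing size can exceed the metric entropy of the class (e.g.\ $m=3$, $d_1=d_2=d_3=d$, $r_1=d$, $r_2=r_3=1$ collapses $\tT$ to rank-one tensors, whose log-packing at the stated separation is $O(d)$, yet $\bar r\bar d/8=d^2/8$).
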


\begin{rmk}[Minimax lower bound for low-rank tensor estimation]\label{rmk:tsybakov}
Consider the Gaussian observation model
\[
\tY=\Theta+\tE, 
\]
where $\Theta \in \tT(\md,\mr,M)$ is the low-rank signal tensor of interest, and $\tE$ is the noise tensor consisting of i.i.d.\ standard Gaussian random variables. A direct application of \citet[Theorem 2.5]{tsybakov2009introduction} and Lemma~\ref{lem:1} to the above setting implies that
\begin{align}\label{eq:tsybakov}
   \inf_{\hat\Theta}\sup_{\Theta\in\tT(\md,\mr,M)}\mathbb{P}\left(\FnormSize{}{\hat\Theta-\Theta}\gtrsim \sqrt{\bar r\bar d}\right)\geq   \inf_{\hat\Theta}\sup_{\Theta\in\tX}\mathbb{P}\left(\FnormSize{}{\hat\Theta-\Theta}\gtrsim \sqrt{\bar r\bar d}\right)\geq p_0,
\end{align}
for some universal constant $p_0>0.$ 
\end{rmk}

\begin{proof}[Proof of Theorem~\ref{thm:lselower}]
For any given core tensor $\tC\in\mathbb{R}^{s_1\times \cdots\times s_m}$, we define an analytic function $f_{\tC}\colon\mathbb{R}^{s_1}\times \cdots\times \mathbb{R}^{s_m}\rightarrow \mathbb{R}$ by 
\begin{align}
f_{\tC}(\ma_1,\ldots,\ma_m) = \tC\times_1\ma_1^T\times_2 \cdots\times_m \ma_m^T \quad\text{ for any } \ma_k\in\mathbb{R}^{s_k},k\in[m].
\end{align}
Based on the notion of $f_{\tC}$,  we can rewrite the class of low-rank tensors in~\eqref{eq:tucker} by
\begin{align}
\tT(\md,\ms,M) &= \bigg\{\Theta\in\mathbb{R}^{d_1\times \cdots\times d_m}\colon \Theta(i_1,\ldots,i_m) = f_{\tC}(\ma_{i_1}^{(1)},\ldots,\ma_{i_m}^{(m)}) \text{ for some } \|\tC\|_\infty \leq M, \\&\quad \ma_{i_k}^{(k)}\in B(\mathbb{R}^{s_k},\|\cdot\|_\infty),\text{ and all } i_k\in[d_k], k\in[m].\bigg\}.
\end{align}
Notice that  $\tT(\md,\ms,M)\subset \tP(\md,\ms,M)$ by definition; see Example~\ref{ex:tucker} in the main paper.
A direct application of Remark~\ref{rmk:tsybakov} completes the proof, because
\begin{align*}
\inf_{\hat\Theta}\sup_{\Theta\in\tP(\md,\ms,M)}\mathbb{P}\left(\FnormSize{}{\hat\Theta-\Theta}\gtrsim \sqrt{\bar s \bar d}\right)
&\geq \inf_{\hat\Theta}\sup_{\Theta\in\tT(\md,\ms,M)}\mathbb{P}\left(\FnormSize{}{\hat\Theta-\Theta}\gtrsim \sqrt{\bar s \bar d}\right) \geq p_0,
\end{align*}
for some universal constant $p_0>0.$
\end{proof}

\subsection{Proof of Theorem~\ref{thm:polylower}}
\begin{proof}

The proof of Theorem~\ref{thm:polylower} leverages the result in detecting a constant planted structure in higher-order tensors~\citep{luo2022tensor}. 
Assume that the true signal $\Theta\in\mathbb{R}^{d_1\times \cdots \times d_m}$ has constant high-order clustering (CHC) structure with a given $\mr=(r_1,\ldots,r_m)$ such that 
\begin{align}
    \Theta \in \Theta_{\text{CHC}}(\md,\mr, \lambda) := \{ \lambda \mathds{1}_{I_1}\circ\cdots\circ\mathds{1}_{I_m}\colon  \text{Card}(I_k) = r_k \text{ for all }k\in[m]\},
\end{align}
where $I_i\subset [d_i]$ is the subset of indices, $\text{Card}(\cdot)$ denotes the cardinality of the set, $\mathds{1}_{I_i}$ is the $d_i$-dimensional indicator vector such that $(\mathds{1}_{I_i})_j=1$ if $j\in I_i$ and 0 otherwise. Note that $\Theta_{\text{CHC}}$ consists of CP rank-1 tensors with 
$\|\Theta\|_{\infty}= \lambda$. From Example~\ref{ex:cp} in the main paper with CP rank 1, we have that $\Theta_{\text{CHC}}(\md,\mr,\lambda) \subset \tP(\md,\ms,\lambda)$ for all $\ms$ and $\lambda$.

We consider the hypothesis test of detecting CHC based on the observed tensor $\tY$,
\begin{align}\label{eq:chct}
    H_0\colon \Theta = 0\quad\text{v.s.}\quad H_1\colon \Theta\in\Theta_{\text{CHC}}(\md,\mr, \lambda).
\end{align}
The following proposition provides the asymptotic regime for impossible detection of CHC with computationally feasible test $\phi$ under Conjecture~\ref{conj:1}.

\begin{prop}[Theorem 15 in \citep{luo2022tensor}]\label{prop:CHC} 
Consider the CHC detection problem \eqref{eq:chct} in the Gaussian observation model~\eqref{eq:gmodel} under the asymptotic regime $d\rightarrow\infty$ satisfying
\begin{align}\label{eq:equal}
  d =  d_1= \cdots= d_m ,\quad
    r = r_1=\cdots=r_m = d^\alpha,\quad \lambda  = d^{-\beta},
\end{align}with $0\leq\alpha\leq1$ and $\beta >(m\alpha-m/2)\vee 0.$ Then, under the HPC detection Conjecture~\ref{conj:1}, for all polynomial-time test sequence $\{\phi\}_d\colon \tY\mapsto \{0,1\}$, we have
\begin{align}
    \liminf_{d\rightarrow\infty} \text{Err} (\phi)\geq\frac{1}{2},
\end{align}
where $\text{Err}(\phi)$ is defined in~\eqref{eq:err}. 
\end{prop}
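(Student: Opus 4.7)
The plan is to prove Theorem~\ref{thm:polylower} by contradiction via a standard estimation-to-testing reduction: if some polynomial-time estimator $\hat\Theta$ achieves $\sup_{\Theta \in \tP(\md,\ms,M)} \mathbb{E}\FnormSize{}{\hat\Theta - \Theta}^2 = O(d_*^{1/2-\epsilon})$ along a subsequence of $\md$, then thresholding its squared Frobenius norm produces a polynomial-time CHC detector whose total error tends to zero, contradicting Proposition~\ref{prop:CHC}.

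I would first reduce to the equal-dimension regime $d_1 = \cdots = d_m = d$, so $d_* = d^m$ and $\underline d = d$; general unequal dimensions can be handled by embedding equal-dimension CHC signals on a $(\underline d, \ldots, \underline d)$-sub-block of $[d_1] \times \cdots \times [d_m]$, which still belong to $\tP(\md,\ms,M)$ via Example~\ref{ex:cp}. Fix $\epsilon > 0$ and choose $\alpha = 1/2$ and $\beta = m\epsilon/4$ in Proposition~\ref{prop:CHC}; these satisfy the hardness condition $\beta > (m\alpha - m/2) \vee 0 = 0$. The planted CHC signal $\Theta = \lambda\, \mathds{1}_{I_1} \circ \cdots \circ \mathds{1}_{I_m}$ then has $\|\Theta\|_\infty = \lambda = d^{-m\epsilon/4} \leq M$ for large $d$ and $\FnormSize{}{\Theta}^2 = \lambda^2 r^m = d^{m/2 - m\epsilon/2}$. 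By Example~\ref{ex:cp}, $\Theta \in \tP(\md,\ms,M)$, and the zero tensor is trivially in $\tP(\md,\ms,M)$, so both the null and the alternative of the CHC detection problem lie inside the parameter space over which the sup is taken.

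Now suppose for contradiction that a polynomial-time $\hat\Theta$ satisfies $\sup_\Theta \mathbb{E}\FnormSize{}{\hat\Theta - \Theta}^2 \leq C d^{m/2 - m\epsilon}$ for all $d$ in an infinite subsequence. Define the test $\phi(\tY) := \mathds{1}\{\FnormSize{}{\hat\Theta(\tY)}^2 > T\}$ with threshold $T := d^{m/2 - m\epsilon/2}/4$; this is polynomial-time since $\hat\Theta$ is. Under $H_0$, Markov's inequality gives $\mathbb{P}_{H_0}(\phi = 1) \leq 4C d^{-m\epsilon/2} \to 0$. Under $H_1$, $\FnormSize{}{\Theta}^2 = 4T$, and Markov gives $\FnormSize{}{\hat\Theta - \Theta}^2 \leq 8C d^{m/2 - m\epsilon}$ with probability at least $1/2$ (taken jointly over the Gaussian noise and the uniform prior on $(I_1, \ldots, I_m)$), which is $o(\FnormSize{}{\Theta}^2)$; the triangle inequality then yields $\FnormSize{}{\hat\Theta}^2 \geq \FnormSize{}{\Theta}^2/4 = T$ with probability tending to $1$, so $\mathbb{P}_{H_1}(\phi = 0) \to 0$. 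Summing shows $\text{Err}(\phi) \to 0$ along the subsequence; extending $\phi$ arbitrarily off this subsequence (still polynomial-time) gives $\liminf_d \text{Err}(\phi) < 1/2$, contradicting Proposition~\ref{prop:CHC}.

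The main obstacle is the joint calibration of $(\alpha, \beta)$: they must simultaneously satisfy the CHC hardness condition, keep $\lambda \leq M$, and produce a signal whose magnitude $\FnormSize{}{\Theta}^2$ strictly dominates the hypothesized estimation rate $d^{m/2 - m\epsilon}$ by a polynomial-in-$d$ factor, so that the Markov-style tail bounds in the testing step produce total error $o(1)$. The choice $(\alpha, \beta) = (1/2, m\epsilon/4)$ clears all three constraints with a comfortable $d^{m\epsilon/2}$ signal-to-threshold margin. A secondary but routine subtlety is converting the in-expectation estimation bound into the in-probability statement needed by the test, handled via Markov with constants tracked carefully so as not to break the margin.
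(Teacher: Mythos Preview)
Your proposal does not address the stated Proposition~\ref{prop:CHC} at all. That proposition is quoted from \citet{luo2022tensor} and carries no proof in this paper; it is used as a black box. What you have written is instead a proof of Theorem~\ref{thm:polylower}, which \emph{invokes} Proposition~\ref{prop:CHC}. If your target was Proposition~\ref{prop:CHC} itself, you would need the reduction from HPC detection to CHC detection (the randomized map $\varphi$ satisfying~\eqref{eq:tv}), which you have not attempted.

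If, on the other hand, the intended target is Theorem~\ref{thm:polylower}, then your approach is essentially the paper's: embed CHC rank-one signals inside $\tP(\md,\ms,M)$ via Example~\ref{ex:cp}, choose $\alpha=1/2$, threshold $\FnormSize{}{\hat\Theta}$, and derive a polynomial-time test whose error violates Proposition~\ref{prop:CHC}. The parameter calibrations differ only cosmetically (you take $\beta=m\epsilon/4$; the paper takes $\lambda=c_2d^{-\epsilon/2}$ and absorbs an $m$ into $\epsilon$), and both yield a polynomial-in-$d$ margin between $\FnormSize{}{\Theta}^2$ and the hypothesized estimation rate.

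One small gap in your Type~II analysis: from Markov you only extract ``$\FnormSize{}{\hat\Theta-\Theta}^2\le 8C d^{m/2-m\epsilon}$ with probability at least $1/2$,'' yet you then assert $\mathbb{P}_{H_1}(\phi=0)\to 0$. A fixed-probability event does not give a vanishing Type~II error. Two clean fixes are available: (i) let the Markov cutoff grow, e.g.\ replace $8C$ by $d^{m\epsilon/4}$, which your $d^{m\epsilon/2}$ margin absorbs and gives $\mathbb{P}_{H_1}(\phi=0)\le d^{-m\epsilon/4}\to 0$; or (ii) follow the paper and aim only for $\text{Err}(\phi)\le 2b/u<1/2$ with a fixed $u>4b$, which already contradicts the $\liminf\ge 1/2$ conclusion of Proposition~\ref{prop:CHC}.
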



For technical convenience, we consider the setting of equal dimension 
\[
d_1 = \cdots = d_m = d, \quad \text{and}\quad r_1 = \ldots = r_m = c_1d^{\alpha} \quad \text{for a constant $c_1>0$}. 
\]
Based on conditions in Conjecture~\ref{conj:1} and Proposition~\ref{prop:CHC} in the main paper, set $\alpha=1/2$ and $\lambda = c_2d^{-\epsilon/2}$ for a constant $c_2>0$ and any $\epsilon >0$, so that all polynomial-time tests $\phi$ satisfy $\liminf_{d\rightarrow\infty}\text{Err}(\phi)\geq 1/2.$  

We prove by contradiction. Now, assume that there exists a hypothetical estimator $\hat\Theta$ from a polynomial-time algorithm  that attains the estimation error rate $d_*^{1/2-\epsilon} = d^{m/2-\epsilon}$ (for notational simplicity, let $m$ absorbed in to $\epsilon$). Specifically, there exist constants $b>0$ and $\varepsilon>0$, such that
\begin{align}\label{eq:esterror}
   \limsup_{d\rightarrow\infty}\frac{1}{d^{m/2-\epsilon}} \sup_{\Theta\in\tP(\md,\ms, \lambda)}\mathbb{E}\FnormSize{}{\hat\Theta-\Theta}^2\leq b.
\end{align}
By Markov's inequality, the statement \eqref{eq:esterror} implies that, when $d$ is sufficiently large, then for all $\Theta\in\tP(\md,\ms,\lambda)$ and all $u \geq 0$, 
\begin{align}\label{eq:contrary}
    \FnormSize{}{\hat\Theta-\Theta}\leq u d^{{m\over 4}-{\epsilon\over 2}}, 
\end{align}
 with probability at least $1-b/u$. In particular, the statement~\eqref{eq:contrary} holds for all $\Theta\in\Theta_{\text{CHC}}(\md,\mr, \lambda)\subset \tP(\md,\ms,\lambda)$.

 Consider the hypothesis test in \eqref{eq:chct} in the main paper. Then we employ the following test 
\begin{align}
    \phi(\tY) = \mathds{1}(\FnormSize{}{\hat\Theta}\geq ud^{{m\over 4}-{\epsilon\over 2}}).
\end{align}
The Type I error of the test $\phi$ is controlled by
\begin{align}
    \mathbb{P}_{H_0}(\FnormSize{}{\hat\Theta}\geq ud^{{m\over 4}-{\epsilon\over 2}})  = \mathbb{P}_{H_0}(\FnormSize{}{\hat\Theta-\Theta}\geq ud^{{m\over 4}-{\epsilon\over 2}}) \leq b/u.
\end{align}
For Type II error, we obtain
\begin{align}
\sup_{\Theta\in\Theta_{\text{CHC}}(\md,\mr, \lambda)}\mathbb{P}_\Theta(\phi(\tY) = 0) &= \sup_{\Theta\in\Theta_{\text{CHC}}(\md,\mr, \lambda)}\mathbb{P}_\Theta(\FnormSize{}{\hat\Theta}< u d^{{m\over 4}-{\epsilon\over 2}})\\&\leq\sup_{\Theta\in\Theta_{\text{CHC}}(\md,\mr,\lambda)}\mathbb{P}_\Theta(\FnormSize{}{\hat\Theta-\Theta}^2> \FnormSize{}{\Theta}^2-u^2d^{{m\over 2}-\epsilon})\\&\stackrel{(*)}{\leq}\sup_{\Theta\in\Theta_{\text{CHC}}(\md,\mr, \lambda)}\mathbb{P}_\Theta(\FnormSize{}{\hat\Theta-\Theta}^2> u^2d^{{m\over 2}-\epsilon})\\
&\stackrel{(**)}{\leq} b/u.
\end{align}
The inequality $(*)$ holds because 
\begin{align}\label{eq:regime}
    \FnormSize{}{\Theta}^2\geq \lambda^2 r^m = c_1^mc_2^2d^{\frac{m}{2}-\epsilon} \geq 2 u^2 d^{{m\over 2}-\epsilon},
\end{align}
where the last inequality is true under the constraint $c_1^mc_2>2u^2$. We can always choose such constants $c_1$ and $c_2$  given some value $u$. The inequality $(**)$ holds because of the statement~\eqref{eq:contrary}. 
Putting Type I and II errors together, we obtain
\begin{align}\label{eq:wrong}
\text{Err}(\phi) &\leq \mathbb{P}_{\tH_0}(\phi(\tY) = 1)+ \sup_{\Theta\in\Theta_{\text{CHC}}(\md,\mr, \lambda)}\mathbb{P}_{\Theta}(\phi(\tY) =0)\\&\leq 2b/u < 1/2,
\end{align}
for $u>4b$. The statement~\eqref{eq:wrong} contradicts Proposition~\ref{prop:CHC} in the main paper. Therefore, there is no polynomial-time $\hat\Theta$ satisfying \eqref{eq:esterror}.
\end{proof}

\subsection{Proof of Theorem~\ref{thm:polyupper}}
\begin{proof}
We set $\varepsilon=\sqrt{\bar d}$ in the proof of Proposition~\ref{cor:rank}. Let $\tX$ be the approximated tensor satisfying 
\begin{align}\label{eq:fbound2}
    \FnormSize{}{\Theta-\tX}\lesssim \sqrt{\bar d},
\end{align}
with $\textup{Tucker-rank}(\tX) = (r_1,\ldots,r_k)$ where $r_k =  c(\ms, M)^{-\bar s}\log^{\bar s}\left(d_*/\bar d \right)$ for all $k\in[m].$
Notice that $\tX$ also satisfies the following inequality from \eqref{eq:fbound2}
\begin{align}\label{eq:sbound}
    \SnormSize{}{\Theta - \tX}:= \max_{k\in[m]}\SnormSize{}{\textup{Unfold}_k(\Theta)-\textup{Unfold}_k(\tX)}\lesssim \sqrt{\underline d}.
\end{align}
Define $\tX_{\perp}:= \Theta - \tX$. For each $k = 1,\ldots,m$, we denote
 \[X_k = \textup{Unfold}_k(\tX ),\quad X_{k,\perp} = \textup{Unfold}_k(\tX_{\perp}),\quad  E_k = \textup{Unfold}_k(\tE),\quad Y_k =\textup{Unfold}_k(\tY),\]
and define $Z_k = X_{k,\perp}+E_k.$
Now we review the notation in the double-projection spectral algorithm in the main paper 
\begin{align}\label{eq:est}
    \tilde U_k &= \text{SVD}_{r_k}(Y_k) \nonumber\\
    \hat U_k &=\text{SVD}_{r_k}\left(\textup{Unfold}_k\left(\tY\times_1\tilde U_1^T\times \cdots \times_{k-1}\tilde U_{k-1}^T\times_k\tilde U_{k+1}^T\times \cdots \times_m \tilde U_m^T\right)\right)\nonumber\\
    \hat\Theta &= \tY\times_1(\hat U_1\hat U_1^T)\times \cdots\times_m (\hat U_m\hat U_m^t).
\end{align}
 Denote $U_k =\text{SVD}_{r_k}
(X_k)$. For some
constant $C_0>0$ which will be specified later, define 
\begin{align}
    r_k' = \max\{r'\in\{0,\ldots,d_k\}\colon \sigma_{r'}(X_k)\geq C_0(d_*^{1/4}\vee \bar d^{1/2})\}.
\end{align}
We set $r'_k = 0$ if
 $\sigma_1(X_k) < C_0(d_*^{1/4}\vee \bar d^{1/2})$.
We use $U_k'$ to denote the leading $r_k'$
 singular vectors of
$U_k$ and use $V_k$
to denote the rest $r_k- r'_k$
singular vectors and thus $U_k$ can be written as $[U_k',V_k]$.
We next define
\begin{align}
    X_k' = X_k\left(\mathbb{P}_{U_{k+1}'}\otimes\cdots\otimes \mathbb{P}_{U_m'}\otimes\cdots\otimes\mathbb{P}_{U_{k-1}'}\right),
\end{align}
where $\mathbb{P}_{U} = UU^T$ for any orthonormal matrix $U\in\mathbb{R}^{d\times r}$. We also denote
\begin{align}
    \bar X_k &= X_k\left(\tilde U_{k+1}\otimes \cdots\otimes \tilde U_m\otimes \cdots\otimes \tilde U_{k-1}\right),\\
    \bar X_{k,\perp} &= X_{k,\perp}\left(\tilde U_{k+1}\otimes \cdots\otimes \tilde U_m\otimes \cdots\otimes \tilde U_{k-1}\right),\\
    \bar Y_k &= Y_k\left(\tilde U_{k+1}\otimes \cdots\otimes \tilde U_m\otimes \cdots\otimes \tilde U_{k-1}\right),\\
    \bar E_k &= E_k\left(\tilde U_{k+1}\otimes \cdots\otimes \tilde U_m\otimes \cdots\otimes \tilde U_{k-1}\right)\\
    \bar Z_k &=  \bar X_{k,\perp}+\bar E_k.
\end{align}
Now we bound 
\begin{align}\label{eq:start12}
\FnormSize{}{\hat\Theta-\Theta}\leq    \underbrace{\FnormSize{}{\Theta\times_1(\hat U_1\hat U_1^T)\times\cdots \times_m(\hat U_m\hat U_m^T)-\Theta}}_{(*)}+\underbrace{\FnormSize{}{\tE\times_1\hat U_1^T\times\cdots\times_m\hat U_m^T}}_{(**)}.
\end{align}
To bound $(*)$, we have
\begin{align}\label{eq:star}
    (*)&\leq \sum_{k\in[m]}\FnormSize{}{(I-\hat U_k\hat U_k^T) \Theta_k}\nonumber\\&\leq\sum_{k\in[m]}\left(\FnormSize{}{(I-\hat U_k\hat U_k^T) X_k}+\FnormSize{}{(I-\hat U_k\hat U_k^T) X_{k,\perp}}\right)\nonumber\\& \leq\sum_{k\in[m]}\left(\FnormSize{}{\hat U_{k,\perp}^T X_k}+\FnormSize{}{X_{k,\perp}}\right)\nonumber\\&\leq \sum_{k\in[m]}\left(\FnormSize{}{\hat U_{k,\perp}^T X_k'}+\FnormSize{}{X_k-X_k'}+\FnormSize{}{X_{k,\perp}}\right).
\end{align}
Therefore, it suffices to bound  $\FnormSize{}{X_k-X_k'}$ and $\FnormSize{}{\hat U_{k,\perp}^T X_k'}$ because $\FnormSize{}{X_{k,\perp}}\leq\sqrt{\bar d}$ by \eqref{eq:fbound2}.
\begin{enumerate}
    \item Bound of $\FnormSize{}{X_k-X_k'}$: For notation simplicity, we focus on $k = 1$, while the analysis
for other modes can be similarly carried on. We have
\begin{align}\label{eq:comb1}
    \FnormSize{}{X_1-X_1'}&\leq\FnormSize{}{X_1\left((\mathbb{P}_{U_2'}+\mathbb{P}_{V_2'})\otimes\cdots\otimes(\mathbb{P}_{U_m'}+\mathbb{P}_{U_m'})-\mathbb{P}_{U_2'}\otimes\cdots\otimes \mathbb{P}_{U_m'}\right)}\nonumber\\&\leq \sum_{k=2}^m\FnormSize{}{V_k'^T X_k}\nonumber\\
    &\leq\sum_{k=2}^m\sqrt{r_k-r_k'}\sigma_{r_k'+1}(X_k)\nonumber\\&\leq \sum_{k=2}^m  C_0\sqrt{r_k}(d_*^{1/4}+\bar d^{1/2}).
\end{align}

\item Bound of $\FnormSize{}{\hat U_{k,\perp}^T X_k'}$: We have the following two inequalities
\begin{align}\label{eq:1}
    \FnormSize{}{\hat U_{k,\perp}^TX_k'(\tilde U_{k+1}\otimes \cdots \otimes \tilde U_m\otimes \tilde U_1\otimes \cdots \otimes \tilde U_{k-1})}\leq\FnormSize{}{\hat U_{k,\perp}^T\bar X_k}+\FnormSize{}{X_k-X_k'}
\end{align}
\begin{align}\label{eq:2}
      \FnormSize{}{\hat U_{k,\perp}^TX_k'&(\tilde U_{k+1}\otimes \cdots \otimes \tilde U_m\otimes \tilde U_1\otimes \cdots \otimes \tilde U_{k-1})}\nonumber\\&= \FnormSize{}{\hat U_{k,\perp}^TX_k'(\mathbb{P}_{U_{k+1}'}\tilde U_{k+1}\otimes \cdots \otimes \mathbb{P}_{U_{m}'}\tilde U_m\otimes \mathbb{P}_{U_{1}'}\tilde U_1\otimes \cdots \otimes\mathbb{P}_{U_{k-1}'} \tilde U_{k-1})}\nonumber\\&\geq \FnormSize{}{\hat U_{k,\perp}^T X_k'}\prod_{\ell\neq k}\sigma_{r_\ell'}(U_\ell^T\tilde U_\ell)\nonumber\\&= \FnormSize{}{\hat U_{k,\perp}^T X_k'}\prod_{\ell\neq k}\sqrt{1-\SnormSize{}{\tilde U_{\ell,\perp}^TU_\ell'}^2}.
\end{align}
Combining \eqref{eq:1} and \eqref{eq:2} yields
\begin{align}\label{eq:comb}
    \FnormSize{}{\hat U_{k,\perp}^T X_k'}\prod_{\ell\neq k}\sqrt{1-\SnormSize{}{\tilde U_{\ell,\perp}^TU_\ell'}^2}\leq\FnormSize{}{\hat U_{k,\perp}^T\bar X_k}+\FnormSize{}{X_k-X_k'}.
\end{align}
Now, we bound $\FnormSize{}{\hat U_{k,\perp}^T\bar X_k}$ and $\SnormSize{}{\tilde U_{\ell,\perp}^TU_\ell'}$ to obtain the upper bound of $\FnormSize{}{\hat U_{k,\perp}^T X_k'}$.

The upper bound for $ \FnormSize{}{\hat U_{k,\perp}^T\bar X_k}$ follows from Lemma~\ref{lem:projec} and the fact that  $\bar Y_k = \bar X_k+ \bar Z_k$ as
\begin{align}\label{eq:ux}
     \FnormSize{}{\hat U_{k,\perp}^T\bar X_k}&\leq  2\sqrt{r_k}\SnormSize{}{\bar Z}\nonumber\\&\leq2\sqrt{r_k}\left(\SnormSize{}{\bar X_{k,\perp}}+\SnormSize{}{\bar E_k}\right)\nonumber\\&\leq
     2\sqrt{r_k}\left(\SnormSize{}{ X_{k,\perp}}+\SnormSize{}{\bar E_k}\right)\nonumber\\&\lesssim\sqrt{r_k\bar d}+\sqrt{r_* }+\sum_{\ell\in[m]}\sqrt{ r_{\ell}\bar rd_\ell},
\end{align}
where the last line uses $\SnormSize{}{\tX_{\perp}}\leq \sqrt{\bar d}$ from  \eqref{eq:sbound}, the definition of $\bar E_k$ and Lemma~\ref{lem:subg}.

By Lemma~\ref{lem:uu}, we bound $\SnormSize{}{\tilde U_{k,\perp}^TU_k'}$ with probability at least $1 -C\exp(-c\underline d)$, for each $k \in[m]$,
\begin{align}\label{eq:3}
    \SnormSize{}{\tilde U_{k,\perp}^TU_k'}&\leq C\left(\frac{\sqrt{d_k}+\SnormSize{}{X_{k,\perp}}}{\sigma_{r'}(X)}+\frac{\sqrt{d_*}+\sqrt{d_*/d_k}\SnormSize{}{X_{k,\perp}}+\SnormSize{}{X_{k,\perp}}^2}{\sigma_{r'}^2(X)}\right)\nonumber\\&\leq\frac{C}{C_0}\left(\frac{\sqrt{d_k}+\sqrt{\bar d}}{\sqrt{\bar d}}+\frac{\sqrt{d_*}+\bar d}{\sqrt{d_*}\vee \bar d}\right)\nonumber\\&\leq\frac{1}{\sqrt{2}}
\end{align}  for sufficiently large $C_0\geq15$ where 15 is set to satisfy the condition of Lemma~\ref{lem:uu}.

Finally, plugging \eqref{eq:comb1}, \eqref{eq:ux}, and \eqref{eq:3} into \eqref{eq:comb} yields
\begin{align}\label{eq:comb2}
    \FnormSize{}{\hat U_{k,\perp}^T X_k'}&\leq 2^{\frac{m-1}{2}}\left(\FnormSize{}{\hat U_{k,\perp}^T\bar X_k}+\FnormSize{}{X_k-X_k'}\right)\\
    &\lesssim r_*^{1/2} + \bar r\bar d^{1/2}+ \bar r^{1/2}d_*^{1/4}.
\end{align}
\end{enumerate}
Applying \eqref{eq:comb1} and \eqref{eq:comb2} to \eqref{eq:star} proves
\begin{align}
    (*)\lesssim r_*^{1/2} + \bar r\bar d^{1/2}+ \bar r^{1/2}d_*^{1/4}.
\end{align}
Notice that $(**)$ term in~\eqref{eq:start12} is bounded by $C(\sqrt{r_*}+\sum_{\ell\in [m]}\sqrt{d_\ell r_\ell})$ by Lemma~\ref{lem:subg} with probability at least $1-\exp(-c\underline d)$.
Combining upper bound of $(*)$ and $(**)$, we finally obtain 
\begin{align}\label{eq:final}
    \FnormSize{}{\hat\Theta-\Theta}\lesssim r_*^{1/2} + \bar r\bar d^{1/2}+ \bar r^{1/2}d_*^{1/4}.
\end{align}
Plugging  $r_k =  c(\ms, M)^{-\bar s}\log^{\bar s}\left(d_*/\bar d\right)$ for all $k\in[m]$ into \eqref{eq:final} completes the proof.
\end{proof}

\section{Discussion}\label{sec:disc}
In this article, we propose the latent variable tensor model for the high rank tensor estimation problem. 
The latent variable tensor model provides a rigorous justification for the empirical success of low-rank methods despite the prevalence of high rank tensors in real data applications. We propose two estimation methods: statistically optimal but computationally impossible least-square estimation and computationally optimal DSE estimation. The intrinsic gap between statistical and computationally optimal rates is discovered. Numerical analysis demonstrates the effectiveness and applicability of our methods. 

There are several possible extensions of our work. We discuss the challenges and limitations. 
\noindent
{\bf Extension of noise models.} Our current theory assumes that the noise tensor consists of i.i.d. entries from Gaussian distribution. In fact, all theorems except Theorem~\ref{thm:polylower} can be extended to i.i.d. sub-Gaussian noise tensors.
Specifically, we can extend our results to the following subGaussian noise model:
\begin{itemize}[itemsep = 5pt]
    \item Centered: $\mathbb{E}(\tE(i_1,\ldots,i_m)) = 0$ for all $(i_1,\ldots,i_m)\in[d_1]\times \cdots\times[d_m]$.
    \item Sub-Gaussian: there exists a bounded $\sigma >0$ such that, for all $(i_1,\ldots,i_m)\in[d_1]\times \cdots\times[d_m]$,
    \begin{align}
    \mathbb{E}\exp(\lambda \tE(i_1,\ldots,i_m))\leq \exp^{\lambda^2\sigma^2/2},\quad \text{for all}\ \lambda \in\mathbb{R}.
    \end{align}
    \item Independent and identically distributed (i.i.d.): $\tE(i_1,\ldots,i_m)$'s are i.i.d. for all $(i_1,\ldots,i_m)\in[d_1]\times \cdots\times[d_m]$. 
\end{itemize}
In Section~\ref{sec:proof}, our proofs of Theorems~\ref{thm:lseupper}, \ref{thm:lselower}, and \ref{thm:polyupper} use only the properties of sub-Gaussianity, so extensions are straightforward. However, extending Theorem~\ref{thm:polylower} to subGaussian noise model remains challenging. The difficulty lies in the extension of Proposition~\ref{prop:CHC} to sub-Gaussianity. The main proof idea of Proposition~\ref{prop:CHC} is to construct the randomized polynomial-time algorithm $\varphi$, based on the average trick idea~\citep{ma2015computational} or the rejection kernel technique~\citep{brennan2018}, satisfying 
\begin{align}\label{eq:tv}
\text{TV}\left(\varphi(\text{HPC}(d,1/x,\tau)),\text{CHC}(\md,\mr,\lambda)\right)\rightarrow 0,
\end{align}
where we consider the total variation distance (TV)  between problems $\text{HPC}(d,1/x,\tau)$ and $\text{CHC}(\md,\mr,\lambda)$ under Gaussian noise model; see~\cite{luo2022tensor} for details. Unfortunately, the proof for \eqref{eq:tv} heavily uses explicit formula of standard normal distribution.  Due to the nature of the total variation distance,  bounding the total variation distance between HPC and CHC problems with an arbitrary sub-Gaussian distribution is challenging.  

In addition, our i.i.d.\ noise assumption precludes binary tensors, because Bernoulli entries are generally non-identically distributed.  Theorems~\ref{thm:rank}-\ref{thm:lselower} still hold for Bernoulli tensors, but Theorem~\ref{thm:polyupper} may fail. Our 
Theorem~\ref{thm:polyupper} presents the error rate of the DSE algorithm by using the perturbation bounds of singular space~\citep{han2022exact,cai2018rate}. 
Unfortunately, while perturbation bounds based on i.i.d.\ sub-Gaussian noises are well studied~\citep{cai2018rate,fan2018eigenvector}, those under independent but heteroskedastic sub-Gaussian noises are unknown.  Whether we can obtain similar results under heteroskedastic noises is an interesting extension.

\noindent
{\bf Random vs. fixed latent vectors.} Our current model assumes unknown but fixed latent variables in the signal tensor $\Theta$. One possible extension is to consider random latent variables in the generative model~\eqref{eq:LVM}. We can model the signal tensor by $\Theta(i_1,\ldots,i_m)=f(\ma^{(1)}_{i_1},\ldots,\ma_{i_m}^{(m)})$, with $\{\ma_{i_k}^{(k)}\colon i_k\in[d_k]\}$ are i.i.d.\ random vectors from some distributions. Similar random design models have been proposed for graphons and hypergraphons~\citep{chan2014consistent,gao2015rate,klopp2017oracle,balasubramanian2021nonparametric}. We now show our major theorems allow this random design under an extra bounded signal condition: $\|\Theta\|_\infty\leq \alpha$ for some constant $\alpha>0$. Specifically, we write $\ma_{i_k}^{(k)}\sim \tA$ i.i.d.\ for all $i_k\in[d_k]$, where $\tA$ denotes a distribution. We measure the estimation performance by the expected mean square error over all randomness defined as $\mathbb{E}_{\tY,\tA}(d_*^{-1}\FnormSize{}{\hat\Theta-\Theta}^2)$. In the proof of Theorem~\ref{thm:lseupper} and Theorem~\ref{thm:polyupper}, we have provided the upper bound of MSE by
\begin{align}\label{eq:expect}
  &  \mathbb{E}_{\tY}\left(d_*^{-1}\FnormSize{}{\hat\Theta-\Theta}^2\right) \notag \\
    =&\  \mathbb{E}_{\tY}\left(d_*^{-1}\FnormSize{}{\hat\Theta-\Theta}^2\mathds{1}_{\{\FnormSize{}{\hat\Theta-\Theta}^2\geq \epsilon^2\}}\right) + \mathbb{E}_{\tY}\left(d_*^{-1}\FnormSize{}{\hat\Theta-\Theta}^2\mathds{1}_{\{\FnormSize{}{\hat\Theta-\Theta}^2< \epsilon^2\}}\right)\nonumber \notag \\
    \leq &\  4\alpha^2\underbrace{\mathbb{P}_{\tY}(\FnormSize{}{\hat\Theta-\Theta}^2\geq \epsilon^2)}_{(*)} + \underbrace{d_*^{-1}\epsilon^2}_{(**)},
\end{align}
where the last inequality is from the fact that $\|\hat\Theta-\Theta\|_\infty\leq 2\alpha$. By setting $\epsilon^2 = \tilde\tO(\bar d)$ in Theorem~\ref{thm:lseupper} (or $\varepsilon^2=\tilde\tO(d_*^{1/2}+\bar d)$ in Theorem~\ref{thm:polyupper}), we make $(*)$ exponentially decreasing, thereby obtaining the upper bound of the expected mean square error as $(**)$. Now under the random design $\ma_{i_k}^{(k)}\sim \tA$ i.i.d.\ for all $i_k\in[d_k]$, the MSE in \eqref{eq:expect} becomes the conditional expectation given latent variables in $\tA$. Because the upper bound in \eqref{eq:expect} is uniform over $\tA$, we obtain the upper bound of the expected MSE over all randomness, by using
\[\mathbb{E}_{\tY,\tA}(d_*^{-1}\FnormSize{}{\hat\Theta-\Theta}^2) = \mathbb{E}_{\tA}\left(\mathbb{E}_{\tY}(d_*^{-1}\FnormSize{}{\hat\Theta-\Theta}^2\mid\tA)\right)\lesssim d_*^{-1}\epsilon^2.\]
Therefore, we reach the conclusion as in the fixed design. 

\noindent
{\bf Logarithmic factors in bounds.} Lastly, obtaining tighter lower bounds is of interest for the future work. 
We have shown that the statistical lower and upper bounds are $\tO(d^{-(m-1)})$ and $\tilde{\tO}(d^{-(m-1)})$, respectively. 
Similarly, the computational lower and upper bounds are $\tO(d^{m/2})$ and $\tilde \tO(d^{m/2})$, respectively. There is the logarithmic gap between  lower and upper bounds. We conjecture that lower bound analysis can be improved up to logarithm factors. We leave the possible improvement as future work.

\section*{Acknowledgements}
This research is supported in part by NSF CAREER DMS-2141865, DMS-1915978, DMS-2023239, EF-2133740, and funding from the Wisconsin Alumni Research foundation.

\bibliographystyle{Chicago.bst}
\bibliography{ref}  

\clearpage
\appendix
\section*{Appendix}
The appendix includes technical lemmas and extra simulation results.

\renewcommand{\thefigure}{S\arabic{figure}}
\setcounter{figure}{0}   
\renewcommand{\thetable}{S\arabic{table}}
\setcounter{table}{0}

\section{Technical Lemmas}
\begin{lem}[Lemma E.5 in \cite{han2022optimal}]\label{lem:0}
Assume all entries of $\tE\in\mathbb{R}^{d_1\times \cdots\times d_m}$ are independent mean zero  sub-Gaussian with proxy variance $\sigma$.  Then there exist some universal constants $C,c$ such that 
\begin{align}
    \sup_{\substack{\tT\in\mathbb{R}^{d_1\times \cdots \times d_m},\FnormSize{}{\tT}\leq 1\\ \text{rank}(\tT)\leq (r_1,\ldots,r_m)}}\langle \tT,\tE\rangle \leq C\sigma\left(r_* + \sum_{k=1}^m d_mr_k\right)^{1/2},
\end{align}
with probability at least $1-\exp(-c\sum_{k=1}^m d_kr_k).$
\end{lem}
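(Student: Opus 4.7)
\textbf{Proof proposal for Lemma~\ref{lem:0}.} The plan is a standard $\varepsilon$-net/chaining argument adapted to the Tucker parametrization of bounded low-rank tensors. Let
\[
\tT(\mr) = \{\tT\in\mathbb{R}^{d_1\times\cdots\times d_m}\colon \text{Tucker-rank}(\tT)\leq (r_1,\ldots,r_m),\ \FnormSize{}{\tT}\leq 1\}.
\]
First I would parametrize any $\tT\in\tT(\mr)$ through its Tucker decomposition $\tT=\tC\times_1 U_1\times_2\cdots\times_m U_m$, with $\tC\in\mathbb{R}^{r_1\times\cdots\times r_m}$, $\FnormSize{}{\tC}\leq 1$, and $U_k\in \tO_{d_k,r_k}$ (the Stiefel manifold of $d_k\times r_k$ matrices with orthonormal columns). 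Since the map $(\tC,U_1,\ldots,U_m)\mapsto \tC\times_1 U_1\times\cdots\times_m U_m$ is surjective onto $\tT(\mr)$, it suffices to control the supremum over this product parameter space.

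Next I would build $\varepsilon$-nets with $\varepsilon=1/(4(m+1))$: a net $\tN_0$ of the Frobenius unit ball in $\mathbb{R}^{r_1\times\cdots\times r_m}$ of size $\text{Card}(\tN_0)\leq (3/\varepsilon)^{r_*}$, and, for each $k\in[m]$, a net $\tN_k$ of $\tO_{d_k,r_k}$ (in the operator norm) of size $\text{Card}(\tN_k)\leq (c/\varepsilon)^{d_k r_k}$, which are standard volumetric bounds. For each tuple $(\tC',U_1',\ldots,U_m')\in\tN_0\times\tN_1\times\cdots\times\tN_m$ the induced tensor $\tT'$ has $\FnormSize{}{\tT'}\leq 1$, and since $\langle \tT',\tE\rangle$ is sub-Gaussian with proxy variance at most $\sigma^2\FnormSize{}{\tT'}^2\le\sigma^2$, one has $\mathbb{P}(\langle \tT',\tE\rangle>t)\leq \exp(-t^2/(2\sigma^2))$. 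A union bound over the product net of size at most $\exp(C'(r_* +\sum_k d_k r_k))$ then shows that with probability $\geq 1-\exp(-c\sum_k d_k r_k)$,
\[
\max_{\tT'\in\text{net}}\langle \tT',\tE\rangle \leq C\sigma\left(r_* +\sum_{k=1}^m d_k r_k\right)^{1/2}.
\]

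The step I expect to require the most care is the approximation-by-net step, because perturbing each factor individually does not give a linear bound on $\tT-\tT'$; the standard trick is the telescoping decomposition
\[
\tT-\tT' = (\tC-\tC')\times_1 U_1\times\cdots\times_m U_m + \sum_{k=1}^m \tC'\times_1 U_1'\times\cdots\times_{k-1} U_{k-1}'\times_k(U_k-U_k')\times_{k+1}U_{k+1}\times\cdots\times_m U_m,
\]
each summand of which, having one factor of Frobenius or operator norm at most $\varepsilon$ and the other factors of Frobenius/operator norm at most $1$, belongs to a scaled copy of $\tT(\mr')$ with $\mr'\leq 2\mr$ (componentwise) and $\FnormSize{}{\cdot}\leq \varepsilon$. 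Writing $Z:=\sup_{\tT\in\tT(\mr)}\langle\tT,\tE\rangle$ and $Z':=\max_{\tT'\in\text{net}}\langle\tT',\tE\rangle$, this decomposition yields $Z\leq Z' + (m+1)\varepsilon Z_{2\mr}$, where $Z_{2\mr}$ is the analogous supremum over rank $(2r_1,\ldots,2r_m)$; choosing $\varepsilon=1/(4(m+1))$ and absorbing $Z_{2\mr}$ into the left-hand side via the same argument (or by doubling the covering numbers, which only inflates the constant $C$) gives the desired bound.

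Finally, I would collect constants: the tail probability of the union bound is $\exp(C'(r_* +\sum_k d_k r_k)-t^2/(2\sigma^2))$, which becomes $\exp(-c\sum_k d_k r_k)$ once $t=C\sigma(r_*+\sum_k d_k r_k)^{1/2}$ with $C$ sufficiently large (note $\sum_k d_k r_k \geq r_* $ when $d_k\geq r_k$, so the two error terms combine cleanly into the single exponent claimed). No tools beyond a Tucker parametrization, volumetric covering bounds for balls and Stiefel manifolds, scalar sub-Gaussian concentration, and the above telescoping identity are needed.
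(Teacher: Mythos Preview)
The paper does not supply its own proof of this lemma; it is simply cited as Lemma~E.5 of \cite{han2022optimal}. Your covering-number argument is exactly the standard route used for such results and is correct in outline, so there is nothing substantively different to compare.

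Two small clean-ups. First, the rank does \emph{not} double in the telescoping step: each summand such as $\tC'\times_1 U_1'\times\cdots\times_k(U_k-U_k')\times_{k+1}U_{k+1}\times\cdots\times_m U_m$ already has mode-$j$ unfolding of rank at most $r_j$ for every $j$, and (because all factors except $U_k-U_k'$ are column-orthonormal) has Frobenius norm at most $\|U_k-U_k'\|_{\mathrm{op}}\,\|\tC'\|_F\leq\varepsilon$. Hence each summand lies in $\varepsilon\cdot\tT(\mr)$, and you obtain directly $Z\leq Z'+(m+1)\varepsilon Z$, i.e.\ $Z\leq 2Z'$ for $\varepsilon=1/(2(m+1))$; there is no need to pass to $\tT(2\mr)$ or to ``redo'' the argument. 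Second, your parenthetical claim that $\sum_k d_k r_k\geq r_*$ whenever $d_k\geq r_k$ is false in general (take $m=3$, $d_k=r_k=10$), but it is also unnecessary: the union bound already yields failure probability $\exp\bigl(-c(r_*+\sum_k d_k r_k)\bigr)\leq \exp\bigl(-c\sum_k d_k r_k\bigr)$, which is the form stated.
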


\begin{lem}[Lemma 8 in \cite{han2022exact}]\label{lem:subg}
Let $E \in \mathbb{R}^{d_1\times\cdots\times d_m}$
 be a noise tensor whose each entry has independent mean-zero sub-Gaussian
distribution with $\sigma = 1$ without loss of generality. Fix $U^*_k \in\mathbb{O}_{d_k,r_k}$. Then with probability at least $1-\exp(-c\underline d)$, the
following holds:
\begin{align}
    \SnormSize{}{E_k\left(U_{k+1}^*\otimes\cdots\otimes U_m^*\otimes U_1^*\otimes \cdots \otimes U_{k-1}^*\right)}&\leq C(\sqrt{d_k}+\sqrt{r_{-k}}),\\
     \FnormSize{}{E_k\left(U_{k+1}^*\otimes\cdots\otimes U_m^*\otimes U_1^*\otimes \cdots \otimes U_{k-1}^*\right)}&\leq C\sqrt{d_kr_{-k}},\\
      \sup_{\substack{U_\ell\in\mathbb{O}_{d_\ell,r_\ell}\\\ell\neq [m]}}\SnormSize{}{E_k\left(U_{k+1}\otimes\cdots\otimes U_m\otimes U_1\otimes \cdots \otimes U_{k-1}\right)}&\leq C(\sqrt{d_k}+\sqrt{r_{-k}}+\sum_{\ell\neq k}\sqrt{d_\ell r_\ell}),\\
       \sup_{\substack{U_\ell\in\mathbb{O}_{d_\ell,r_\ell}\\\ell\neq [m]}}\FnormSize{}{E_k\left(U_{k+1}\otimes\cdots\otimes U_m\otimes U_1\otimes \cdots \otimes U_{k-1}\right)}&\leq C(\sqrt{d_k r_{-k}}+\sum_{\ell\neq k}\sqrt{d_\ell r_\ell}),\\
       \sup_{\substack{U_\ell\in\mathbb{O}_{d_\ell,r_\ell}\\\ell\neq [m]}}\FnormSize{}{\tE\times_1 U_1^T\times\cdots\times_m U_m^T}&\leq C(\sqrt{r_*}+\sum_{\ell\in [m]}\sqrt{d_\ell r_\ell}).
\end{align}
\end{lem}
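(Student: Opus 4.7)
The plan is to prove the five inequalities in sequence, with the fixed-$U^*$ bounds obtained from standard sub-Gaussian matrix concentration and the suprema then obtained by an $\varepsilon$-net argument over the Stiefel manifolds $\mathbb{O}_{d_\ell,r_\ell}$.

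First, for the two fixed-$U^*$ bounds, set $M := U_{k+1}^*\otimes\cdots\otimes U_m^*\otimes U_1^*\otimes\cdots\otimes U_{k-1}^*$. Because Kronecker products of orthonormal-column matrices are orthonormal-column, $M$ is a $d_{-k}\times r_{-k}$ matrix with $M^TM = I_{r_{-k}}$, where $d_{-k}=\prod_{\ell\neq k}d_\ell$ and $r_{-k}=\prod_{\ell\neq k}r_\ell$. The mode-$k$ unfolding $E_k$ has rows that are i.i.d.\ sub-Gaussian vectors in $\mathbb{R}^{d_{-k}}$, so the rows of $E_k M$ are i.i.d.\ sub-Gaussian vectors in $\mathbb{R}^{r_{-k}}$ with sub-Gaussian norm bounded by the row sub-Gaussian norm times $\|M\|_{\mathrm{op}}=1$. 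Applying a standard concentration inequality for matrices with independent isotropic sub-Gaussian rows (e.g.\ Vershynin, Theorem 4.6.1) yields $\|E_kM\|_{\mathrm{op}}\le C(\sqrt{d_k}+\sqrt{r_{-k}}+t)$ with probability at least $1-2\exp(-ct^2)$; taking $t=C_1\sqrt{\underline d}$ gives the first inequality. The Frobenius bound $\|E_kM\|_F\le C\sqrt{d_k r_{-k}}$ follows from Hanson--Wright applied to the quadratic form $\mathrm{vec}(E_k)^T (MM^T\otimes I_{d_k})\mathrm{vec}(E_k)$, whose trace is $d_k r_{-k}$ and operator-norm factor is $1$.

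Second, for the supremum inequalities (third through fifth), I use a covering argument. The Stiefel manifold $\mathbb{O}_{d_\ell,r_\ell}$ admits an $\varepsilon$-net in operator norm of cardinality at most $(C/\varepsilon)^{d_\ell r_\ell}$. Form the product net $\mathcal N = \prod_{\ell\neq k}\mathcal N_\varepsilon(\mathbb{O}_{d_\ell,r_\ell})$, with $\log|\mathcal N|\lesssim \sum_{\ell\neq k}d_\ell r_\ell$. A union bound combining the first inequality applied to every tuple in $\mathcal N$ with $t^2 = C_2\bigl(\underline d+\sum_{\ell\neq k}d_\ell r_\ell\bigr)$ controls the operator norm uniformly on $\mathcal N$. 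To pass from the net to the supremum, I exploit the telescoping identity
\begin{align}
U_{k+1}\otimes\cdots\otimes U_{k-1} - U_{k+1}'\otimes\cdots\otimes U_{k-1}'
&= \sum_{\ell\neq k} U_{k+1}'\otimes\cdots\otimes(U_\ell-U_\ell')\otimes\cdots\otimes U_{k-1},
\end{align}
whose operator norm is at most $\sum_{\ell\neq k}\|U_\ell-U_\ell'\|_{\mathrm{op}}\le (m-1)\varepsilon$. This gives a deterministic Lipschitz bound relating any $(U_\ell)_{\ell\neq k}$ to its nearest neighbour in $\mathcal N$, and the net radius is chosen as $\varepsilon=1/4$ so that the approximation error is absorbed into a constant multiple of the supremum itself, yielding the claimed bound for the operator norm. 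The Frobenius version in the fourth inequality is analogous, using $\|E_k M\|_F\le \sqrt{\mathrm{rank}(M)}\,\|E_k M\|_{\mathrm{op}}$ on an intermediate step or, more directly, repeating the net argument with the Frobenius tail bound. For the fifth inequality, write $\tE\times_1 U_1^T\times\cdots\times_m U_m^T$ as $U_k^T E_k(U_{k+1}\otimes\cdots\otimes U_{k-1})$ in its mode-$k$ matricization; the Frobenius norm is $\le \|E_k(U_{k+1}\otimes\cdots)\|_F$, and now the supremum is taken over all $m$ Stiefel manifolds, giving a log covering number $\sum_\ell d_\ell r_\ell$, hence the bound $C(\sqrt{r_*}+\sum_\ell\sqrt{d_\ell r_\ell})$ after combining with the deterministic Frobenius estimate $\sqrt{r_*}$ for each fixed tuple in the net.

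The main obstacle is making the net argument quantitatively tight: one has to verify that the operator-norm perturbation of a Kronecker product is indeed linear in the individual perturbations (the telescoping above), and that the chosen radius $\varepsilon$ and deviation parameter $t$ jointly produce the uniform failure probability $\exp(-c\underline d)$. Tracking the sub-Gaussian proxy through the composition of a fixed orthonormal projection on one side and a supremum over Stiefel manifolds on the other side is where bookkeeping is most delicate; this is handled by consistently normalising so that $\|M\|_{\mathrm{op}}=1$ at every step and absorbing all dimensional factors into the exponent of the union bound.
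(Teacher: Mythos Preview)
The paper does not give its own proof of this lemma: it is quoted as Lemma~8 of \cite{han2022exact} and invoked as a black box in the appendix. Your proposal supplies the standard argument (sub-Gaussian concentration for fixed orthonormal projections, then an $\varepsilon$-net over Stiefel manifolds) and is essentially correct; this is also what the cited reference does.

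Two small points deserve tightening. First, in the net-to-supremum passage, the phrase ``deterministic Lipschitz bound'' is misleading: the naive estimate $\|E_k(M-M')\|_{\mathrm{op}}\le \|E_k\|_{\mathrm{op}}\|M-M'\|_{\mathrm{op}}$ would introduce the far-too-large factor $\|E_k\|_{\mathrm{op}}\asymp\sqrt{d_{-k}}$. The absorption you correctly invoke works for a different reason: each telescoped summand is again of the form $E_k(\cdots\otimes V\otimes\cdots)$ with $\|V\|_{\mathrm{op}}\le\varepsilon$, and writing $V=A\Sigma B^T$ via SVD (with $A\in\mathbb{O}_{d_\ell,r_\ell}$ and $\|\Sigma B^T\|_{\mathrm{op}}\le 1$) shows the summand is at most $\varepsilon S$, where $S$ is the full supremum. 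This yields $S\le(1-(m-1)\varepsilon)^{-1}\max_{\mathcal N}$, which is the self-bounding step you allude to. Second, for the fifth inequality, the intermediate bound $\|U_k^TE_kM\|_F\le\|E_kM\|_F$ discards the $U_k$-projection and only recovers the fourth-inequality rate $\sqrt{d_kr_{-k}}$; to get $\sqrt{r_*}$ you must retain $U_k$ and apply Hanson--Wright to the full rank-$r_*$ projection $\bigotimes_\ell(U_\ell U_\ell^T)$, exactly as you indicate in the final clause of that paragraph. With these clarifications the proposal is sound.
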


\begin{lem}[Projection bound of perturbation]\label{lem:projec}
Suppose $X, E \in \mathbb{R}^{m\times n}$ and $\text{rank}(X) = r$
. Let $U \in\mathbb{O}_{m,r}$ be the leading r singular vectors of
$Y = X + E$. Then,
\begin{align}
    \SnormSize{}{(I-UU^T)X}&\leq 2\SnormSize{}{E},\\
    \FnormSize{}{(I-UU^T)X}&\leq \min\left(
2\sqrt{r}\SnormSize{}{E},2\FnormSize{}{E}\right).
\end{align}
\end{lem}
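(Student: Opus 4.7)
The plan is to leverage the Eckart--Young--Mirsky theorem: since $U$ consists of the top-$r$ left singular vectors of $Y$, the matrix $UU^{T}Y$ is the best rank-$r$ approximation to $Y$ in both the spectral and Frobenius norms. Because $X$ itself has rank at most $r$, it qualifies as a rank-$r$ competitor, so
\begin{align}
\SnormSize{}{(I-UU^{T})Y} \le \SnormSize{}{Y-X} = \SnormSize{}{E}, \qquad \FnormSize{}{(I-UU^{T})Y} \le \FnormSize{}{Y-X} = \FnormSize{}{E}.
\end{align}
This is the only nontrivial input to the proof.

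Next I would decompose $X=Y-E$ and apply the triangle inequality:
\begin{align}
(I-UU^{T})X = (I-UU^{T})Y - (I-UU^{T})E.
\end{align}
Since $I-UU^{T}$ is an orthogonal projection, its operator norm is at most $1$, so $\SnormSize{}{(I-UU^{T})E}\le \SnormSize{}{E}$ and $\FnormSize{}{(I-UU^{T})E}\le \FnormSize{}{E}$. Combining with the Eckart--Young bound gives the spectral inequality $\SnormSize{}{(I-UU^{T})X}\le 2\SnormSize{}{E}$ and the Frobenius inequality $\FnormSize{}{(I-UU^{T})X}\le 2\FnormSize{}{E}$.

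For the remaining Frobenius bound $2\sqrt{r}\SnormSize{}{E}$, I would use a rank argument: $(I-UU^{T})X$ has rank at most $\mathrm{rank}(X)=r$, so its Frobenius norm and spectral norm are related by $\FnormSize{}{(I-UU^{T})X}\le \sqrt{r}\,\SnormSize{}{(I-UU^{T})X}\le 2\sqrt{r}\SnormSize{}{E}$ by the spectral bound just proved. Taking the minimum of the two Frobenius estimates yields the claim.

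There is no real obstacle here; the one subtlety is making sure to invoke Eckart--Young in the Frobenius norm (Mirsky's theorem) rather than only the classical spectral-norm version, and to note that the projection $I-UU^{T}$ has operator norm at most $1$ so it cannot inflate either norm of $E$. Everything else is triangle inequality and the trivial rank bound.
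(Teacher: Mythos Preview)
Your proposal is correct and follows essentially the same argument as the paper: decompose $(I-UU^{T})X=(I-UU^{T})Y-(I-UU^{T})E$, bound the first term via Eckart--Young--Mirsky using $X$ as a rank-$r$ competitor, bound the second by the projection's unit operator norm, and then use the rank-$r$ bound on $(I-UU^{T})X$ to pass from the spectral inequality to the $2\sqrt{r}\,\SnormSize{}{E}$ Frobenius estimate. The paper writes out the intermediate step $\SnormSize{}{(I-UU^{T})Y}=\sigma_{r+1}(Y)$ before invoking Eckart--Young, but this is cosmetic.
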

\begin{proof}[Proof of Lemma~\ref{lem:projec}]
For matrix norm bound, we have
\begin{align}\label{eq:snorm1}
 \SnormSize{}{(I-UU^T)X}&\leq \SnormSize{}{(I-UU^T)Y}+\SnormSize{}{E}\nonumber\\&\leq \sigma_{r+1}(Y)+\SnormSize{}{E}\nonumber\\&\leq \min_{Z\in\mathbb{R}^{m
 \times n}\colon \text{rank}(Z)\leq r}\SnormSize{}{Y-Z}+\SnormSize{}{E}\nonumber\\
 &\leq \SnormSize{}{Y-X}+\SnormSize{}{E}\nonumber\\&\leq 2\SnormSize{}{E}.
 \end{align}
 Similarly we bound the Frobenius norm
 \begin{align}
     \FnormSize{}{(I-UU^T)X}&\leq \FnormSize{}{(I-UU^T)Y}+\FnormSize{}{E}\\&\leq \sqrt{\sum_{i= r+1}^{m\wedge n}\sigma_{i}^2(Y)}+\FnormSize{}{E}\\&
     \leq \min_{Z\in\mathbb{R}^{m
 \times n}\colon \text{rank}(Z)\leq r}\FnormSize{}{Y-Z}+\FnormSize{}{E}\\
 &\leq \FnormSize{}{Y-X}+\FnormSize{}{E}\\&\leq 2\FnormSize{}{E}.
 \end{align}
 In addition, a direct application of \eqref{eq:snorm1} yields
 \begin{align}
     \FnormSize{}{(I-UU^T)X}\leq 2\sqrt{r}\SnormSize{}{E}.
 \end{align}
\end{proof}

\begin{lem}[Lemma 2 in \cite{han2022exact}]\label{lem:new}
Suppose the first $r$ and the rest $d_1 - r$ singular vectors of $Y \in \mathbb{R	}^{d_1\times d_2}$ are $\tilde U\in\mathbb{O}_{d_1,r}$ and $\tilde U_{\perp}\in\mathbb{O}_{d_1,d_1-r}$, respectively. For some $1\leq r'\leq r$, let $W\in\mathbb{O}_{d_1,r'}$ be any orthonomal matrix and $W_{\perp}\in\mathbb{O}_{d_1,r'}$ be the orthogonal complement of $W$. Given that 
 $\sigma_{r'}(W^TY)>\sigma_{r+1}(Y)$,  we have 
\begin{align}
     \SnormSize{}{\tilde U_{r\perp}^TW} \leq\frac{\sigma_{r'}(W^TY)\SnormSize{}{W_{\perp}^TY\mathbb{P}_{Y^TW}}}{\sigma_{r'}^2(W^TY)-\sigma_{r+1}^2(Y)}.
\end{align}
\end{lem}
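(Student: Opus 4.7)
My plan is to reduce the inequality to a matrix Sylvester equation and then extract the sharp constant through an integral representation. First I would write the full SVD $Y = \tilde U \Sigma_1 \tilde V_1^T + \tilde U_\perp \Sigma_2 \tilde V_2^T$ with $\|\Sigma_2\| = \sigma_{r+1}(Y) =: b$, and decompose $W = \tilde U A + \tilde U_\perp B$, $W_\perp = \tilde U A_\perp + \tilde U_\perp B_\perp$, so that the target quantity becomes $\|B\| = \|\tilde U_\perp^T W\|$. From the spectral identity $YY^T = \tilde U \Sigma_1^2 \tilde U^T + \tilde U_\perp \Sigma_2^2 \tilde U_\perp^T$ one gets $\tilde U_\perp^T Y Y^T W = \Sigma_2^2 B$; inserting $I = WW^T + W_\perp W_\perp^T$ in front of $Y Y^T W$ on the left then yields the Sylvester equation
$\Sigma_2^2 B - B M = B_\perp W_\perp^T Y Y^T W,$
where $M := W^T Y Y^T W \succeq a^2 I$ with $a := \sigma_{r'}(W^T Y)$. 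Using the thin SVD $W^T Y = \bar U \bar \Sigma \bar V^T$, I would set $B' := B \bar U$ and right-multiply by $\bar U$ to arrive at the \emph{diagonal} Sylvester equation
$\Sigma_2^2 B' - B' \bar \Sigma^2 = B_\perp W_\perp^T Y \bar V \bar \Sigma,$
whose entrywise solution is $B'_{ij} = -\frac{\bar \Sigma_{jj}}{\bar \Sigma_{jj}^2 - \Sigma_{2,ii}^2}(B_\perp W_\perp^T Y \bar V)_{ij}$.

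The crucial final step uses the integral identity $\frac{y}{y^2 - x^2} = \int_0^\infty e^{-ty}\cosh(tx)\,dt$, valid for $y > x \geq 0$, to lift the entrywise formula to the matrix-valued representation $B' = -\int_0^\infty \cosh(t\Sigma_2)\,(B_\perp W_\perp^T Y \bar V)\,e^{-t \bar \Sigma}\,dt$. Taking operator norms and bounding $\|\cosh(t\Sigma_2)\| \leq \cosh(tb)$, $\|e^{-t \bar \Sigma}\| \leq e^{-ta}$, together with $\int_0^\infty \cosh(tb) e^{-ta}\,dt = a/(a^2 - b^2)$, yields $\|B\| \leq \frac{a}{a^2 - b^2}\|B_\perp W_\perp^T Y \bar V\|$. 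Since $\|B_\perp\| \leq 1$ and the columns of $\bar V$ span the range of $Y^T W$ (so $W_\perp^T Y \bar V = W_\perp^T Y \,\mathbb{P}_{Y^T W}\bar V$), the remaining factor is bounded by $\|W_\perp^T Y \,\mathbb{P}_{Y^T W}\|$, which closes the inequality. The main obstacle is precisely this last estimate: the generic Sylvester norm bound $\|B\| \leq \|C\|/(a^2 - b^2)$ yields $\sigma_1(W^T Y)\,\|W_\perp^T Y \,\mathbb{P}_{Y^T W}\|$ in the numerator rather than the required $\sigma_{r'}(W^T Y) = a$, so the integral representation, rather than pure submultiplicativity of operator norms, is essential for capturing the sharper factor $a$.
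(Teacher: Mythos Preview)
The paper does not supply its own proof of this lemma; it is quoted verbatim as ``Lemma 2 in \cite{han2022exact}'' and used as a black box in the proof of Lemma~\ref{lem:uu}. So there is nothing in the present paper to compare your route against.

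That said, your argument is correct and self-contained. The Sylvester identity $\Sigma_2^2 B - B M = B_\perp W_\perp^T Y Y^T W$ follows cleanly from the spectral decomposition of $YY^T$ and the resolution $I = WW^T + W_\perp W_\perp^T$, and after diagonalising both sides the integral representation $\tfrac{y}{y^2-x^2} = \int_0^\infty e^{-ty}\cosh(tx)\,dt$ is precisely what is needed to pass from the entrywise solution to an operator-norm bound carrying the \emph{smallest} singular value $a=\sigma_{r'}(W^TY)$ in the numerator rather than the largest; you are right that naive submultiplicativity would only give $\sigma_1(W^TY)$ there. The closing identification $\|W_\perp^T Y \bar V\| = \|W_\perp^T Y\,\mathbb{P}_{Y^TW}\|$ holds because $\bar V$ has orthonormal columns with $\bar V\bar V^T = \mathbb{P}_{Y^TW}$, so right-multiplication by $\bar V^T$ is an isometry on its range. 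One small point worth making explicit: you use $\|B\bar U\| = \|B\|$, which requires $\bar U\in\mathbb{R}^{r'\times r'}$ to be square orthogonal; this follows because $W^TY\in\mathbb{R}^{r'\times d_2}$ has full row rank $r'$ under the hypothesis $\sigma_{r'}(W^TY) > \sigma_{r+1}(Y)\geq 0$.
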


\begin{lem}[Perturbation Bound on Subspaces of Different Dimensions]\label{lem:uu} Consider the signal plus noise model,  \[Y = X + X_{\perp}+ E\in\mathbb{R}^{d_1\times d_2},\]
where $X$ is a signal matrix such that $\text{rank}(X)=r$, $X_{\perp}$ is a perturbation matrix, and $E$ is a noise matrix with i.i.d. standard sub-Gaussian entries. 
Define \begin{align}\label{eq:rcond}r':=\max\{r'\in\{0,1,\ldots,r\}\colon \sigma_{r'}(X)\geq \max(\sqrt{3}(d_1+\sqrt{d_1d_2}),16\SnormSize{}{X_{\perp}})\}.\end{align}
We denote 
\begin{align}
        \tilde U_r =\text{SVD}_r(Y),\quad U_{r'}= \text{SVD}_{r'}(X).
\end{align} Then with probability at
least $1 -\exp(-cd_1 \wedge d_2)$,
\begin{align}
    \SnormSize{}{\tilde U_{r\perp}^TU_{r'}}\leq C\left(\frac{\sqrt{d_1}+\SnormSize{}{X_{\perp}}}{\sigma_{r'}(X)}+\frac{\sqrt{d_1 d_2}+\sqrt{d_1\vee d_2}\SnormSize{}{X_{\perp}}+\SnormSize{}{X_{\perp}}^2}{\sigma_{r'}^2(X)}\right),
\end{align}
where $\tilde U_{r\perp}\in\mathbb{R}^{d_1\times d_1-r}$ is the orthogonal complement matrix of $\tilde U_{r}$.
\end{lem}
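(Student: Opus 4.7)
The plan is to invoke Lemma~\ref{lem:new} with the choice $W = U_{r'}$, the top-$r'$ left singular vectors of $X$. This gives
\begin{equation}
\|\tilde U_{r\perp}^T U_{r'}\| \;\leq\; \frac{\sigma_{r'}(U_{r'}^T Y)\,\|U_{r'\perp}^T Y\,\mathbb{P}_{Y^T U_{r'}}\|}{\sigma_{r'}^2(U_{r'}^T Y) - \sigma_{r+1}^2(Y)},
\end{equation}
so that the task reduces to (i) verifying the applicability condition $\sigma_{r'}(U_{r'}^T Y) > \sigma_{r+1}(Y)$, (ii) lower-bounding the denominator, and (iii) upper-bounding the numerator, each in terms of $\sigma_{r'}(X)$, $\|X_\perp\|$, and high-probability norm bounds on $E$ drawn from Lemma~\ref{lem:subg}.

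For the denominator, writing the top-$r'$ SVD of $X$ as $X = U_{r'}\Sigma_{r'} V_{r'}^T + (\text{tail})$, I have $U_{r'}^T X = \Sigma_{r'} V_{r'}^T$ so that Weyl's inequality gives $\sigma_{r'}(U_{r'}^T Y) \geq \sigma_{r'}(X) - \|X_\perp\| - \|U_{r'}^T E\|$. Likewise, since $\mathrm{rank}(X) = r$, $\sigma_{r+1}(Y) \leq \|X_\perp\| + \|E\|$ by Weyl. Sub-Gaussian concentration yields $\|E\| \lesssim \sqrt{d_1} + \sqrt{d_2}$ and $\|U_{r'}^T E\| \lesssim \sqrt{d_2}$ (since $U_{r'}^T E$ is an $r' \times d_2$ sub-Gaussian matrix with $r' \leq r$). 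Condition \eqref{eq:rcond} then ensures $\sigma_{r'}^2(U_{r'}^T Y) - \sigma_{r+1}^2(Y) \gtrsim \sigma_{r'}^2(X)$, and also $\sigma_{r'}(U_{r'}^T Y) \lesssim \sigma_{r'}(X) + \|X_\perp\| + \sqrt{d_2}$.

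The crux is the numerator factor $\|U_{r'\perp}^T Y\,\mathbb{P}_{Y^T U_{r'}}\|$. Decompose $Y = X + X_\perp + E$. The key algebraic observation is that $U_{r'\perp}^T X V_{r'} = 0$, because $X = \sum_{i=1}^r \sigma_i u_i v_i^T$ and $v_i^T V_{r'} = 0$ for $i > r'$. Thus $U_{r'\perp}^T X\,\mathbb{P}_{V_{r'}} = 0$, and
\begin{equation}
\|U_{r'\perp}^T X\,\mathbb{P}_{Y^T U_{r'}}\| \;\leq\; \|U_{r'\perp}^T X\|\cdot\|\mathbb{P}_{Y^T U_{r'}} - \mathbb{P}_{V_{r'}}\|,
\end{equation}
where the first factor equals $\sigma_{r'+1}(X)$ (bounded by the defining property of $r'$ in \eqref{eq:rcond}, and zero when $r'=r$), and the second factor is controlled by a Wedin sin-$\Theta$ argument applied to $Y^T U_{r'} = V_{r'}\Sigma_{r'} + (X_\perp + E)^T U_{r'}$, yielding a bound $\lesssim (\|X_\perp\| + \|U_{r'}^T E\|)/\sigma_{r'}(X)$. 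The remaining pieces $\|U_{r'\perp}^T X_\perp\,\mathbb{P}_{Y^T U_{r'}}\|$ and $\|U_{r'\perp}^T E\,\mathbb{P}_{Y^T U_{r'}}\|$ are treated by bounding the first by $\|X_\perp\|$ and the second through the representation $\mathbb{P}_{Y^T U_{r'}} = Y^T U_{r'}(U_{r'}^T Y Y^T U_{r'})^{-1}U_{r'}^T Y$.

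The main obstacle is the last term $\|U_{r'\perp}^T E\,\mathbb{P}_{Y^T U_{r'}}\|$, since the projector depends on $E$ itself, ruling out a direct application of rotational invariance. I plan to split it into a deterministic part, bounded by $\|E V_{r'}\| \lesssim \sqrt{d_1}$ via sub-Gaussian concentration over a \emph{fixed} $r'$-dimensional subspace (Lemma~\ref{lem:subg} with $U_2^* = V_{r'}$), and a perturbation part of size $\lesssim (\sqrt{d_1 d_2} + \sqrt{d_1\vee d_2}\,\|X_\perp\|)/\sigma_{r'}(X)$ coming from the difference between $\mathbb{P}_{Y^T U_{r'}}$ and $\mathbb{P}_{V_{r'}}$. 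Dividing the combined numerator by the denominator $\sigma_{r'}^2(X)$ and collecting terms reproduces the two-term bound: the first term $(\sqrt{d_1} + \|X_\perp\|)/\sigma_{r'}(X)$ is the first-order contribution, while $(\sqrt{d_1 d_2} + \sqrt{d_1\vee d_2}\,\|X_\perp\| + \|X_\perp\|^2)/\sigma_{r'}^2(X)$ captures the noise-on-noise and cross interactions.
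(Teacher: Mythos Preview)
Your overall architecture is right and matches the paper: invoke Lemma~\ref{lem:new} with $W=U_{r'}$ and then control the three pieces $\sigma_{r'}(U_{r'}^TY)$, $\sigma_{r+1}(Y)$, and $\|U_{r'\perp}^T Y\,\mathbb{P}_{Y^TU_{r'}}\|$ separately. But two of your elementary bounds are not sharp enough to close the argument in the tall regime $d_2\gg d_1$ that is exactly the one arising from tensor unfoldings.

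\emph{Denominator.} A direct Weyl bound gives $\sigma_{r'}(U_{r'}^TY)\geq\sigma_{r'}(X)-\|X_\perp\|-C\sqrt{d_2}$ and $\sigma_{r+1}(Y)\leq\|X_\perp\|+C(\sqrt{d_1}+\sqrt{d_2})$. Under \eqref{eq:rcond} one only has $\sqrt{d_2}\leq\sigma_{r'}(X)/\sqrt{3d_1}$, so for the constants actually stated there the difference of squares need not be positive. The paper instead imports the sharper concentration bounds of \citet[Lemma~4]{cai2018rate}, which give $\sigma_{r'}^2(U_{r'}^TY_1')\geq\tfrac{2}{3}\sigma_{r'}^2(X)+d_2$ and $\sigma_{r+1}^2(Y')\leq\tfrac{1}{3}\sigma_{r'}^2(X)+d_2$; the $+d_2$ terms \emph{cancel} upon subtraction, yielding $\sigma_{r'}^2(U_{r'}^TY)-\sigma_{r+1}^2(Y)\geq\sigma_{r'}^2(X)/12$ after absorbing $\|X_\perp\|$. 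Your Weyl approach does not produce this cancellation.

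\emph{Numerator.} The more serious gap is your bound on $\|U_{r'\perp}^T E\,\mathbb{P}_{Y^TU_{r'}}\|$. Splitting into $\|EV_{r'}\|+\|E\|\cdot\|\mathbb{P}_{Y^TU_{r'}}-\mathbb{P}_{V_{r'}}\|$ gives a perturbation term of order $(\sqrt{d_1}+\sqrt{d_2})\cdot(\|X_\perp\|+\sqrt{d_2})/\sigma_{r'}(X)$, which contains a stray $d_2/\sigma_{r'}(X)$. After dividing by the denominator $\asymp\sigma_{r'}^2(X)$ this becomes $d_2/\sigma_{r'}^2(X)$, and under \eqref{eq:rcond} one only has $d_2/\sigma_{r'}^2(X)\lesssim 1/d_1$; this is \emph{not} dominated by either $\sqrt{d_1}/\sigma_{r'}(X)$ or $\sqrt{d_1d_2}/\sigma_{r'}^2(X)$ once $d_2\gg d_1^2$. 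The paper avoids this loss by appealing directly to the third Cai--Zhang bound in \eqref{eq:uyp}, which gives $\|U_{r'\perp}^T Y_1\,\mathbb{P}_{Y_1^TU_{r'}}\|\lesssim\sqrt{d_1}$ in one stroke; here $Y_1=X_1+X_\perp+E$ with $X_1=\mathbb{P}_{U_{r'}}X$. The remaining piece $\|U_{r'\perp}^T X_2\,\mathbb{P}_{Y_1^TU_{r'}}\|$ (with $X_2=X-X_1$) is then handled by writing out the projector and using $U_{r'\perp}^TX_2X_1^TU_{r'}=0$ together with $\|X_2\|\leq\sigma_{r'}(X)$. In short, the sharp $\sqrt{d_1}$ control of the noise--projector interaction is a nontrivial input from \cite{cai2018rate} and cannot be recovered from the fixed-subspace-plus-Wedin decomposition you propose.
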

\begin{proof}[Proof of Lemma~\ref{lem:uu}]
Applying Lemma~\ref{lem:new} with $W = U_{r'}$, we have
\begin{align}\label{eq:goal}
     \SnormSize{}{\tilde U_{r\perp}^TU_{r'}} \leq\frac{\sigma_{r'}(U_{r'}^TY)\SnormSize{}{U_{r'\perp}^TY\mathbb{P}_{Y^TU_{r'}}}}{\sigma_{r'}^2(U_{r'}^TY)-\sigma_{r+1}^2(Y)},
\end{align}
Therefore, it suffices  to provide the probabilistic bounds of  $\sigma_{r'}^2(U_{r'}Y)-\sigma_{r+1}^2(Y)$,  $\SnormSize{}{U_{r'\perp}^TY\mathbb{P}_{Y^TU_{r'}}}$, and $\sigma_{r'}(U_{r'}^TY)$.

First we denote
\begin{align*}
&X_1 = \mathbb{P}_{U_{r'}}X,\quad X_2 = X-X_1,\\
&Y = X+ X_{\perp}+ E,\quad Y' = X + E,\\&Y_1 = X_1+X_{\perp}+E,\quad Y_1' = X_1+E.
\end{align*}

We now provide the bounds of $\sigma_{r'}^2(U_{r'}Y)-\sigma_{r+1}^2(Y)$ and $\SnormSize{}{U_{r'\perp}^TY\mathbb{P}_{Y^TU_{r'}}}$, and $\sigma_{r'}(U_{r'}^TY)$.

  \paragraph*{Bound of the term $\sigma_{r'}^2(U_{r'}Y)-\sigma_{r+1}^2(Y)$} By \cite[Lemma 4]{cai2018rate}, for all $x>0$, we have 
    \begin{align}
        &\mathbb{P}\left(\sigma^2_{r'}(U_{r'}^TY_1')\leq(\sigma_{r'}^2(X_1)+d_2)(1-x)\right) \leq C\exp(Cr-c(\sigma_{r'}(X_1)+d_2)x^2\wedge x),\\
        &\mathbb{P}\left(\sigma^2_{r+1}(Y')\geq d_2(1+x)\right) \leq C\exp(Cd_1-cd_2x^2\wedge x),\\
        &\mathbb{P}\left(\SnormSize{}{U_{r'\perp}^TY_1\mathbb{P}_{Y_1^TU_{r'}}}\geq x\right)\leq C\exp\left(C\exp(Cd_1-cx^2\wedge x\sqrt{\sigma_{r'}^2(X+X_{\perp})+d_2}\right).
    \end{align}
        By setting $x$ as $\frac{\sigma_{r'}^2(X)}{3(\sigma^2_{r'}(X)+d_2)}$, $\frac{\sigma_{r'}^2(X)}{3d_2}$, and $C\left(\sqrt{d_1}+\frac{d_1}{\sqrt{\sigma_{r'}^2(X+X_{\perp})+d_2}}\right)$ respectively, we obtain
    \begin{align}\label{eq:uyp}
& \sigma_{r'}(U_{r'}^TY_1')\geq \sqrt{\frac{2\sigma_{r'}^2(X)}{3}+d_2},\quad \sigma_{r+1}(Y')\leq \sqrt{\frac{\sigma_{r'}^2(X)}{3}+d_2},\nonumber\\&\text{ and }  \SnormSize{}{U_{r'\perp}^TY_1\mathbb{P}_{Y_1^TU_{r'}}}\leq C\left(\sqrt{d_1}+\frac{d_1}{\sqrt{\sigma_{r'}^2(X+X_{\perp})+d_2}}\right),
    \end{align}
    with probability at least 1-$C\exp(-cd_1\wedge d_2)$.
    Since $Y = Y'+ X_{\perp}$ and $Y_1 = Y'_1+X_{\perp}$,  applying Weyl's inequality yields
    \begin{align}\label{eq:sigma}
        &\sigma_{r'}(U_{r'}^TY) = \sigma_{r'}(U_{r'}^TY_1)\geq \sqrt{\frac{2\sigma_{r'}^2(X)}{3}+d_2}-\SnormSize{}{X_{\perp}},\\&\sigma_{r+1}(Y)\leq \sqrt{\frac{\sigma_{r'}^2(X)}{3}+d_2}+\SnormSize{}{X_{\perp}}.
    \end{align}
    Therefore, we obtain the following inequality from \eqref{eq:sigma},
    \begin{align}\label{eq:s1}
         \sigma_{r'}^2(U_{r'}^TY)-\sigma^2_{r+1}(Y)&\geq \frac{\sigma_{r'}^2(X)}{3}-4\SnormSize{}{X_{\perp}}\sqrt{\frac{2\sigma_{r'}^2(X)}{3}+d_2}\nonumber\\
         &\geq \frac{\sigma_{r'}^2(X)}{3}-\frac{1}{4}\sigma_{r'}(X)\sqrt{\frac{2\sigma_{r'}^2(X)}{3}+d_2}\nonumber\\
         &\geq \frac{\sigma_{r'}^2(X)}{12},
    \end{align}
    where the second inequality uses  $\sigma_{r'}(X)\geq 16\SnormSize{}{X_{\perp}}$ while the last inequality uses  $\sigma_{r'}(X)\geq \sqrt{3d_2}$ by the definition of  $r'$ in  \eqref{eq:rcond}.    
   
  \paragraph*{Bound of $\SnormSize{}{U_{r'\perp}^TY\mathbb{P}_{Y^TU_{r'}}}$} 
  Since $\sigma_{r'}(X+X_{\perp})\geq \sigma_{r'}(X) - \SnormSize{}{X_\perp}\geq \frac{15}{16} \sigma_{r'}(X)$, we have    
\begin{align}
 \SnormSize{}{U_{r'\perp}^TY_1\mathbb{P}_{Y_1^TU_{r'}}}\leq C\left(\sqrt{d_1}+\frac{d_1}{\sqrt{\sigma_{r'}^2(X+X_{\perp})+d_2}}\right)\leq C\sqrt{d_1},
\end{align}
where we use the definition of  $r'$ and $C$ absorbs all constant factors.  
Notice that 
\begin{align}
\SnormSize{}{U_{r'\perp}^TY\mathbb{P}_{Y^TU_{r'}}}&= \SnormSize{}{U_{r'\perp}^TY\mathbb{P}_{Y_1^TU_{r'}}}\\&=\SnormSize{}{U_{r'\perp}^T(Y_1+X_2)\mathbb{P}_{Y_1^TU_{r'}}}\\&\leq \SnormSize{}{U_{r'\perp}^TY_1\mathbb{P}_{Y_1^TU_{r'}}}+\SnormSize{}{U_{r'\perp}^TX_2\mathbb{P}_{Y_1^TU_{r'}}}\\&\leq C\sqrt{d_1}+\SnormSize{}{U_{r'\perp}^TX_2\mathbb{P}_{Y_1^TU_{r'}}}.
\end{align}
Now we focus on bounding $\SnormSize{}{U_{r'\perp}^TX_2\mathbb{P}_{Y_1^TU_{r'}}}$. We have
\begin{align}\label{eq:uxp}
\SnormSize{}{U_{r'\perp}^TX_2\mathbb{P}_{Y_1^TU_{r'}}}&\leq \frac{\SnormSize{}{U_{r'\perp}^TX_2(Y_1^TU_{r'})}}{\sigma_{r'}(Y_1^TU_{r'})}\nonumber\\&\leq \frac{\SnormSize{}{U_{r'\perp}^TX_2X_{\perp}^TU_{r'}}+\SnormSize{}{U_{r'\perp}^TX_2E^TU_{r'}}}{\sigma_{r'}(Y_1^TU_{r'})}\nonumber\\&\leq C \frac{\SnormSize{}{U_{r'\perp}^TX_2X_{\perp}^TU_{r'}}+\SnormSize{}{U_{r'\perp}^TX_2E^TU_{r'}}}{\sigma_{r'}(X)},
\end{align}
where we use $\sigma_{r'}(Y_1^TU_{r'})\geq \frac{15}{16}\sigma_{r'}(X)$ from combining \eqref{eq:sigma} and definition of $r'$ in  \eqref{eq:rcond}. 
 In addition,  by Lemma~\ref{lem:subg}, we have the following with probability $1-C\exp(-cd_1)$, 
 \begin{align*}
 \SnormSize{}{U_{r'\perp}^TX_2E^TU_{r'}}\leq C\SnormSize{}{X_2}\sqrt{d_1}.
 \end{align*}
Thus, we have the bound  of  $\SnormSize{}{U_{r'\perp}^TX_2\mathbb{P}_{Y_1^TU_{r'}}}$ from \eqref{eq:uxp} as 
 \begin{align}
 \SnormSize{}{U_{r'\perp}^TX_2\mathbb{P}_{Y_1^TU_{r'}}}&\leq C\frac{\SnormSize{}{X_2}(\
\SnormSize{}{X_\perp}+\sqrt{d_1})}{\sigma_{r'}(X)}\leq C(\
\SnormSize{}{X_\perp}+\sqrt{d_1}),
 \end{align}
 where the last inequality uses $\SnormSize{}{X_2}\leq \sigma_{r'}(X)$.
 
 Finally, we obtain the bound of $\SnormSize{}{U_{r'\perp}^TY\mathbb{P}_{Y^TU_{r'}}}$ as 
 \begin{align}\label{eq:s2}
 \SnormSize{}{U_{r'\perp}^TY\mathbb{P}_{Y^TU_{r'}}}\leq C(\SnormSize{}{X_\perp}+\sqrt{d_1})
 \end{align}
  
\paragraph*{Bound of $\sigma_{r'}(U_{r'}^TY)$} 
We  obtain the upper bound by Weyl's inequality,
    \begin{align}\label{eq:s3}
  \sigma_{r'}(U_{r'}^TY)&\leq \sigma_{r'}(X)+\SnormSize{}{U_{r'}^TX_{\perp}}+\SnormSize{}{U_{r'}^TE}\\&\leq \sigma_{r'}(X)+\SnormSize{}{X_{\perp}}+\sqrt{d_2},
    \end{align}
    where the last inequality holds with probability at least $1-C\exp(-cd_2)$   by Lemma~\ref{lem:subg}.
    
Finally, plugging inequalities \eqref{eq:s1},\eqref{eq:s2}, and \eqref{eq:s3} into \eqref{eq:goal} yeilds,
\begin{align}
     \SnormSize{}{\tilde U_{r\perp}^TU_{r'}}\leq C\left(\frac{\sqrt{d_1}+\SnormSize{}{X_{\perp}}}{\sigma_{r'}(X)}+\frac{\sqrt{d_1 d_2}+\sqrt{d_1\vee d_2}\SnormSize{}{X_{\perp}}+\SnormSize{}{X_{\perp}}^2}{\sigma_{r'}^2(X)}\right).
\end{align}
\end{proof}

\section{Additional explanation of crop production analysis}\label{sec:cropappd}
We perform clustering analyses based on the Tucker representation of the estimated signal tensor $\hat\Theta$. The procedure is motivated from the higher-order extension of Principal Component Analysis (PCA). Recall that, in the matrix case, we perform clustering on a matrix $\mX\in\mathbb{R}^{m\times n}$ based on the following procedure. First, we factorize $X$ into
\begin{equation}
\mX = \mU\mSigma \mV^T,
\end{equation}
where $\mSigma$ is a diagonal matrix and $\mU,\mV$ are factor matrices with orthogonal columns. Second, we take each column of $\mV$ as a principal axis and each row in $\mU\mSigma$ as principal component. A subsequent multivariate clustering method (such as $K$-means) is then applied to the $m$ rows of $\mU\mSigma$.

We apply a similar clustering procedure to the estimated signal tensor $\hat\Theta$. We factorize $\hat \Theta$ based on Tucker decomposition.
\begin{equation}\label{eq:Tuckerest}
\hat \Theta = \hat \tC\times_1\hat \mU_1\times_2\cdots\times_m\hat \mU_m,
\end{equation}
where $\hat \tC\in\mathbb{R}^{r_1\times \cdots \times r_m}$ is the estimated core tensor, $\hat \mU_k\in\mathbb{R}^{d_k\times r_k}$ are estimated factor matrices with orthogonal columns.  The mode-$k$ unfolding of~\eqref{eq:Tuckerest} gives
\begin{equation}
\textup{Unfold}_k(\hat \Theta) = \hat \mU_k\textup{Unfold}_k(\hat \tC)\left(\hat \mU_m\otimes\cdots\otimes\hat \mU_1\right).
\end{equation}
 We conduct clustering on this  mode-$k$ unfolded signal tensor. We take columns in $\left(\hat \mU_m\otimes\cdots\otimes\hat \mU_1\right)$ as principal axes and rows in $\hat \mU_k\textup{Unfold}_k(\hat \tC)$ as principal components. Finally, we apply $K$-means clustering method to the $d_k$ rows of the matrix $\hat \mU_k\textup{Unfold}_k(\hat \tC)$.
We pick the number of clusters based on the elbow method. Figure~\ref{fig:elbow}  suggests six and five clusters on country and crop respectively.
\begin{figure}[H]
     \centering
     \begin{subfigure}[b]{0.45\textwidth}
         \centering
         \includegraphics[width=\textwidth]{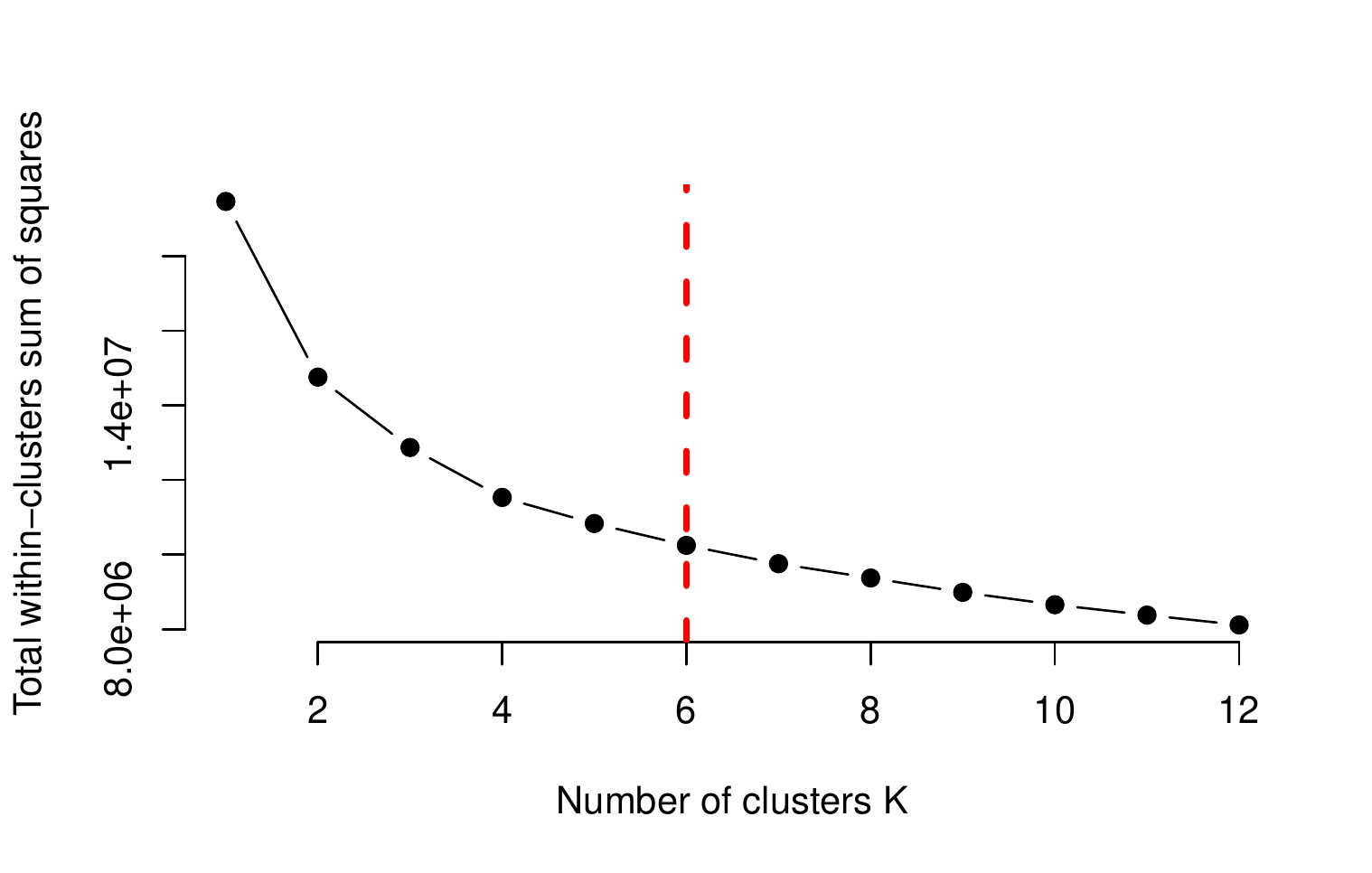}
         \caption{The elbow plot for country clustering}
     \end{subfigure}
     \hfill
     \begin{subfigure}[b]{0.45\textwidth}
         \centering
         \includegraphics[width=\textwidth]{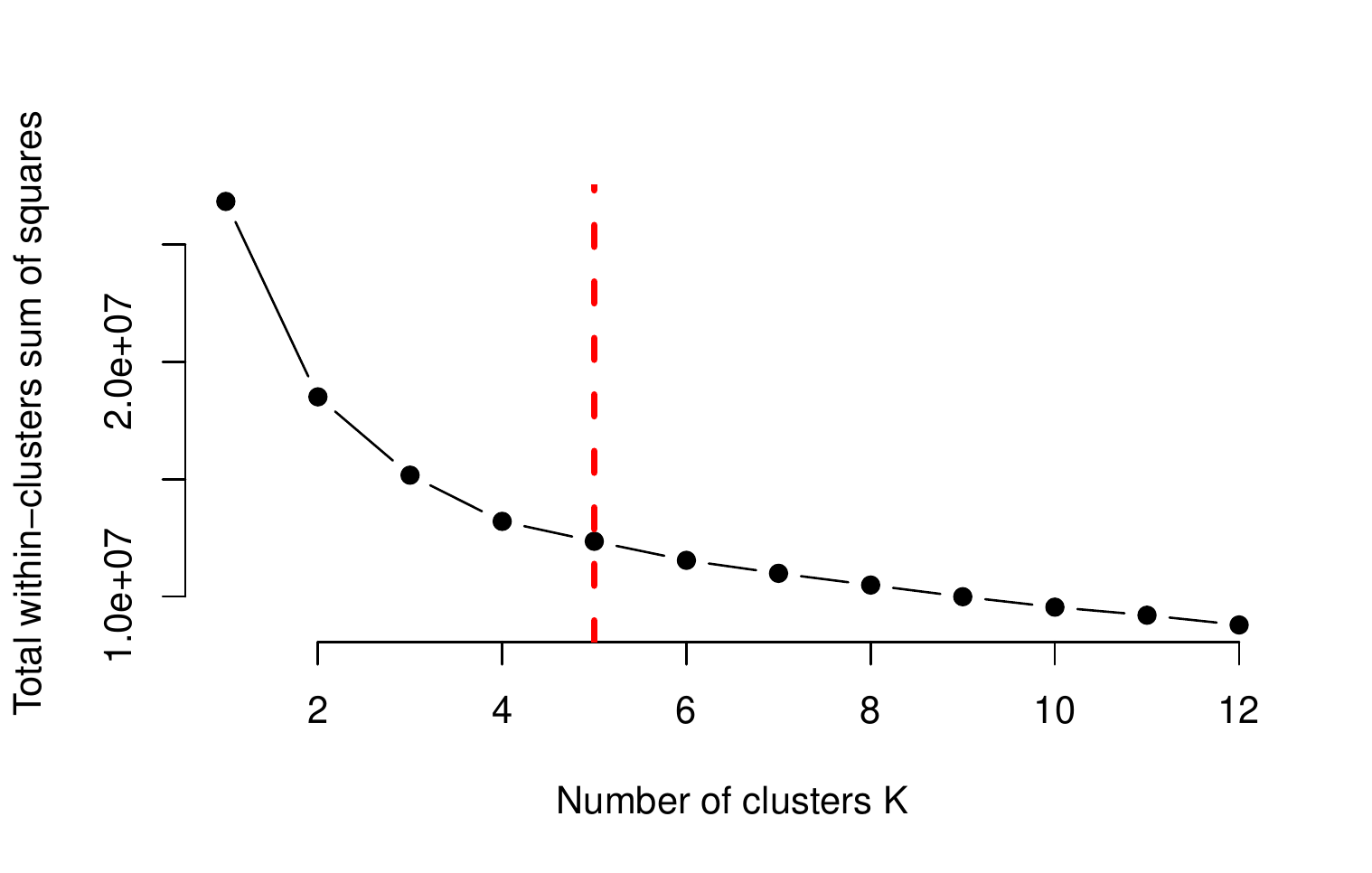}
         \caption{The elbow plot for crop clustering}
         
     \end{subfigure}
        \caption{Elbow plots for determining the number of clusters in $K$-means.}
        \label{fig:elbow}
\end{figure}
Clustering on countries are investigated in the main paper. Here, we provide the clustering results on crops.  Table~\ref{tab:croptb} summarizes the five clusters of crop items. We find that five clusters  captures the similar type of crops. For example, Cluster 3 represents berries and leafy plants whereas Cluster 4 consists of crops mainly produced in Asia region.
 
\begin{table}[H]
    \centering
    \caption{Five clusters of 161 crops  based on the estimated signal tensors in crop production data application}
    \includegraphics[width = \textwidth]{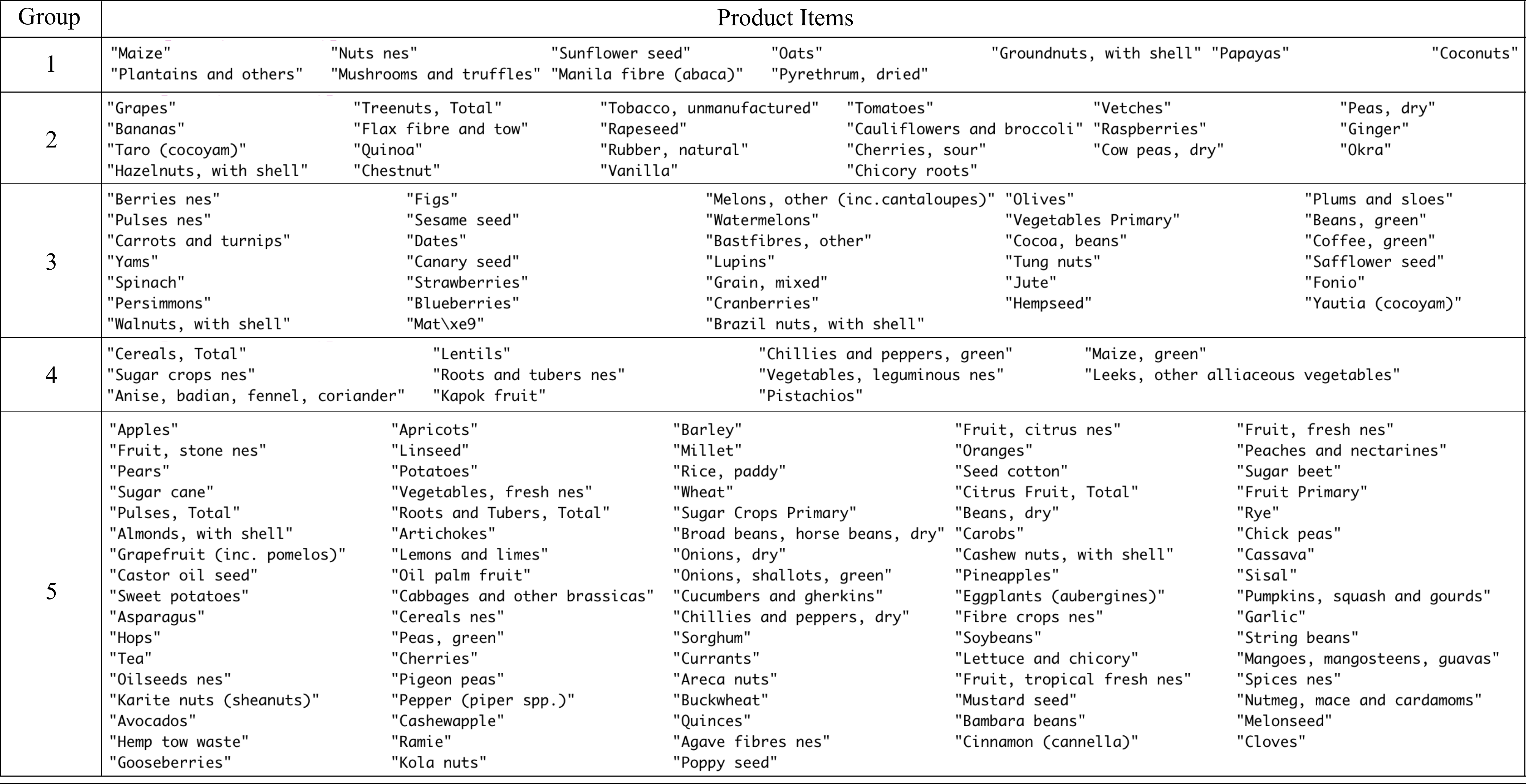}
    \label{tab:croptb}
\end{table}

\end{document}